\title{Prior-Independent Bidding Strategies for First-Price Auctions}
\author{
\sf Rachitesh Kumar\\
\sf Columbia University\\
\texttt{rk3068@columbia.edu}
\and
\sf Omar Mouchtaki\\
\sf New York University\\
\texttt{om2166@stern.nyu.edu }
}
\date{\today}
\begin{document}

\maketitle

\begin{abstract}
    First-price auctions are one of the most popular mechanisms for selling goods and services, with applications ranging from display advertising to timber sales. Unlike their close cousin, the second-price auction, first-price auctions do not admit a dominant strategy. Instead, each buyer must design a bidding strategy that maps values to bids---a task that is often challenging due to the lack of prior knowledge about competing bids. To address this challenge, we conduct a principled analysis of prior-independent bidding strategies for first-price auctions using worst-case regret as the performance measure. First, we develop a technique to evaluate the worst-case regret for (almost) any given value distribution and bidding strategy, reducing the complex task of ascertaining the worst-case competing-bid distribution  to a simple line search. Next, building on our evaluation technique, we minimize worst-case regret and characterize a minimax-optimal bidding strategy for \emph{every value distribution}. We achieve it by explicitly constructing a bidding strategy as a solution to an ordinary differential equation, and by proving its optimality for the intricate infinite-dimensional minimax problem underlying worst-case regret minimization. Our construction provides a systematic and computationally-tractable procedure for deriving minimax-optimal bidding strategies. When the value distribution is continuous, it yields a deterministic strategy that maps each value to a single bid. We also show that our minimax strategy significantly outperforms the uniform-bid-shading strategies advanced by prior work.
    Importantly, our result allows us to precisely quantify, through minimax regret, the performance loss due to a lack of knowledge about competing bids. We leverage this to analyze the impact of the value distribution on the performance loss, and find that it decreases as the buyer's values become more dispersed.
\end{abstract}

\section{Introduction}\label{sec:intro}
First-price auctions are a dominant mechanism across various markets, playing a central role in how goods and services are allocated. This time-tested mechanism has gained renewed interest following its recent adoption by the display advertising industry, replacing second-price auction. Unlike second-price auctions, where truthfully bidding one's value is a dominant strategy, i.e., it is optimal regardless of the competitor’s actions, first-price auctions require bidders to engage in far more complex strategic considerations. Thus, this change has brought to the forefront fundamental questions about how buyers should navigate these strategic intricacies to maximize their utility. Our work takes the perspective of such a buyer, and develops a principled approach to answering these questions.

The primary challenge a buyer faces in first-price auctions is that the outcome of her bid depends heavily on the competing bids she will encounter. Thus, a buyer must anticipate how others will bid in order to determine her own bid, often without knowing the competitors' values or strategies. Traditional game-theoretic approaches to the design of bidding strategies rely on strong assumptions like common knowledge and equilibrium behavior, which fail to hold in real-life settings~\citep{kasberger2023robust}. This leaves open the practical problem of coming up with good strategies despite only having limited knowledge about the competition. To address this problem, we take a prior-independent approach to the design of bidding strategies for first-price auctions. Rather than making assumptions about the behavior of competing buyers, we seek to develop a principled methodology for evaluating, comparing, and designing bidding strategies that perform well across all possibilities. Our approach is inspired by distributional robust paradigms, where the objective is to ensure strong bidding performance under minimal distributional assumptions. By doing so, we aim to provide a framework that is both theoretically sound and practically relevant for buyers who must make bidding decisions in complex and uncertain market conditions.

\subsection{Main Contributions}

We consider a buyer participating in a first-price auction who wishes to design a bidding strategy---a mapping from her private value to a (potentially random) bid---that maximizes her expected utility. 
We adopt a prior-independent approach which does not make any assumptions on the competing bids, and instead aims to optimize performance uniformly against all possible competing bids. To jointly evaluate the performance across all possibilities, we use the standard metric of worst-case regret \citep{savage1951theory}. In particular, for any distribution of the highest competing bid, we define regret to be the difference between the expected utility achievable by the oracle who knows this distribution and the one generated by the strategy of interest. We focus on the highest competing bid as it is a sufficient statistic that completely determines the utility generated by any strategy. The worst-case regret of a strategy then is the maximum regret it incurs across all possible highest-competing-bid distributions. It captures the loss incurred by the strategy due to a lack of knowledge about competing bids, which is a core concern in first-price auctions. A small worst-case regret implies that the strategy does not incur a large loss due this lack of information no matter how the competing bids are determined, thereby circumventing the challenging task of accurately understanding and predicting the behavior of competitors.

\textbf{Performance evaluation.} Before minimizing worst-case regret, one must tackle the task of evaluating it for a given bidding strategy. The infinite-dimensional nature of bidding strategies and highest-competing-bid distributions makes this task challenging. These difficulties are further exacerbated by the fact that our benchmark---utility achievable by the oracle who knows the highest-competing-bid distributions---is itself the value of an infinite-dimensional optimization problem, making even the task of characterizing regret for a fixed highest-competing-bid distribution difficult. In \Cref{thm:evaluation}, we show that the problem of evaluating worst-case regret can be considerably simplified for a wide class of strategies and value distributions: the dimension of the underlying optimization problem can be reduced from infinity to one. Firstly, from a computational perspective, it reduces an a priori intractable problem to a remarkably simple one that can be solved with a line search. Secondly, this result has an insightful economic interpretation: we show that the worst-case highest-competing-bid distribution is always a deterministic one. In other words, when designing strategies with small worst-case regret, one can focus on deterministic highest competing bids and ignore their potential for random variation. We leverage \Cref{thm:evaluation} to compare different uniform-bid-shading strategies, an important class of practical bidding strategies that bid $\alpha \cdot v$ when the buyer's value is $v$, for some fixed $\alpha \in [0,1]$. In particular, we show that the uniform-bid-shading strategy with $\alpha = 0.5$, which emerges in \citet{kasberger2023robust} when the buyer robustly optimizes the worst-case regret for each value in isolation, can be considerably improved upon by accounting for the distribution of values and choosing the optimal $\alpha$ for it. For instance, when the value distribution is uniform, we show that the best choice of $\alpha$ is $0.38$, and this choice yields a worst-case regret that is $20\%$ lower than the common choice of $0.5$.

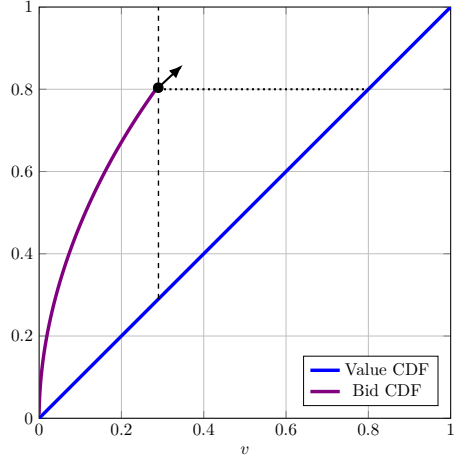
\begin{figure}[h!]
    \centering
    \begin{tikzpicture}[scale = 0.65]
            \begin{axis}[
        width=10cm,
        height=10cm,
        xmin=0,xmax=1,
        ymin=0,ymax=1,
        table/col sep=comma,
        xlabel={$v$},
        ylabel={},
        grid=both,
        legend pos=south east,
    ]

    \addplot [blue, line width = 0.7mm] {x}; 
    \addlegendentry{Value CDF}

    \addplot [violet,  line width=0.7mm, select coords between index={1}{79}] table[x={Qs_alpha=1.0},y={x_alpha=1.0}] {Data/plot_bidding_strategy_beta.csv};
    \addlegendentry{Bid CDF}
    
    \addplot [Circle-Latex, very thick,  black] coordinates { (0.28,0.795) (0.35,0.86)};

    \draw[dashed, thick, black] (axis cs:0.29,0.29) -- (axis cs:0.29,1);
    \draw[dotted, very thick, black] (axis cs:0.29,0.8) -- (axis cs:0.8,0.8);

    \end{axis}
    \end{tikzpicture}
    \caption{\textbf{ODE which constructs a minimax-optimal bidding strategy for the uniform value distribution.} If one starts at (0,0) and moves with a slope equal to the ratio of the dashed line to dotten line, then the resulting curve will trace out the CDF of bids under a minimax-optimal bidding strategy. The strategy is then to simply bid the corresponding quantile for every value, i.e., bid $b$ for value $v$ if and only if the quantiles of $b$ and $v$ are equal under the bid and value CDFs respectively.}
    \label{fig:intro_ode}
\end{figure}

\textbf{Minimax-optimal bidding strategy.} Having developed an efficient method for evaluating worst-case regret, we next turn towards optimizing it. We provide a complete characterization of minimax-optimal regret, which is the smallest-possible worst-case regret achievable by any bidding strategy, and do so for \emph{every} value distribution. Our characterization takes the form of an explicit construction of a saddle point for the underlying minimax optimization problem using ordinary differential equations (ODEs). On the technical front, this requires multiple advances. First, we leverage our result on performance evaluation to simplify the benchmark from the optimal utility achievable with knowledge of the highest-competing-bid distribution to that achievable with knowledge of its realization. Moreover, we simplify the space of bidding strategies via a reformulation that assigns a deterministic bid to each quantile of the value distribution instead of a random bid to each value. Then, we consider the first-order optimality conditions of this reformulation, and analyze the resulting ordinary differential equations. Our primary technical contributions pertain to the analysis of these ODEs and addressing the associated challenges: (i)~the worst-case-regret minimization problem is parameterized by the value distribution of the buyer and so are the ODEs; (ii)~discontinuities in the value distribution manifest as discontinuities in the ODEs; (iii)~the ODEs have ill-conditioned denominators prone to divergence and they are not even well-defined everywhere. To establish our main result (\Cref{thm:main-result}), we navigate these hurdles to characterize the saddle point as a solution to these ODEs. It yields an efficient procedure for constructing minimax-optimal bidding strategies for arbitrary value distributions; \Cref{fig:intro_ode} illustrates it for the uniform value distribution. Once the minimax-optimal strategy has been constructed, we show that the corresponding value of optimal regret is given by a simple integral. Altogether, our characterization provides an efficient technique for solving the intricate infinite-dimensional minimax optimization problem that arises in the prior-independent setting.

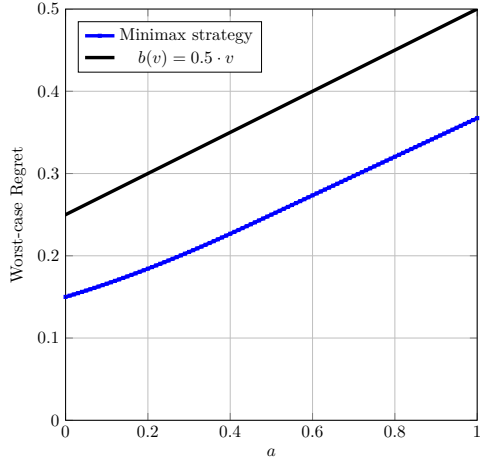
\begin{figure}[h!]
    \centering
\begin{tikzpicture}[scale = 0.65]
    \begin{axis}[
        width=10cm,
        height=10cm,
        xmin=0,xmax=1,
        ymin=0,ymax=0.5,
        table/col sep=comma,
        xlabel={$a$},
        ylabel={Worst-case Regret},
        grid=both,
        legend pos=north west,
    ]

    \addplot [blue,  line width=0.7mm, ,mark=square,mark options={scale=.1}] table[x=a,y={regret}] {Data/regrets_uniform_a.csv};
    \addlegendentry{Minimax strategy}
    \addplot [black, line width = 0.7mm] {0.25*x + 0.25};  
   \addlegendentry{$b(v) = 0.5 \cdot v$}
      \addlegendimage{ultra thick,black}
    \end{axis}
    \end{tikzpicture}
    \caption{\textbf{Summary of our insights.} For every $a$, the blue curve represents the largest worst-case regret incurred by our minimax-optimal bidding strategy across all value distributions with a density bounded above $\frac{1}{1-a}$. The black curve corresponds to the worst-case regret of the bidding strategy from \citet{kasberger2023robust} which bids $0.5 \cdot v$ for every value $v$, when the value distribution is a uniform on $[a,1]$.}
    \label{fig:intro_impact_value}
\end{figure}

\textbf{Structural Insights.} The procedure we develop for constructing minimax-optimal bidding strategies delivers benefits beyond computational tractability. It allows us to glean structural insights about the drivers of performance loss, as measured by regret. First, when the value distribution is continuous, our construction yields a \emph{deterministic} minimax-optimal bidding strategy. In contrast, deterministic strategies are sub-optimal when the value is deterministic. In this case, the problem is equivalent to robust pricing \citep{bergemann2011robust}, inheriting the associated sub-optimality of deterministic strategies. Since any distribution can be perturbed ever so slightly to arrive at a continuous one, our result implies the optimality of deterministic strategies for a large class of value distributions. In particular, even though deterministic strategies are sub-optimal when the buyer's value is deterministic and known with certainty, they immediately jump to optimality in the presence of even an infinitesimal amount of (continuous) random noise in the value estimate. Furthermore, we characterize the impact of the value distribution on minimax regret. Specifically, we show in \Cref{thm:worst-value-dist} that the uniform distribution on $[1 - \tfrac{1}{\rho}, 1]$ yields the highest minimax-optimal regret among all value distributions with a density bounded above by $\rho \geq 1$. This allows us to evaluate minimax regret as a function of how concentrated the values are: we find that a greater dispersion in values leads to lower minimax-optimal regret; see \Cref{fig:intro_impact_value}. Taken together, these insights indicate that even a small amount of variation in values is often sufficient to render deterministic strategies optimal, and the performance of the optimal policy improves with the amount of variation. Hence, the presence of random private information in the form of values, which is known to the buyer but not the competition, obviates the need to hedge bids with randomization. Moreover, the optimal achievable performance improves with the magnitude of randomness.

\subsection{Related Work}

Starting from the seminal work of \citet{vickrey1961counterspeculation}, first-price auctions have received significant attention in the literature. Most of this work has focused on the equilibrium analysis of multi-buyer interactions. In contrast, our focus is on developing bidding strategies for an individual buyer which are robust to the behavior of the competition, without regard for how the competition arrives at that behavior. Therefore, we do not review the vast literature on the traditional equilibrium analysis and refer the reader to standard texts~\citep{krishna2009auction, milgrom2004putting}.

The closest works to ours are the recent ones of \citet{kasberger2023robust} and \citet{qu2024double}. \citet{kasberger2023robust} study robust bidding in first-price auctions, with the goal of providing practical guidance for real-life auctions. They provide comprehensive evidence on the need to go beyond traditional analyses in the form of surveys, laboratory data and empirical analysis. On the technical front, they consider a buyer with a fixed value who is uncertain about the bids of others, and construct deterministic bids which achieve low worst-case regret with respect to the uncertainty in competing bids. Beyond modeling uncertainty directly in the highest competing bid, they also consider models with higher order information about how the competing bids are generated. We focus on uncertainty in the highest competing bids and extend their results along two dimensions: (i) we allow for randomized bidding strategies; (ii) we allow the value to be random, measure regret in expectation over this randomness, and characterize minimax-optimal strategy for every value distribution. Both extensions create significant challenges by replacing finite-dimensional optimizations with infinite-dimensional ones. \citet{qu2024double} adopts a distributionally-robust approach. They optimize a single bid to maximize worst-case expected utility over value and highest-competing-bid distributions lying within a Kullback-Leibler ball around empirical estimates. In contrast, we assume that buyer knows her own value and can alter the bid based on the value, which results in an action space consisting of bidding strategies instead of a single bid. Moreover, we do not restrict the highest-competing-bid to lie some known neighborhood, and we use regret as our metric, which, unlike absolute utility, is not linear in the highest-competing-bid distribution.

Another line of work develops learning algorithms for a buyer participating in repeated first-price auctions, either in stochastic settings \citep{han2020optimal,balseiro2022contextual,badanidiyuru2023learning,schneider2024optimal} or adversarial ones \citep{han2020learning,zhang2022leveraging,kumar2024strategically}. In these works, the benchmark for regret is the optimal fixed strategy in hindsight. Although we do not explicitly model repeated auctions or learning dynamics, our results imply a regret guarantee against the optimal \emph{sequence of strategies} in hindsight without making any assumptions on the environment. Our minimax-strategy serves as a natural choice for settings with a high degree of uncertainty and churn that make learning impossible. It can also be used to warm-start learning algorithms to improve performance early on. Finally, our work contributes to the literature on decision-making under uncertainty via distributionally-robust regret. This approach provides structural insights into robust decision-making when faced with limited information, and has been applied in various contexts, such as robust pricing \citep{bergemann2011robust}, inventory management \citep{perakis2008regret}, auction design \citep{anunrojwong2022robustness, anunrojwong2023robust}, and bidding \citep{kasberger2023robust}.

\section{Model}\label{sec:model}

\paragraph{Notation.} We associate the Borel sigma algebra with the set $[0,1]$, and denote it by $\borel$. We use $\lambda(\cdot)$ to represent the Lebesgue measure.  For any set $A$, $\Delta \left( A \right)$ denotes the set of probability measures on $A$. For every $a \in A$, we denote by $\delta_a$ the Dirac distribution  which puts all mass at $a$. Given a space of probability measures $\Delta(A)$, we refer to the topology induced by the weak convergence on that space as the ``weak topology''. Unless otherwise specified, product spaces are endowed with the product sigma algebra and the product topology. For a cumulative distribution function (CDF) $F:[0,1] \to [0,1]$, we use $F^-$ to denote its generalized inverse, i.e., $F^-(t) \coloneqq \inf\{x \mid F(x) \geq t\}$. For $A,B \subset \mathbb{R}$, we say that $A \preccurlyeq B$ if and only if $\sup A \leq \inf B$. We use $\{f(X) \mid X \sim \mathcal D\}$ to denote the distribution of $f(X)$ when $X \sim \mathcal D$.

Consider a buyer participating in a sealed-bid first-price auction for a single indivisible good. Let $v$ denote the value the buyer derives from winning the good. We assume that $v \in [0,1]$ and it is distributed according to the distribution $F \in \Delta([0,1])$; we use $F$ to denote both the CDF and the measure it defines. Given a value ${v \sim F}$, the buyer selects a (potentially random) bid $b \sim s(v)$, where ${s: [0,1] \longrightarrow \Delta([0,1])}$ is the buyer's bidding strategy that maps value $v$ to the bid distribution $s(v)$. For the sake of completeness and rigor, we only consider bidding strategies for which there exists a Markov kernel\footnote{$\kappa: \borel \times [0,1] \to [0,1]$ is called a \emph{Markov kernel} if (i) for every fixed $B \in \borel$, $v \mapsto \kappa(B,v)$ is measurable; (ii) for every fixed $v \in [0,1]$, $\kappa(\cdot, v)$ is a probability measure on $([0,1], \borel)$.} ${\kappa_s:\borel \times [0,1] \to [0,1]}$ such that $\Prob_{b \sim s(v)}(B) \coloneqq \kappa_s(B, v)$ for all $B \in \borel$; let $\Scal$ denote the set of all such bidding strategies.
We use $J_{s,F}$ to denote the joint distribution of value-bid pairs $(v,b)$ under the strategy $s$, i.e., ${J_{s,F}(A \times B) = \E_{v \sim F}[\kappa_s(B,v) \cdot \mathbbm{1}(v \in A)]}$ for all $A,B \in \borel$. Additionally, we define $P_{s,F} \in \Delta([0,1])$ to be the bid distribution induced by the bidding strategy~$s$ and the value distribution $F$, i.e., $P_{s,F}(B) = \E_{v\sim F}[\kappa_s(B, v)]$ for all $B \in \borel$.

Simultaneously, the competitors submit their own bids $\{b_i\}_i \subset [0,1]$. We use $h \coloneqq \max_i b_i$ to denote the \emph{highest competing bid}, and $H \in \Delta([0,1])$ to denote its distribution (represented by its cumulative distribution function). We assume that $H$ is independent of the value-bid joint distribution $J_{s,F}$. In sealed-bid auctions, the independence follows directly from the independent-private-values assumption that prevails in much of the work on first-price auctions (e.g., see \citealt{krishna2009auction, milgrom2004putting, balseiro2023contextual, feng2021convergence}). In practice, it holds in scenarios where correlation in values is caused by a publicly-observable context, which yields independence upon conditioning, i.e., the values are independent for any fixed context. For example, in online advertising, buyers' values depend on  user-specific features, which are communicated to the buyers (or their autobidders) before bids are solicited, and act as the context for the auction. As buyers can specify different bids for different user segments, via separate campaigns if necessary, their values are often independent for each segment. Crucially, it allows us to endow the buyer with independent private information, which can be used to determine her bid but cannot be exploited by others. The ``amount'' of such information turns out to have a significant impact on performance.

Once the bids are submitted, the allocation and payment are decided according to a first-price auctions: the good is allocated to the buyer if and only if she is the highest bidder, i.e., $b \geq h$, in which case the buyer pays the auctioneer her bid $b$. If the product is not allocated to the buyer, i.e., $b < h$, the buyer does not pay anything. We posit that the buyer has a quasi-linear utility, i.e, for value $v$, an associated bid $b$, and a highest competing bid $h$, the utility of the buyer is
\begin{equation*}
    u(b,h ; v) := \left( v - b \right) \cdot \mathbbm{1} \left\{ b \geq h \right\}.
\end{equation*}

Given a bidding strategy $s$, which maps each value $v \in [0,1]$ to a distribution of bids $s(v) \in \Delta([0,1])$, and a distribution of highest competing bids $H$, the expected utility of the buyer is
\begin{equation*}
\U[F]{s, H}\ \coloneqq\   \mathbb{E}_{(v, h) \sim F \times H}  \left[ \mathbb{E}_{b \sim s(v)} \left[ u(b,h; v) \right] \right]\,.
\end{equation*}
We abuse notation slightly and use $\U[F]{s, h}$ to denote $\U[F]{s,\delta_h}$.

\noindent \textbf{Objective.} The buyer aims to select a strategy $s \in \Scal$ which maximizes her expected utility $\U[F]{s, H}$. However, as is often the case in practice, the buyer does not know the distribution of the highest competing bids $H$. In light of this uncertainty about $H$, it is natural to design bidding strategies that guarantee strong performance simultaneously against \emph{all} potential highest bid distributions $H \in \Delta([0,1])$, which is our aim in this work. This motivates us to measure the performance of a bidding strategy using regret, which is defined as
\begin{equation}
    \label{eq:regret}
    \reg_F(s,H)\ \coloneqq\  \sup_{s' \in \Scal}\ \U[F]{s',H} - \U[F]{s,H} \,.
\end{equation}

The regret $\reg_F(s, H)$ quantifies the sub-optimality of employing a given bidding strategy $s$ against the highest competing bid distribution $H$. It is defined as the difference between the utility achieved by an oracle, who knows the distribution $H$ and selects the optimal bidding strategy, and the utility obtained by our chosen bidding strategy. As the distribution $H$ is unknown and unavailable while designing $s$, we take the robust-optimization approach and aim to minimize this sub-optimality uniformly over all highest competing bid distributions $H$, i.e., we aim to minimize the \emph{worst-case regret} $\mathrm{WReg}_F(s) \coloneqq \sup_H \reg_F(s,H)$. Formally, our goal is to characterize bidding strategies which minimize worst-case regret:
\begin{equation}
\label{eq:minimax_strategy_problem}
  \inf_{s \in \Scal}\ \mathrm{WReg}_F(s)\ =\  \inf_{s \in \Scal}\ \sup_{H \in \Delta([0,1])} \reg_F(s, H)\,.
\end{equation}
When the problem \eqref{eq:minimax_strategy_problem} admits a minimizer $s^*$, we refer to it as a \emph{minimax-optimal bidding strategy}.

We conclude with a brief discussion of the model. First, note that our definition of utility assumes ties are broken in favor of the buyer under consideration. We make this choice purely for notational convenience and it is without loss of generality: the minimax-optimal bidding strategies we design under this assumption continue to be minimax optimal for all possible tie-breaking rules, as we show in \Cref{appendix:tie-breaking}. Intuitively, this is because our minimax-optimal strategies induce absolutely continuous bid distributions, thereby making ties a zero-probability event.

Next, note that our definition of regret in \eqref{eq:regret} is at the ex-ante stage, i.e., regret is measured in expectation over the private value of the buyer. This choice is motivated by online display advertising, where a buyer (advertiser) typically participates in thousands of first-price auctions as a part of their ad campaign. Therefore, standard concentration arguments apply, and any strategy which does well in expectation ends up performing well cumulatively across the large number of auctions. Similar reasoning has motivated prior works on budget management in auctions to use expected utility as the objective and study budget constraints which hold in expectation. For example, see \citet{gummadi2012repeated, abhishek2013optimal, balseiro2021budget, balseiro2023contextual} for models of single-shot auctions with in-expectation constraints and objectives. Thus, bidding strategies with low worst-case ex-ante regret would yield good performance over the entire campaign. Especially in uncertain market conditions and high-volatility periods that make it impossible to use machine-learning techniques to learn good strategies, either due to a dearth of data or rapid changes in the market that render past data obsolete. Our work offers a robust alternative to learning-based methods: minimax-optimal bidding strategies come with regret guarantees which hold regardless of how the market behaves, and hold for each auction individually.

Furthermore, our model treats the value distribution $F$ as a parameter and allows it to take arbitrary values. One particular value it can take is $\delta_v$, which corresponds to the case where the value is deterministic and equal to $v$. Thus, our model can also capture the interim regret minimization problem, where the value $v$ of the buyer is fixed and known, and she wishes to minimize worst-case regret over all possible highest-competing bid distributions. In other words, our model is more general that the one which measures regret at the interim stage. This added generality is crucial because it endows the buyer with independent private information that can be used by the buyer but not the competition. Such information is common in real-life auctions due to idiosyncratic preferences of the participants. The amount of variation in the value distribution is a measure of the ``amount'' of private information, and understanding its impact on regret is one of our primary contributions. However, the more general definition of regret comes with a cost: it makes our analysis significantly more challenging because our problem is parameterized by an infinite-dimensional value distribution $F \in \Delta([0,1])$, instead of the one-dimensional  value $v$.

\section{Performance Evaluation}\label{sec:strat_evaluation}

A crucial first step in the design of good bidding strategies is the ability to evaluate their performance on the metric of interest, which for us is the worst-case regret criterion defined in \Cref{sec:model}. However, computing the worst-case regret $\mathrm{WReg}_F(s)$ for a bidding strategy is challenging because it requires one to solve an infinite-dimensional maximization problem over the set of highest-competing-bid distributions $H$. Furthermore, the benchmark $\sup_{s' \in \Scal} \U[F]{s',H}$ against which we measure regret is itself the value of an optimization problem over $\Scal$—the space of all bidding strategies which map values to distributions. Consequently, $\reg_F(s,H)$ is itself the value of an infinite-dimensional optimization problem. These challenges are further exacerbated by the fact that all these optimization problems are parameterized by the value distribution $F$, which can be an arbitrary element in the space of all distributions $\Delta([0,1])$.

Our first main result reduces this complicated infinite-dimensional optimization to a simple one-dimensional optimization. It does so for the large class of bidding strategies comprised of all $s \in \mathcal{S}$ which induce a continuous bid distribution $P_{s,F}$.

\begin{theorem}\label{thm:evaluation}
	For any value distribution $F \in \Delta([0,1])$ and any bidding strategy $s \in \Scal$ which induces an absolutely continuous bid distribution $P_{s,F}$, we have
    \begin{align*}
		\mathrm{WReg}_F(s)= \sup_{H \in \Delta([0,1])} \reg_F(s,H)\ =\ \sup_{h\in [0,1]} \reg_F(s,\delta_h)= \sup_{h\in [0,1]}\E_{v \sim F}\left[ (v-h) \cdot \mathbbm{1}(v \geq h) \right]-\ \U[F]{s,h} \,.
    \end{align*}
    \normalsize
\end{theorem}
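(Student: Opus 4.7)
The plan is to exploit two structural properties: the linearity of the buyer's utility $\U[F]{s,\cdot}$ in the competing-bid distribution $H$, and a sup--integral swap that bounds the oracle's utility by the $H$-average of its Dirac-case values. Together they will yield $\reg_F(s,H)\ \leq\ \int_0^1 \reg_F(s,\delta_h)\,dH(h)\ \leq\ \sup_{h}\reg_F(s,\delta_h)$, and the reverse inequality is immediate because $\delta_h \in \Delta([0,1])$ for every $h \in [0,1]$.

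For the linearity step, I would write $H(b)=\int_0^1 \mathbbm{1}(b \geq h)\,dH(h)$ and invoke Fubini's theorem to interchange the expectations over $v \sim F$, $b \sim s(v)$, and $h \sim H$; this gives $\U[F]{s,H} = \int_0^1 \U[F]{s,\delta_h}\,dH(h)$, so $H \mapsto \U[F]{s,H}$ is affine. For the oracle bound, a standard measurable-selection argument (producing, for each value, an approximate pointwise maximizer of $b \mapsto (v-b)H(b)$) identifies $\sup_{s' \in \Scal} \U[F]{s',H} = \E_{v \sim F}[\sup_{b}(v-b)H(b)]$. Rewriting $H(b)$ again as an integral against $dH$ and using $\sup_b \int \leq \int \sup_b$ yields, for each $v$, $\sup_{b}(v-b)H(b) = \sup_{b}\int_0^1 (v-b)\mathbbm{1}(b\geq h)\,dH(h) \leq \int_0^1 (v-h)_+\,dH(h)$; taking expectation in $v$ produces $\sup_{s'}\U[F]{s',H} \leq \int_0^1 \sup_{s'}\U[F]{s',\delta_h}\,dH(h)$. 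Subtracting the linearity identity from this bound delivers the first chain of inequalities displayed above, and hence the first two equalities in the theorem. The third equality is then immediate from $\sup_{s'}\U[F]{s',\delta_h} = \E_v[(v-h)_+] = \E_v[(v-h)\mathbbm{1}(v \geq h)]$.

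The main technical obstacle I foresee is justifying the reduction $\sup_{s'}\U[F]{s',H} = \E_v[\sup_b (v-b)H(b)]$: this requires a measurable selection of an (approximate) maximizer of $b \mapsto (v-b)H(b)$ that is jointly measurable in $v$, so that the resulting deterministic strategy lies in $\Scal$. The upper semicontinuity of $b \mapsto (v-b)H(b)$ on the compact set $\{b \leq v\}$ (where the supremum may be restricted since it is always nonnegative) together with a Kuratowski--Ryll-Nardzewski type selection should settle this, but the details need care. The absolute continuity of $P_{s,F}$ in the hypothesis most likely enters at two related places: neutralizing ties $b = h$ (which then have zero $P_{s,F}$-probability), and, more substantively, giving continuity of $H \mapsto \U[F]{s,H}$ under the weak topology, because the jumps of $H(b)$ in $b$ then carry no $P_{s,F}$-mass, a feature that can be used to exhibit the worst-case distribution as a limit of Diracs.
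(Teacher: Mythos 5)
Your proof is correct, and it takes a genuinely different and more elementary route than the paper's. The paper establishes that $H \mapsto \reg_F(s,H)$ is convex and upper semi-continuous on $\Delta([0,1])$ in the weak topology and then invokes the Bauer Maximum Principle; the absolute-continuity hypothesis enters there to make $H \mapsto \U[F]{s,H}$ weakly continuous, which is what keeps the regret (a pointwise supremum of affine functionals minus a fixed term) upper semi-continuous. Your argument instead combines the Fubini identity $\U[F]{s,H} = \int_0^1 \U[F]{s,\delta_h}\,dH(h)$ with the bound $\sup_{s'}\U[F]{s',H} \leq \int_0^1 \sup_{s'}\U[F]{s',\delta_h}\,dH(h)$; subtracting gives
\begin{equation*}
\reg_F(s,H)\ \leq\ \int_0^1 \reg_F(s,\delta_h)\,dH(h)\ \leq\ \sup_{h \in [0,1]} \reg_F(s,\delta_h)\,,
\end{equation*}
and the reverse inequality is immediate since $\delta_h \in \Delta([0,1])$. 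This is a direct, purely integral argument that bypasses compactness, weak topologies, and semi-continuity altogether, which is a genuine simplification.

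Two observations on your own write-up. First, the measurable-selection step you flag as the main obstacle can be dispensed with entirely: you only need the one-sided bound $\sup_{s'}\U[F]{s',H} \leq \int_0^1 \sup_{s'}\U[F]{s',\delta_h}\,dH(h)$, which follows at once from the linearity identity applied to $s'$ plus the interchange $\sup_{s'}\int \leq \int\sup_{s'}$; no pointwise maximizer of $b \mapsto (v-b)H(b)$ ever needs to be constructed. An explicit maximizer is only required for the third equality, $\sup_{s'}\U[F]{s',\delta_h} = \E_{v\sim F}[(v-h)\mathbbm{1}(v\geq h)]$, and there it is the trivial deterministic strategy bidding $h$ when $v\geq h$ and $0$ otherwise. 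Second, and notably, your argument does not use the absolute continuity of $P_{s,F}$ anywhere; it therefore proves the theorem for \emph{all} $s \in \Scal$. That hypothesis is needed for the paper's semi-continuity route, but your approach shows it is dispensable for this particular conclusion.
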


\Cref{thm:evaluation} is a critical result that characterizes the worst-case regret $\mathrm{WReg}_F(s)$ of a strategy as the output of a one-dimensional optimization over deterministic highest competing bids.
It implies that, when evaluating the worst-case performance over all highest-competing-bid distributions $H$, the buyer may restrict attention to deterministic highest competing bids, i.e., distributions of the form  $H = \delta_h$. Importantly, deterministic highest competing bids drastically simplify the evaluation of the benchmark $\sup_{s' \in \Scal} \U[F]{s',\delta_h}$: the optimal strategy for the benchmark is to bid $h$ whenever the value is higher than $h$. This yields the simplified expression $\E_{v \sim F}\left[ (v-h) \cdot \mathbbm{1}(v \geq h) \right]$ for the benchmark in \Cref{thm:evaluation}. Crucially, the simplification of \Cref{thm:evaluation} holds for all value distributions $F$ and bidding strategies, so long as they induce an absolutely continuous bid distribution $P_{s,F}$. In which case, the worst-case regret can be computed via a simple line search.

The proof of \Cref{thm:evaluation} follows from Bauer Maximum Principle (see 7.69 of \citealt{aliprantis2006infinite}). However, the application of such a result is not straightforward in our setting as the function $H \mapsto \reg_F(s,H)$ is not continuous on $\Delta([0,1])$ for the weak topology. Indeed, for a fixed bid $b$ and value $v > b$, note that the utility function is not continuous as a function of the highest competing bid $h$. As such, $\reg_F(s,H)$ is the pointwise supremum over an infinite dimensional space (over the bidding strategies chosen by the oracle) of a function which is discontinuous. To get around this challenge, we first use the Portmonteau lemma to establish the upper semi-continuity of the expected utility, and then use a version of Berge's maximum theorem for upper semi-continuous functions to conclude that $\reg_F(s,H)$, which is the pointwise supremum of such functions, is also upper semi-continuous.

We conclude this section by demonstrating the benefits of \Cref{thm:evaluation} through the evaluation of \emph{uniform-bid-shading strategies}, which are bidding strategies that uniformly scale each value $v$ by the same constant $\alpha \in [0,1]$ to determine the corresponding bid $\alpha \cdot v$. These strategies naturally arise in online ad auctions, they have been documented in laboratory experiments~\cite{bajari2005structural, filiz2007auctions}, and have been the popular choice for ``simple'' strategies in prior work~\citep{fikioris2024liquid, gaitonde2023budget}. Crucially, if one were to treat each value in isolation and minimize regret one value at a time, as \citet{kasberger2023robust} did, then the optimal deterministic strategy ends up being the uniform-bid-shading policy $v \mapsto 0.5 \cdot v$ with $\alpha = 0.5$. Note that separately evaluating worst-case regret for each value $v$ assumes that the choice of the highest competing bid distribution $H$ can depend on the value, which is tantamount to assuming that the buyer's value is not private information. This begs the following questions: Do uniform-bid-shading strategies work well if one accounts for private information in the form of a non-deterministic value distribution $F$? If so, what should be the shading factor $\alpha$?

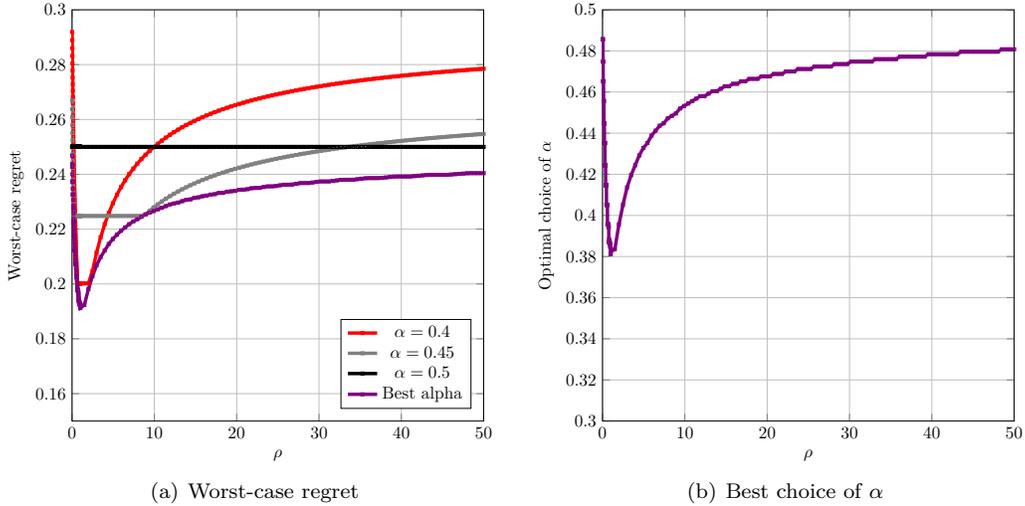
\begin{figure}[h!]
    \centering
    \subfigure[Worst-case regret]{
    \begin{tikzpicture}[scale = 0.65]
    \begin{axis}[
        width=10cm,
        height=10cm,
        xmin=0,xmax=50,
        ymin=0.15,ymax=0.3,
        table/col sep=comma,
        xlabel={$\rho$},
        ylabel={Worst-case regret},
        grid=both,
        legend pos=south east,
    ]

    \addplot [red,  line width=0.7mm, ,mark=square,mark options={scale=.1}] table[x=beta,y={alpha = 0.4004}] {Data/worst_case_regret_all_betas.csv};
    \addlegendentry{$\alpha = 0.4$}

    \addplot [gray,  line width=0.7mm, ,mark=square,mark options={scale=.1}] table[x=beta,y={alpha = 0.4496}] {Data/worst_case_regret_all_betas.csv};
    \addlegendentry{$\alpha = 0.45$}

    \addplot [black,  line width=0.7mm, ,mark=square,mark options={scale=.1}] table[x=beta,y={alpha = 0.5}] {Data/worst_case_regret_all_betas.csv};
    \addlegendentry{$\alpha = 0.5$}

    \addplot [violet,  line width=0.7mm, ,mark=square,mark options={scale=.1}] table[x=beta,y={best_performance}] {Data/best_alpha_and_performance_beta_dist.csv};
    \addlegendentry{Best alpha}

    \end{axis}
    \end{tikzpicture}
    \label{fig:regret_bid_shade}
    }
    \subfigure[Best choice of $\alpha$]{
    \begin{tikzpicture}[scale = 0.65]
    \begin{axis}[
        width=10cm,
        height=10cm,
        xmin=0,xmax=50,
        ymin=0.3,ymax=0.5,
        table/col sep=comma,
        xlabel={$\rho$},
        ylabel={Optimal choice of $\alpha$},
        grid=both,
        legend pos=south east,
    ]

    \addplot [violet,  line width=0.7mm, ,mark=square,mark options={scale=.1}] table[x=beta,y={best_alpha}] {Data/best_alpha_and_performance_beta_dist.csv};

    \end{axis}
    \end{tikzpicture}
    \label{fig:best_alpha}
    }
    \caption{\textbf{Performance of uniform-bid-shading strategies for Beta$(\rho,\rho)$ value distributions.} In (a), we report the worst-case regret of different uniform-bid-shading strategies $s_{\alpha}$ as a function of $\rho$, when the value distributions is of the form Beta$(\rho,\rho)$. We also report the worst-case regret for the uniform-bid-shade strategy which uses the best $\alpha$ for each $\rho$ (see violet curve). In (b), we plot the best choice of $\alpha$ for each value of $\rho$.}
    \label{fig:alpha_bid_shade}
\end{figure}

To answer these questions, we leverage \Cref{thm:evaluation} to evaluate the worst-case regret of uniform-bid-shading strategies across various value distributions, where each evaluation reduces to a simple line search over $h \in [0,1]$. In \Cref{fig:alpha_bid_shade}, we plot the worst-case regret of uniform-bid-shading strategies when the value distribution follows a $\mathrm{Beta}(\rho,\rho)$ distribution for $\rho > 0$. This provides a large class of value distributions for comparison, each with a mean of 0.5 but different levels of spread in values. \Cref{fig:alpha_bid_shade} highlights that, when the competing bids can only depend on the value distribution but not the realized value, the optimal choice of $\alpha$ depends heavily on the buyer's value distribution. When $\rho$ is very high, or very low, the choice $\alpha =0.5$ suggested by \cite{kasberger2023robust} performs strongly among uniform-bid-shading strategies. This aligns with their findings, as the value distribution converges to the point mass at $0.5$ for high $\rho$ and a Bernoulli distribution (which is essentially a point mass at $1$ for our problem), when $\rho$ is low. However, when the value distribution is more dispersed---for instance when $\rho = 1$ (corresponding to the uniform distribution)---worst-case regret can be significantly reduced by adopting the uniform-bid-shading strategy with $\alpha = 0.4$. In fact, this choice results in a worst-case regret that is $20\%$ lower than than that of $\alpha = 0.5$.  This improvement highlights the importance of considering the entire value distribution when designing bidding strategies, particularly in practical settings where the buyer's realized value is private and not observed by competitors.

All in all, \Cref{fig:alpha_bid_shade} demonstrates that \Cref{thm:evaluation} provides a powerful framework for comparing bidding strategies on the basis of worst-case regret. It equips the buyer with a principled method for selecting the strategy best suited to their value distribution. However, this approach is limited to evaluating and selecting among a predefined set of strategies, and leaves open the question: can we go beyond simple bidding strategies and characterize the optimal one that minimizes worst-case regret? In the next section, we answer this question in the affirmative by providing an efficient procedure for constructing minimax-optimal bidding strategies for arbitrary value distributions.

\section{Minimax-Optimal Bidding Strategy}\label{sec:minimax}

Directly computing a minimax-optimal strategy by solving \eqref{eq:minimax_strategy_problem} is inherently challenging. Beyond the challenges discussed in \Cref{sec:strat_evaluation} for evaluating the worst-case regret  $\mathrm{WReg}_F(s)$ of a fixed strategy $s$, we now face the additional challenge of minimizing worst-case regret over randomized bidding strategies $s \in \Scal$. Each such strategy defines an uncountable family of distributions $\{s(v) \in \Delta([0,1]) \mid v \in [0,1] \}$ over bids, further complicating the problem. 
In order to address the complexities arising from these nested infinite-dimensional optimization problems, we proceed in multiple steps, each aimed at simplifying the overall minimax optimization to make it tractable. 

In \Cref{sec:full-info}, we build on the reduction established in \Cref{thm:evaluation} to reformulate the worst-case regret problem as an alternative maximization problem. This new formulation leverages a full-information benchmark with a significantly simpler utility expression than the one used in \eqref{eq:regret}.
This result allows us to considerably simplify the inner maximization problem of \eqref{eq:minimax_strategy_problem}, by simplifying both the space of distributions of highest competing bids and the expression of the regret objective.
 In \Cref{sec:quantile-strat}, we introduce the class of quantile-based bidding strategies which specify a single bid for each \emph{quantile}, instead of a distribution of bids for each value. The resulting re-parametrization significantly simplifies the space of bidding strategies 
 and allows us to represent bidding strategies using a single function from $[0,1]$ to itself---a stark contrast to the uncountable family of bid distributions required to specify a general bidding strategy.
In \Cref{sec:saddle_point}, we leverage this simplification to construct a quantile-based bidding strategy $Q^*$ and a highest-competing-bid distribution $H^*$ which form a candidate saddle-point for our problem. The construction is based on ordinary differential equations that capture appropriate first-order-optimality conditions.
In \Cref{sec:main-result}, we establish our main result (\Cref{thm:main-result}) by proving  that our candidate $(Q^*,H^*)$ does in fact form a saddle-point for \eqref{eq:minimax_strategy_problem}. 

Throughout this section, we make the assumption that the value distribution satisfies $F(0) = 0$, i.e., the probability of the value being zero is zero. It is made purely for ease of exposition, and all our results readily extended to the general case, as we show in \Cref{appendix:atom-at-0}.

\subsection{Reduction to the Full-Information Benchmark}\label{sec:full-info}

Recall that the regret $\reg_F(s,H)$ of a bidding strategy $s$ against the highest-competing-bid distribution $H$ measures its utility loss against the benchmark $\sup_{s'} \U[F]{s',H}$. This benchmark is based on partial information and captures the optimal utility that can be attained with knowledge of the distribution $H$, but not the realization of the highest competing bid $h$. One can alternatively consider the stronger full-information benchmark that also has knowledge of the highest competing bid $h$ that was realized. In particular, for a given highest-competing-bid distribution $H$, the \emph{full-information benchmark} is defined as
\begin{align*}
	\O_F(H)\ \coloneqq\ \E_{h \sim H}\left[ \max_{s' \in \Scal}\  \U[F]{s', h} \right].
\end{align*}
 Observe that, if the highest competing bid $h$ is known, then the utility-maximizing strategy is to simply bid $h$ whenever the value $v$ is greater than or equal to it. Thus, the full-information benchmark can be rewritten as
\begin{align*}
	\O_F(H)\ =\ \E_{h \sim H}\left[ \max_{s' \in \Scal}\  \U[F]{s',h} \right] = \E_{(v,h) \sim F \times H}\left[ (v-h) \cdot \mathbbm{1}(v \geq h) \right]\,.
\end{align*}
Based on the full-information benchmark, one can define an alternative notion of regret
\begin{align*}
	\rr[F]{s,H}\ \coloneqq \O_F(H)\ -\ \U[F]{s,H}\,. 
\end{align*}

With a slight abuse of notation, we use $\O_F(h)$ and $\rr[F]{s,h}$ to represent $\O_F(\delta_h)$ and $\rr[F]{s,\delta_h}$ respectively, where $\delta_h$ is the Dirac distribution. The following result follows from \Cref{thm:evaluation} and show that for every bidding strategy $s$ which induces an absolutely continuous bid distribution $P_{s,F}$, the worst-case regret is identical under both partial-information and full-information benchmarks. 
\begin{corollary}\label{thm:full-info}
	For any value distribution $F \in \Delta([0,1])$ and any bidding strategy $s \in \Scal$ which induces an absolutely continuous bid distribution $P_{s,F}$, we have
	\begin{align*}
		\sup_{H \in \Delta([0,1])} \reg_F(s,H)\ =\ \sup_{h \in [0,1]}\ \rr[F]{s, h}\ =\ \sup_{H \in \Delta([0,1])}\ \rr[F]{s, H}\,.
	\end{align*}
\end{corollary}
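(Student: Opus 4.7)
The plan is to prove both equalities by combining \Cref{thm:evaluation} with a linearity observation about the full-information regret.

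First I would establish that $H \mapsto \rr[F]{s,H}$ is affine in $H$. Since $v \sim F$ and $h \sim H$ are independent, Fubini gives
\begin{equation*}
    \U[F]{s,H}\ =\ \E_{h \sim H}\bigl[\U[F]{s,h}\bigr],
\end{equation*}
and by definition $\O_F(H) = \E_{h \sim H}[\O_F(h)]$. Subtracting yields the key identity
\begin{equation*}
    \rr[F]{s,H}\ =\ \E_{h \sim H}\bigl[\rr[F]{s,h}\bigr].
\end{equation*}
Hence $\rr[F]{s,H} \leq \sup_{h \in [0,1]} \rr[F]{s,h}$ for every $H \in \Delta([0,1])$, so $\sup_H \rr[F]{s,H} \leq \sup_h \rr[F]{s,h}$. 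The reverse inequality is immediate since each $\delta_h$ belongs to $\Delta([0,1])$. This proves the second equality.

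Next, for the first equality, I would note that the partial-information and full-information benchmarks coincide on Dirac measures: when $H = \delta_h$, the strategy that bids $h$ whenever $v \geq h$ simultaneously attains $\O_F(\delta_h)$ and $\sup_{s' \in \Scal} \U[F]{s',\delta_h}$. Therefore $\rr[F]{s,h} = \reg_F(s,\delta_h)$ for every $h \in [0,1]$. Chaining this with \Cref{thm:evaluation}, which asserts $\sup_H \reg_F(s,H) = \sup_h \reg_F(s,\delta_h)$ under our absolute continuity assumption on $P_{s,F}$, gives $\sup_H \reg_F(s,H) = \sup_h \rr[F]{s,h}$, completing the chain.

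There is no real technical obstacle here—the heavy lifting has already been carried out in \Cref{thm:evaluation}. The only thing to be careful about is the direction of the trivial inequality: although $\O_F(H) \geq \sup_{s'} \U[F]{s',H}$ in general (full information dominates partial information), on Dirac measures the two become equal, which is precisely what is needed to import the conclusion of \Cref{thm:evaluation} into the full-information setting.
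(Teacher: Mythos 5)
Your proof is correct and takes essentially the same route as the paper: both derive the second equality from the affine identity $\rr[F]{s,H} = \E_{h \sim H}[\rr[F]{s,h}]$, and both derive the first by observing that the full-information and partial-information benchmarks coincide on Dirac measures (since bidding $h$ when $v \geq h$ attains the oracle utility against $\delta_h$), then importing \Cref{thm:evaluation}.
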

\Cref{thm:full-info} directly follows from \Cref{thm:evaluation} by noting that, for every $h \in [0,1]$, we have 
\begin{equation*}
    \rr[F]{s,h} = \E_{v \sim F}\left[ (v-h) \cdot \mathbbm{1}(v \geq h) \right]-\ \U[F]{s,\delta_h}.
\end{equation*}
Moreover, we remark that for every $H \in \Delta([0,1])$ we have that $\rr[F]{s,H} = \E_{h \sim H}[\rr[F]{s,h}]$. Therefore, $\sup_{h \in [0,1]}\ \rr[F]{s, h}\ =\ \sup_{H \in \Delta([0,1])}\ \rr[F]{s, H}$, which establishes the second equality.

\Cref{thm:full-info} has a powerful interpretation, as it implies that the robust value of the information associated with the knowledge of the \emph{distribution} of the highest competing bid is equal to the one associated with knowledge of its \emph{realization}.
In what follows, we use \Cref{thm:full-info} to reduce the problem of finding a minimax-optimal bidding strategy for the partial-information benchmark defined in \eqref{eq:minimax_strategy_problem} to that of finding one for of the simpler full-information benchmark. Namely, we now aim at solving
\begin{equation}
    \label{eq:minimax_full_info}
    \inf_{s \in \mathcal{S}} \sup_{h \in [0,1]} \rr[F]{s,h}.
\end{equation}
We note that this reduction simplified not only the optimization space in the inner supremum but also the objective itself. 

\subsection{Quantile-Based Strategies}\label{sec:quantile-strat}

Next, we further simplify the minimax problem \eqref{eq:minimax_full_info} by reducing the strategy space of the buyer from the set $\Scal$ of arbitrary randomized strategies $s: [0,1] \to \Delta([0,1])$ to simpler \emph{quantile-based strategies}, with the latter being parameterized by a single function $Q:[0,1] \to [0,1]$. The primary goal of this section is to describe this reformulation and build intuition for it. Accordingly, we first provide informal arguments to motivate our reformulation, before formally defining quantile-based bidding strategies in \Cref{def:quantile-based-bidding-strategy}.

Note that, for any fixed highest-competing-bid distribution $H$, the regret-minimization problem
\begin{align*}
	\inf_{s \in \Scal}\ \rr[F]{s,H}\ =\ \inf_{s \in \Scal}\ \O_F(H)\ -\ \U[F]{s,H}\ =\ \O_F(H)\ -\ \sup_{s \in \Scal}\ \U[F]{s,H}
\end{align*}
boils down to the utility-maximization problem $\sup_{s \in \Scal}\ \U[F]{s,H}$ for the buyer. And therefore, we can leverage the following intuitive observation about utility-maximizing strategies in first-price auctions: higher values result in higher optimal bids. In other words, the optimal-bid correspondence
\begin{equation*}
    v \mapsto {\mathbb S(v) \coloneqq \argmax_{b \in [0,1]} \ (v-b)\cdot H(b)}
\end{equation*}
is increasing in the value $v$, i.e., $\mathbb S(v_1) \succcurlyeq \mathbb S(v_2)$ for all $v_1 \geq v_2$. 
The proof of this monotonicity result is a standard application of monotone comparative statics and various forms of it have been proven in prior work (see, for example, \citealt{kumar2024strategically}). We do not directly use it in our proof, but instead treat it as a guiding principle for our reformulation.

Let $\ell(\mathbb S(v)) = \sup \mathbb S(v) - \inf \mathbb S(v)$ denote the ``length'' of the set of optimal bids for value $v$, and observe that $\sum_{v \in [0,1]} \ell(\mathbb S(v)) \leq 1$. As the sum of any uncountable collection of positive integers necessarily diverges to infinity, we must have that the set $\{ v \mid \ell(\mathbb S(v))> 0\}$ is at most countable, and the remaining values have a unique optimal bid. If $F$ had no atoms, the probability of $v \sim F$ taking one of these countably-many values would be 0, and we could focus on deterministic bidding strategies for the remaining values. However, when $F$ contains atoms, we do not have this luxury. The following example illustrates this fact by showin that prior-independent pricing~\citep{bergemann2011robust} is a special case of our problem with a deterministic value.
\begin{example}\label{example:atom-at-1}
	Suppose the value is deterministic and equal to $1$. As there is only one value, the bidding strategy is simply the distribution of bids  $s(1)$ for value 1. For the highest-competing-bid distribution $H$, the expected regret is given by
    \begin{align*}
        \rr[F]{s, H}\ =\ \E_{h \sim H}[1 - h]\ -\ \E_{h \sim H, b \sim s(1)}\left[ (1 - b) \mathbbm{1}(b \geq h)\right]\,. 
    \end{align*}
    To see the equivalence to prior-independent pricing, define $\nu = 1-h$ to be the willingness-to-pay (WTP) and $p = 1 - b$ to be the price. Then, regret can be re-written as
    \begin{align*}
         \rr[F]{s, H}\ =\ \E_{(1- \nu) \sim H}[\nu]\ -\ \E_{(1 -\nu) \sim H, (1 -p) \sim s(1)}\left[ p \mathbbm{1}(\nu \geq p)\right]\,.
    \end{align*}
    The right-hand side is exactly the expected regret for the pricing problem when the WTP $\nu$ is distributed according to $(1 - \nu) \sim H$ and the price $p$ is distributed according to $(1 - p) \sim s(1)$. Thus, computing the minimax regret over $s$ and $H$ is equivalent to computing the minimax regret over all randomized pricing policies and WTP distributions. Hence, these two problems are equivalent and the sub-optimality of deterministic pricing policies established in \citet{bergemann2011robust} implies the sub-optimality of deterministic bidding strategies in this example. 
\end{example}

Fortunately (and surprisingly), it turns out that there are no other cases requiring randomization: we can focus on bidding strategies which bid randomly only for atoms of $F$, and bid deterministically for all other values. This is where we make our second crucial observation: quantiles provide a natural method to specify bids which are random for atoms of $F$, but deterministic for all other values. In particular, it is well known that $v = F^-(y)$ is distributed according to $F$ when the quantile $y \in [0,1]$ is uniformly distributed. Importantly, there are infinitely many quantiles $y \in [0,1]$ which result in $v = F(y)$ whenever $v$ is an atom, and there is exactly one such quantile for all other values; exactly as desired. Thus, if we specify a bid for each quantile $y \in [0,1]$, instead of each value, we naturally end up with a bidding strategy which is random for atoms and deterministic otherwise.

\begin{definition}\label{def:quantile-based-bidding-strategy}
	We refer to any strictly-increasing absolutely-continuous function ${Q: [0,1] \to [0,1]}$ as a quantile-based bidding strategy, and use $\mathcal Q$ to refer to the set of all such functions. For every $Q \in \mathcal Q$, we define the corresponding bidding strategy $\s[Q] \in \Scal$ as follows:
	\begin{align*}
		\s[Q](v)\ = \begin{cases}
			\delta_{Q(F(v))} &\text{if } \lambda(Y(v)) = F(\{v\}) = 0\\
			\{Q(y)\mid y \sim \unif(Y(v))\} &\text{if } \lambda(Y(v)) = F(\{v\}) >  0
		\end{cases}
	\end{align*}
	where $Y(v) \coloneqq \{y \in [0,1] \mid F^-(y) = v\}$ is the interval of quantiles which correspond to the value $v$.
\end{definition}

\begin{remark}\label{remark:Y-interval}
	Note that $Y(v)$ is the pre-image of $\{v\}$ for the non-decreasing function ${F^-: [0,1] \to [0,1]}$, and as a consequence it is an interval in $[0,1]$. Also, if $y \sim \unif(0,1)$, then $F^-(y) \sim F$~\citep{embrechts2013note}. Therefore, $\lambda(Y(v)) = F(\{v\})$, as noted in \Cref{def:quantile-based-bidding-strategy}.
\end{remark}

\begin{remark}\label{remark:kernel}
	$\kappa_{\s[Q]}$ is a Markov kernel and $\s[Q] \in \Scal$. See \Cref{appendix:minimax-remark} for details. 
\end{remark}

Hereafter, we focus our search for minimax-optimal policies to quantile-based bidding strategies $\mathcal Q$. In particular, we analyze the following regret-minimization problem:
\begin{equation*}
    \inf_{Q \in \mathcal{Q}}\ \sup_{h \in [0,1]}\ \rr[F]{\s[Q],h}
\end{equation*}

 Since the highest possible bid under $\s[Q]$ is $Q(1)$, setting $h > Q(1)$ only hurts the benchmark and not the utility of the buyer. In other words $\rr[F]{\s[Q],h} \leq \rr[F]{\s[Q],Q(1)}$ for all $h > Q(1)$. Thus, we can focus on $h \in [0,Q(1)]$. Alternatively, we can focus on $h = Q(y)$ for some $y \in[0,1]$. In this case, the benchmark can be written as
\begin{align*}
	\mathcal{O}_{F}(h)\ =\ \E_{v \sim F}\left[\mathcal(v-h)\mathbbm{1}(v \geq h) \right]\ =\ \int_{h}^1 (1 - F(t)) dt\ =\ \int_{Q(y)}^1 (1 - F(t)) dt
\end{align*}
and the expected utility of $\s[Q]$ can be written as
\begin{align*}
	\U[F]{\s[Q], h}\ =\ \E_{v \sim F}[ \E_{b \sim \s[Q](v)} [(v - b) \mathbbm{1}(b \geq h)]]\,.
\end{align*}

\begin{lemma}\label{lemma:quantile-based-bidding}
	For every quantile-based bidding strategy $Q \in \mathcal{Q}$ and highest competing bid $h = Q(y)$ with $y \in [0,1]$, we have
	\begin{align*}
		\U[F]{\s[Q], h}\ =\ \int_y^1 (F^-(t) - Q(t))dt\,.
	\end{align*}
    Moreover, the distribution of bids induced by $\s[Q]$ (and $F$) is $P_{\s[Q], F} = \{Q(t) \mid t \sim \unif(0,1)\}$.
\end{lemma}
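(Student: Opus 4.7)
The plan is to collapse the two-stage sampling $v \sim F$, $b \sim \s[Q](v)$ into a single stage governed by a uniform quantile. Specifically, I would show that the joint law $J_{\s[Q],F}$ of $(v,b)$ coincides with the law of $(F^-(y), Q(y))$ where $y \sim \unif(0,1)$. Once this is established, both claims follow by simple manipulations: the bid marginal is immediately $\{Q(y) \mid y \sim \unif(0,1)\}$, and the utility computation reduces to a one-dimensional integral after exploiting the strict monotonicity of $Q$.

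To verify the reformulation, I would check that both the marginal of $v$ and the conditional law of $b$ given $v$ agree with the definition of $\s[Q]$. The marginal is handled by Remark~\ref{remark:Y-interval}, which says that $F^-(y) \sim F$ when $y \sim \unif(0,1)$. For the conditional, I would split on whether $v$ is an atom of $F$. When $\lambda(Y(v)) = F(\{v\}) = 0$, the interval $Y(v)$ collapses to (essentially) a single point, so conditional on $F^-(y) = v$ the quantile $y$ is a.s.\ uniquely determined, yielding a point mass bid, which one checks equals $Q(F(v))$ and therefore matches the first branch of Definition~\ref{def:quantile-based-bidding-strategy}. When $F(\{v\}) > 0$, the preimage $Y(v)$ is an interval of Lebesgue measure $F(\{v\})$, and conditioning a uniform on being in $Y(v)$ produces exactly $\unif(Y(v))$, matching the second branch. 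I would also invoke Remark~\ref{remark:kernel} to confirm that this conditional description coincides with the Markov kernel $\kappa_{\s[Q]}$.

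Given this equivalence, the second claim of the lemma is immediate from the marginal of $b$. For the first claim, take $h = Q(y_0)$ and rewrite
\[
\U[F]{\s[Q], h}\ =\ \E_{y \sim \unif(0,1)}\bigl[(F^-(y) - Q(y))\, \mathbbm{1}(Q(y) \geq Q(y_0))\bigr].
\]
Because $Q \in \mathcal{Q}$ is strictly increasing, $\mathbbm{1}(Q(y) \geq Q(y_0)) = \mathbbm{1}(y \geq y_0)$ almost surely, so the expectation collapses to $\int_{y_0}^1 (F^-(t) - Q(t))\, dt$, as required.

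The main obstacle is the conditional-distribution verification for atoms: one must correctly argue that conditioning $y \sim \unif(0,1)$ on $F^-(y) = v$ yields $\unif(Y(v))$ precisely because $F^-$ maps the entire length-$F(\{v\})$ interval $Y(v)$ to $v$. Everything else (monotonicity of $Q$, reduction of the indicator, Fubini to swap the inner and outer expectations) is routine once the reformulation is in place.
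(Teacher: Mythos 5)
Your proposal is correct and follows essentially the same route as the paper: you establish the coupling $(v,b) \overset{d}{=} (F^-(y), Q(y))$ with $y \sim \unif(0,1)$ by splitting on atoms versus non-atoms of $F$, and then read off both the utility identity and the bid marginal. The one place you should be careful is the step you defer to ``one checks equals $Q(F(v))$'': this is precisely where the paper's technical work lies, namely showing $F(F^-(t)) = t$ whenever $F^-(t)$ is not an atom via the generalized-inverse identities $F(F^-(t)) \geq t$ and $F^-(F(x)) \leq x$ --- the fact that $Y(v)$ is a singleton does not by itself identify that singleton as $F(v)$.
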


We note that the joint distribution over values and bids is not a product distribution in general as, for any practical bidding strategy, the bid submitted by the buyer depends on the value. 
\Cref{lemma:quantile-based-bidding} allows us to express the expected utility with respect to the intricate joint distribution as a much simpler expectation over quantiles. In particular, the right-hand side is the expected utility under the distribution which ``couples'' the values $v = F^-(t)$ and the bids $Q(t)$ by generating them from the same uniform distribution $t \sim \unif(0,1)$.

\Cref{lemma:quantile-based-bidding} allows us to reduce the minimax problem to the following simple form
\begin{equation}
\label{eq:simpler-saddle}
	\inf_{Q \in \mathcal{Q}}\ \sup_{y \in [0,1]}\ \int_{Q(y)}^1 (1 - F(t)) dt\ -\ \int_y^1 (F^-(t) - Q(t)) dt \,.
\end{equation}
Observe that restricting our attention to quantile-based strategies considerably simplifies the buyer's strategy space. Quantile-based bidding strategies can be specified with a single monotonic function $Q:[0,1] \to [0,1]$, whereas general bidding strategies $s: [0,1] \to \Delta([0,1])$ require an uncountable collection of monotonic functions, namely the CDFs of the bid distributions $\{s(v)\}_v$ for each value $v$. 
Furthermore, re-expressing the regret as a function of the quantile-based bidding strategy $Q$ allows us to obtain a formulation that is convex in $Q$ for any fixed $y$. Indeed, the function
\begin{align*}
	Q \longmapsto \int_{Q(y)}^1 (1 - F(t)) dt\ -\ \int_y^1 (F^-(t) - Q(t)) dt
\end{align*}
is convex, and taking a supremum over $y$ preserves this convexity. As such, the outer minimization problem \eqref{eq:simpler-saddle} corresponds to the minimization of a convex functional. In particular, local minima are also global minimax for convex functions, and the former can be found using first-order conditions---we leverage this observation in \Cref{sec:saddle_point} to construct a candidate minimax-optimal quantile-based bidding strategy.

\subsection{Saddle Point}\label{sec:saddle_point}

In light of the reformulation of the previous section, we now focus on solving the simplified minimax problem \eqref{eq:simpler-saddle}.
We do so by explicitly constructing a quantile-based bidding strategy $Q^*$ and a distribution of highest competing bids $H^*$ such that the pair $(Q^*, H^*)$ forms a saddle point, i.e.,
\begin{align*}
	\rr[F]{\s[Q^*], H^*} \leq \rr[F]{s, H^*} \quad \forall\ s \in \mathcal{S} \quad \text{and} \quad \rr[F]{\s[Q^*], H^*} \geq \rr[F]{\s[Q^*],H} \quad \forall\ H \in \Delta([0,1])\,.
\end{align*}

Intuitively, the minimax problem in \eqref{eq:simpler-saddle} is a convex-concave saddle-point problem and we would like to leverage this convex structure. However, as $\mathcal Q$ was defined to be the set of all strictly-increasing absolutely-continuous functions, it is not compact under standard topologies. This makes a direct application of results from infinite-dimensional convex optimization difficult. 
But all is not lost, the convex structure of the problem lends itself to a constructive proof using first-order optimality conditions. We leverage this fact to construct a candidate quantile-based bidding strategy $Q^*$ in \Cref{sec:construct_Q} and a candidate distribution of highest competing bids $H^*$ in \Cref{sec:construct_H}. 

\subsubsection{Candidate quantile-based bidding strategy}
\label{sec:construct_Q}

Fix a quantile-based bidding strategy $Q$ and consider the inner-maximization problem over quantiles $y \in [0,1]$ in \eqref{eq:simpler-saddle}:
\begin{align}
\label{eq:inner_max}
	\sup_{y \in [0,1]}\quad \int_{Q(y)}^1 (1 - F(t)) dt\ -\ \int_y^1 (F^-(t) - Q(t)) dt \,.
\end{align}

Looking forward, we would like all $y \in [0,1]$ to be optimal and have the same objective value. If all $y \in [0,1]$ have the same objective value, the derivative with respect to $y$ must be 0, which yields the following first-order condition:
\begin{align}\label{eq:first-ode}
	- (1 - F(Q(y))) \cdot Q'(y) \ +\ (F^-(y) - Q(y)) = 0 &&\forall\ y \in [0,1]\,.
\end{align}

This ODE will form a crucial part of our constructive characterization of the saddle point $(Q^*, H^*)$. 
Our next result establishes that it admits a well-behaved solution.

\begin{lemma}\label{lemma:ode-existence}
	For every value distribution $F$, there exists an absolutely continuous function\\ ${Q^*: [0,1] \to [0,1]}$ such that the following conditions hold:
	\begin{enumerate}
		\item First-order optimality: There exists a measurable set $A \subseteq [0,1]$ with measure $\lambda(A) = 1$ such that $Q^*$ is differentiable and $ Q^{*'}(y) \ =\ (F^-(y) - Q^*(y))/(1 - F(Q^*(y)))$  for all $y \in A$.
		\item Invertibility: $Q^*$ is strictly increasing, i.e., $Q^*(y_1) > Q^*(y_2)$ for all $y_1 > y_2$.
		\item Strict Boundedness: $0 < Q^*(y) < F^-(y)$ for all non-zero $y \in (0,1]$ and $Q^*(0) = 0$.
	\end{enumerate}
\end{lemma}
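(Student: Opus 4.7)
The plan is to substitute for the dependent variable so that the ODE becomes solvable by Carathéodory's theorem, apply it, and then invert the substitution while verifying the three structural properties of $Q^*$.

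Concretely, let $\Psi(x) \coloneqq \int_0^x (1 - F(s))\,ds$ and $V(y) \coloneqq \int_0^y F^-(t)\,dt$; both are continuous and non-decreasing, and $\Psi$ is strictly increasing on $[0,1]$ with inverse $\Psi^{-1}:[0,\Psi(1)]\to[0,1]$. Integrating the first-order-optimality condition $(1-F(Q^*(y)))Q^{*'}(y) = F^-(y) - Q^*(y)$ with $Q^*(0)=0$ yields the equivalent Volterra identity $\Psi(Q^*(y)) + \int_0^y Q^*(t)\,dt = V(y)$. Setting $G(y) \coloneqq V(y) - \int_0^y Q^*(t)\,dt$, this decouples into $Q^*(y) = \Psi^{-1}(G(y))$ together with the scalar ODE
\[ G'(y) = F^-(y) - \Psi^{-1}(G(y)), \quad G(0)=0, \]
where I first extend $\Psi^{-1}$ continuously by $0$ on $(-\infty, 0)$ and $1$ on $(\Psi(1), \infty)$. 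The right-hand side is measurable in $y$ (since $F^-$ is monotone), continuous in $G$, and bounded by $2$, so Carathéodory's existence theorem produces an absolutely continuous $G:[0,1]\to\R$ satisfying this ODE almost everywhere.

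Next, I would extract the three required properties from $G$. The key identity, proved by writing $\Psi(F^-(y))=F^-(y)-\int_0^{F^-(y)} F(s)\,ds$ and partitioning the rectangle $[0,F^-(y)]\times[0,y]$ via the monotone curve $t=F(s)$, is
\[ \Psi(F^-(y)) = V(y) + (1-y)\,F^-(y). \]
Because $\Psi^{-1}(G) \geq 0$ everywhere after the extension, $G'\geq 0$ whenever $G\leq 0$, so $G$ never crosses below $0$; thus $G\geq 0$ and $Q^* = \Psi^{-1}(G)\geq 0$. A contradiction argument upgrades this to strict positivity on $(0,1]$: if $G(y_0)=0$ for some $y_0>0$, then $G$ must be negative on a preceding open set, forcing $\Psi^{-1}(G) > F^-$ on a positive-measure subset, hence $G > \Psi(F^-)$ there---contradicting the universal bound $G \leq V \leq \Psi(F^-)$, which in turn follows from $Q^* \geq 0$ and the identity. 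With $Q^*>0$ on $(0,1]$ established, $\int_0^1 Q^*\,dt > 0$, so $G(1) < V(1) = \Psi(F^-(1))$ strictly; combined with $\Psi(F^-(y))-V(y) = (1-y)F^-(y) > 0$ for $y\in(0,1)$, this yields $Q^*(y) < F^-(y)$ for all $y \in (0,1]$. Strict monotonicity of $Q^*$ then follows from $G'(y)=F^-(y)-Q^*(y)>0$ pointwise on $(0,1]$ and the strict monotonicity of $\Psi^{-1}$.

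For absolute continuity of $Q^*$ and recovering the ODE pointwise a.e., the crucial input is that $G([0,1]) \subseteq [0, G(1)]$ with $G(1)<\Psi(1)$ strictly (since $Q^*(1)<F^-(1)\leq 1$), so $\Psi^{-1}$ is Lipschitz on the range of $G$ with constant at most $(1-F(Q^*(1)))^{-1}<\infty$; composing an absolutely continuous function with a Lipschitz one is absolutely continuous, so $Q^*$ is a.c., and the chain rule (using that $\Psi$ fails to be differentiable only at the at-most-countably-many atoms of $F$) then yields $Q^{*'}(y) = G'(y)/(1-F(Q^*(y))) = (F^-(y)-Q^*(y))/(1-F(Q^*(y)))$ on a full-measure set. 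The main obstacle I expect is closing the circular dependence underlying this last step: the Lipschitz bound granting absolute continuity of $Q^*$ needs $Q^*(1)<1$, which follows from strict positivity $Q^* > 0$ on $(0,1]$, which in turn must be extracted from the scalar $G$-equation alone before any regularity of $Q^*$ is in hand. The delicate ingredient is the contradiction argument above that uses the universal bound $G\leq \Psi(F^-)$ to rule out downcrossings of zero.
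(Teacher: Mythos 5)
Your approach is genuinely different from the paper's and, modulo two small slips, it works. The paper treats the ODE $Q^{*\prime} = (F^--Q^*)/(1-F(Q^*))$ head-on: because $x\mapsto F(x)$ in the denominator is discontinuous, it invokes a generalized Carath\'eodory theorem of Biles for quasi-semicontinuous right-hand sides, applied to a truncated/regularized version $g_\alpha$ of the vector field, and then bounds the resulting solution below $F^-$ by comparison with an explicitly solvable separable auxiliary ODE parameterized by $n$, finally sending $n\to\infty$. Your key move is to integrate once and substitute $G = \Psi\circ Q^* = V - \int_0^\cdot Q^*$, which turns the problem into $G'(y) = F^-(y) - \Psi^{-1}(G(y))$ with $G(0)=0$. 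Since $\Psi=\int_0^\cdot(1-F)$ is continuous even when $F$ is not, $\Psi^{-1}$ is continuous and the classical Carath\'eodory theorem applies; the discontinuity has been ``absorbed'' into the antiderivative. This sidesteps the Biles machinery and the $g_\alpha$/$\bar g_n$ comparison entirely, replacing them with the (correct) Young/Legendre identity $\Psi(F^-(y)) = V(y) + (1-y)F^-(y)$ as the single bounding device. That trade is a real simplification.

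Two points need attention. First, the continuous extension of $\Psi^{-1}$ by the constant $1$ on $(\Psi(1),\infty)$ is \emph{not} continuous when $F^-(1)<1$: the left limit at $\Psi(1)$ is $F^-(1)$, not $1$, since $\Psi$ is flat on $[F^-(1),1]$. You should extend by $F^-(1)$ instead; this costs nothing because you ultimately show $G\le V\le \Psi(F^-)\le\Psi(1)$, so the extension is never touched. Second, the sentence ``if $G(y_0)=0$ for some $y_0>0$, then $G$ must be negative on a preceding open set'' contradicts the nonnegativity of $G$ you just established. The intended argument is salvageable and can be tightened: from $\Psi^{-1}(G)\ge 0$ you get $G\le V\le \Psi(F^-)$ directly (no sign of $G$ needed), hence $\Psi^{-1}(G)\le F^-$, hence $G'=F^--\Psi^{-1}(G)\ge 0$ a.e., so $G$ is \emph{non-decreasing}; then $G(y_0)=0$ for some $y_0>0$ forces $G\equiv 0$ on $[0,y_0]$, hence $F^-(y)=\Psi^{-1}(0)=0$ for a.e.\ $y\in[0,y_0]$, contradicting $F(0)=0$. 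With that rewording, the remainder---strict domination $Q^*<F^-$ from $(1-y)F^-(y)>0$ and $\int_0^1 Q^*>0$, Lipschitzness of $\Psi^{-1}$ on $[0,G(1)]$ from $F(Q^*(1))<1$, absolute continuity of $Q^*$ as a Lipschitz-of-AC composition, and recovery of the pointwise ODE via the chain rule off the countable atom preimage---all go through, and your closing worry about circularity is unfounded precisely because every structural fact is extracted from $G$ and the identity before any regularity of $Q^*$ is invoked.
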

\Cref{lemma:ode-existence} shows the existence of a quantile-based bidding strategy $Q^*$ which satisfies the first order condition \eqref{eq:first-ode} almost surely; it will serve as our candidate optimal quantile-based bidding strategy. 
\Cref{lemma:ode-existence} is one of our primary technical results; we provide an informal sketch of it here. 
Consider the ODE in \eqref{eq:first-ode} written in standard form,
\begin{align}
    \label{eq:main-first-ode-rewrite}
    x'(t) = g(t,x(t)) \quad \text{where}\quad g(t,x) \coloneqq \frac{F^-(t) - x}{1 - F(x)}\,, && \forall \,(t,x) 
    \in [0,1] \times [0,F^{-}(1))\,,
\end{align}
with the initial condition $x(0) = 0$. The denominator of $g$ raises the first challenge: as $x$ approaches $F^-(1)$, the denominator diverges to infinity and it is undefined beyond that point.
To the best of our knowledge, no existence theorems directly apply to such a setting. To address it, we leverage our first intuitive observation: whenever $x(t)$ approaches the $F^-(t)$ curve, the numerator of $g$ goes to zero and reduces its rate of approach $x'(t)$. This motivates us to conjecture the existence of a solution with $x(0) =0$ and $x(t) < F^-(t)$ for all $t > 0$. To prove it, we use a constant $\alpha > 0$ to modify $g$ and bound it from above:
\begin{equation*}
    x'(t) = g_\alpha(t,x(t)) \quad \text{where}\quad g_\alpha(t, x) = \frac{F^-(t) - x}{1 - F(\min\{x, \alpha\})}.
\end{equation*}
Observe that if, for some constant $\alpha > 0$, we are able to show the existence of a solution for this modified ODE with the property that $x(t) \leq \alpha$ for all $t \in [0,1]$, then such a solution would also be a solution to \eqref{eq:main-first-ode-rewrite}. We first establish the existence of a solution for this modified ODE. We remark that this result does not follow from standard existence theorems. Not only is the function $g_{\alpha}$ not guaranteed to be continuous, thereby ruling out the standard Peano's existence theorem, but even its restriction $x \mapsto g_\alpha(t,x)$ may be discontinuous for \emph{every} $t \in [0,1]$, thereby prima facie also ruling out Caratheodary's existence theorem. To get around this difficulty, we employ the existence theorem by \citet{biles1997caratheodory} and \citet{biles2000solvability} for discontinuous ODE which satisfy quasi-semicontinuity. It is a generalization of Caratheodary's theorem which relies on weaker continuity conditions that are satisfied by our use case.

With a solution $x_\alpha(t)$ in hand, we move onto the task of bounding it with $\alpha$. In fact we prove the stronger condition that $x_\alpha(t) < F^-(t)$ for all $t > 0$. To do so, we introduce the auxiliary ODE:
\begin{align*}
    x'(t) = \bar g(t,x(t)) \quad \text{where}\quad \bar g(t,x) \coloneqq \frac{F^-(t) \cdot (1 -x)}{1 - F(x)}\,, && \forall \,(t,x) 
    \in [0,1] \times [0,F^{-}(1))\,.
\end{align*}

Crucially, note that (i) this ODE is separable, which allows us to explicitly construct a solution $\bar x(t)$ for it, and (ii) it satisfies $\bar g(t, x) \geq g(t,x)$, which allows us to use its solution to bound the solution of the modified ODE, i.e., $x_\alpha(t) \leq \bar x(t)$. Therefore, to prove $x(t) < F^-(t)$, it suffices to show that $\bar x(t) < F^-(t)$. Intuitively, to see why this is the case, write the ODE in separable form
\begin{align*}
    \frac{1 - F(x)}{1 -x} \cdot dx\ =\ F^-(t) \cdot dt\,.
\end{align*}
Therefore, if $\bar x(z) =  F^-(z)$ for some $z > 0$, then
\begin{align*}
    \int_0^{F^-(z)} (1 - F(x)) dx\ <\ \int_0^{F^-(z)} \frac{1 - F(x)}{1 - x} dx\ =\ \int_0^{\bar x(z)} \frac{1 - F(x)}{1 - x} dx\ =\  \int_0^z F^-(t) dt\,.
\end{align*}
Integration by parts reveals the left-most term to be larger than $\E_F[v \mathbbm{1}(v \leq F^-(z))]$, which is no smaller than the right-most term. Thus, we have a contradiction and $\bar x(z) <  F^-(z)$ for all $z > 0$, as required to establish \Cref{lemma:ode-existence}. The full proof has to contend with subtle technical nuances which we have skipped here; refer to Appendix~\ref{appendix:minimax-lemmas} for details.

Next, we show that the quantile-based bidding strategy $Q^*$ described in Lemma~\ref{lemma:ode-existence} does indeed imply that all $y \in [0,1]$ are optimal solutions for the inner-maximization problem \eqref{eq:inner_max}.

\begin{lemma}\label{lemma:saddle-max-prob}
	Consider the quantile-based bidding strategy $Q^*$ defined in Lemma~\ref{lemma:ode-existence}. Then, for all $y \in [0,1]$, we have
	\begin{align*}
		\int_{Q^*(y)}^1 (1 - F(t)) dt\ -\ \int_y^1 (F^-(t) - Q^*(t)) dt\ =\ \int_0^1 Q^*(t) dt\,.
	\end{align*} 
\end{lemma}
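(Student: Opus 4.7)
The plan is to show that the left-hand side, viewed as a function of $y$, is constant, and then to identify the constant by evaluating at $y=0$. Define
\[
\phi(y) \coloneqq \int_{Q^*(y)}^1 (1-F(t))\,dt - \int_y^1 (F^-(t) - Q^*(t))\,dt.
\]
First I would argue that $\phi$ is absolutely continuous in $y$. The map $u \mapsto \int_u^1 (1-F(t))\,dt$ is $1$-Lipschitz, and $Q^*$ is absolutely continuous by Lemma~\ref{lemma:ode-existence}, so the composition is absolutely continuous; the second term is Lipschitz in $y$ because $F^-$ and $Q^*$ are bounded.

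Next I would differentiate term by term. For the composition I invoke the chain rule almost everywhere, which is justified as follows: $1-F$ is continuous except on the (at most) countable set $D$ of atoms of $F$, and because Lemma~\ref{lemma:ode-existence} guarantees that $Q^*$ is \emph{strictly} increasing, the set $\{y : Q^*(y) \in D\}$ is at most countable, hence Lebesgue null. Combining this with the fact that $Q^{*\prime}$ exists on a set of full measure yields, for almost every $y$,
\[
\phi'(y) = -\big(1-F(Q^*(y))\big)\,Q^{*\prime}(y) + (F^-(y) - Q^*(y)).
\]
Substituting the ODE $Q^{*\prime}(y) = (F^-(y) - Q^*(y))/(1 - F(Q^*(y)))$ from the first-order-optimality part of Lemma~\ref{lemma:ode-existence}, valid on a set of measure one, gives $\phi'(y) = 0$ almost everywhere. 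Absolute continuity then forces $\phi$ to be constant on $[0,1]$, so $\phi(y) \equiv \phi(0)$.

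Finally, I would compute $\phi(0)$. Using the boundary condition $Q^*(0) = 0$ from Lemma~\ref{lemma:ode-existence},
\[
\phi(0) = \int_0^1 (1-F(t))\,dt - \int_0^1 F^-(t)\,dt + \int_0^1 Q^*(t)\,dt.
\]
The standard identity $\int_0^1 (1-F(t))\,dt = \mathbb{E}_{v \sim F}[v] = \int_0^1 F^-(t)\,dt$ cancels the first two terms, leaving $\phi(0) = \int_0^1 Q^*(t)\,dt$. This is exactly the claimed right-hand side, so $\phi(y) = \int_0^1 Q^*(t)\,dt$ for all $y \in [0,1]$. The only mildly delicate step is justifying the chain rule, which is handled cleanly by the strict monotonicity in Lemma~\ref{lemma:ode-existence}; I do not anticipate any substantive obstacle beyond careful bookkeeping.
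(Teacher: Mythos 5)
Your proof is correct and follows essentially the same route as the paper's: establish absolute continuity of $\phi$, show $\phi'(y)=0$ almost everywhere via the chain rule and the ODE from Lemma~\ref{lemma:ode-existence}, and evaluate the constant at $y=0$ using $\int_0^1(1-F(t))\,dt=\int_0^1 F^-(t)\,dt$. Your justification of the chain rule via countability of $\{y:Q^*(y)\in D\}$ (using strict monotonicity) is a slightly leaner version of the paper's argument, which additionally imposes continuity of $F^-$ at $y$, but the substance is the same.
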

\Cref{lemma:saddle-max-prob} concludes our construction of the candidate quantile-based bidding strategy $Q^*$. It shows that the bidding strategy defined in \Cref{lemma:ode-existence} makes nature indifferent between any highest competing bid in the interval $[0,Q^*(1)]$. As a consequence, any highest-competing-bid distribution on $[0,Q^*(1)]$ is a best-response against $Q^*$. To conclude our saddle-point construction, we construct one such particular highest-competing-bid distribution $H^*$ which in turn admits our quantile-based bidding strategy $Q^*$ as a best response.

\subsubsection{Candidate highest-competing-bid distribution}\label{sec:construct_H}

We start our construction of a candidate highest-competing-bid distribution $H^*$ by looking at the first-order optimality conditions which arise when the
buyer chooses the optimal bid $b$ for each value $v$. In contrast to the previous section, where we encountered a single first-order optimality condition, here we face an uncountable family of such conditions, one for each value, which we combine into a single ordinary differential equation (ODE) via $Q^*$.
 
Consider an absolutely continuous highest-competing-bid distribution $H$. For a fixed value $v \in [0,1]$, the expected utility of the buyer as a function of the bid $b$ satisfies
\begin{align*}
	u(b \mid v,H)\ \coloneqq\ \E_{h \sim H}[(v - b) \cdot \mathbbm{1} \{ b \geq h\}]\ =\ (v-b) \cdot H(b)\,.
\end{align*}

If $b^*(v) \in \argmax_{b \in [0,1]} u(b \mid v,H)$ is an optimal bid for value $v$, and $u(\cdot \mid v,H)$ is differentiable at $b^*(v)$, then it must satisfy
\begin{align*}
	u'(b^*(v) \mid v,H)\ =\ 0 \quad \iff \quad (v- b^*(v)) \cdot H'(b^*(v))\ -\ H(b^*(v))\ =\ 0\,.
\end{align*}
As we are seeking a distribution $H$ for which our quantile-based bidding strategy $Q^*$ is optimal, the bid $b^*(v) = Q^*(y)$ must satisfy the first-order optimality condition whenever $v = F^-(y)$. This yields the ordinary differential equation
\begin{align}\label{eq:second-ode}
	(F^-(y) - Q^*(y)) \cdot H'(Q^*(y))\ -\ H(Q^*(y))\ =\ 0\,.
\end{align}
Unlike the previous ODE \eqref{eq:first-ode}, the ODE in \eqref{eq:second-ode} is separable. We leverage this fact to characterize a solution in the following lemma. 
\begin{lemma}\label{lemma:second-ode}
	Consider the quantile-based bidding strategy $Q^*$ defined in Lemma~\ref{lemma:ode-existence}, and define\\ $H^*:[0,Q^*(1)] \to [0,1]$ as
	\begin{align*}
		H^*(x)\ \coloneqq\ \exp\left( - \int_{(Q^*)^{-1}(x)}^1 \frac{1}{1 - F(Q^*(t))} dt \right)\,.
	\end{align*}
	Then, $G \coloneqq H^* \circ Q^*$ is absolutely continuous and satisfies
	\begin{align*}
		G'(y)\ =\ G(y) \cdot \frac{1}{1 - F(Q^*(y))}\ =\ G(y) \cdot \frac{(Q^*)'(y)}{F^-(y) - Q^*(y)}\,
	\end{align*}
	almost surely for $y \in [0,1]$.
\end{lemma}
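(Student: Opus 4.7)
The plan is to exploit strict monotonicity of $Q^*$ to put $G$ in closed form, establish that the integrand in the definition of $H^*$ is uniformly bounded, and then combine the Lebesgue fundamental theorem of calculus with the chain rule for the exponential.

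\textbf{Rewriting $G$.} By parts 2--3 of Lemma~\ref{lemma:ode-existence}, $Q^*$ is strictly increasing on $[0,1]$ with $Q^*(0)=0$, so $(Q^*)^{-1}$ is well-defined on $[0,Q^*(1)]$ and satisfies $(Q^*)^{-1}(Q^*(y)) = y$ for every $y \in [0,1]$. Substituting into the definition of $H^*$ yields
\[
    G(y) \;=\; H^*(Q^*(y)) \;=\; \exp\!\left(-\int_y^1 \frac{dt}{1 - F(Q^*(t))}\right),
\]
reducing the task to differentiating this expression.

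\textbf{Boundedness and differentiation.} The crucial observation is that the integrand is uniformly bounded on $[0,1]$. Part 3 of Lemma~\ref{lemma:ode-existence} gives $Q^*(1) < F^-(1)$, and the definition $F^-(1) = \inf\{x : F(x) \geq 1\}$ forces $F(x) < 1$ for every $x < F^-(1)$; hence $F(Q^*(1)) < 1$, and by monotonicity of $F \circ Q^*$,
\[
    0 \;<\; 1 - F(Q^*(1)) \;\leq\; 1 - F(Q^*(t)) \qquad \forall\, t \in [0,1].
\]
Thus $\varphi(y) \coloneqq -\int_y^1 (1 - F(Q^*(t)))^{-1}\, dt$ is Lipschitz, and the Lebesgue fundamental theorem of calculus gives $\varphi'(y) = (1 - F(Q^*(y)))^{-1}$ almost everywhere. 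Since $\exp$ is $C^1$ and $\varphi$ has bounded range, $G = \exp \circ \varphi$ is absolutely continuous and the chain rule yields $G'(y) = G(y)\,(1 - F(Q^*(y)))^{-1}$ almost everywhere. The second equality in the statement then follows by substituting the identity $(1 - F(Q^*(y)))^{-1} = (Q^*)'(y)/(F^-(y) - Q^*(y))$, which is the ODE from part 1 of Lemma~\ref{lemma:ode-existence} and holds on a full-measure set (the denominator being strictly positive on $(0,1]$ by the strict boundedness property).

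\textbf{Main obstacle.} The proof is mechanical once Lemma~\ref{lemma:ode-existence} is in hand; the only point requiring vigilance is that the integrand defining $H^*$ could a priori blow up near $t=1$ if $F \circ Q^*$ approached $1$. The strict inequality $Q^*(1) < F^-(1)$---the hardest output of the preceding ODE analysis---is precisely what rules this out, making both the Lipschitz bound on $\varphi$ and the chain rule for $\exp$ go through without further integrability concerns.
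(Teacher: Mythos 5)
Your proof is correct and follows essentially the same route as the paper's: both use $Q^*(1) < F^-(1)$ (from part 3 of Lemma~\ref{lemma:ode-existence}) to bound the integrand, conclude $G$ is Lipschitz and hence absolutely continuous, then apply the chain rule for $\exp$ together with the fundamental theorem of calculus, and finally substitute the ODE from part 1 on a full-measure set. The only cosmetic difference is that the paper exhibits an explicit full-measure set $\Gamma$ of differentiability points (those where $F\circ Q^*$ is continuous and $Q^*$ is differentiable), whereas you invoke Lebesgue differentiation directly; both are valid and land at the same conclusion.
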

\Cref{lemma:second-ode} constructs a solution to ODE \eqref{eq:second-ode}. Our construction is obtained by solving ODE \eqref{eq:second-ode} with the initial condition $H(Q^*(1)) = 1$, and moving backwards towards $0$. The choice of this initial condition is designed to obtain a distribution $H^*$ which is supported on $[0,Q^*(1)]$. We also note that, in \Cref{lemma:second-ode}, we express the ODE in terms of $G$, as opposed to $H$. This characterization is more convenient because defining $H^*$ involves the function $(Q^*)^{-1}$, for which we have not established absolute continuity, nor has its derivative been characterized. 
We next show that the characterization of $G$ and $G'$ obtained in \Cref{lemma:second-ode} is sufficient to prove that the quantile-based bidding strategy $Q^*$ is optimal against the candidate highest-competing-bid distribution $H^*$.

\begin{lemma}\label{lemma:bid-strat-optimality}
	Consider the quantile-based bidding strategy from \Cref{lemma:ode-existence} and the highest-competing-bid distribution $H^*$ from \Cref{lemma:second-ode}; let $G = H^* \circ Q^*$. Then, for every value $v \in [0,1]$, 
    we have that
	\begin{align*}
		\{Q^*(y) \mid y \in[0,1], \, F^-(y) = v\}\ \subseteq\ \argmax_{b \in [0,1]}\ u(b \mid v, H^*)\,.
	\end{align*} 
\end{lemma}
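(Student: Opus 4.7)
The plan is to maximize the pointwise utility $b \mapsto u(b \mid v, H^*) = (v-b) H^*(b)$ over $b \in [0,1]$ and to show that every $Q^*(y)$ with $F^-(y) = v$ attains this maximum. First I would extend $H^*$ from $[0,Q^*(1)]$ to all of $[0,1]$ by setting $H^*(b) = 1$ for $b > Q^*(1)$, which is the natural CDF extension since $H^*(Q^*(1)) = 1$. For any $b > Q^*(1)$ this gives $u(b \mid v, H^*) = v - b \leq v - Q^*(1) = u(Q^*(1) \mid v, H^*)$, so it suffices to restrict attention to $b \in [0, Q^*(1)]$. By the strict monotonicity and continuity of $Q^*$ guaranteed in \Cref{lemma:ode-existence}, every such $b$ can be written uniquely as $Q^*(z)$ for some $z \in [0,1]$, and the problem reduces to showing that
\begin{equation*}
f(z)\ \coloneqq\ u(Q^*(z) \mid v, H^*)\ =\ (v - Q^*(z))\, G(z)
\end{equation*}
is maximized at any $z \in [0,1]$ satisfying $F^-(z) = v$.

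The main computation is the derivative of $f$. Since $Q^*$ and $G$ are both absolutely continuous and bounded on $[0,1]$ (by \Cref{lemma:ode-existence} and \Cref{lemma:second-ode}), so is $f$. Applying the product rule almost everywhere together with the identity $G'(z) = G(z)\, Q^{*\prime}(z) / (F^-(z) - Q^*(z))$ from \Cref{lemma:second-ode} gives
\begin{equation*}
	f'(z)\ =\ -Q^{*\prime}(z) G(z)\ +\ (v - Q^*(z))\, G'(z)\ =\ G(z)\, Q^{*\prime}(z) \cdot \frac{v - F^-(z)}{F^-(z) - Q^*(z)}
\end{equation*}
for almost every $z \in (0,1]$. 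The denominator is strictly positive by the Strict Boundedness clause of \Cref{lemma:ode-existence}, while $G$ and $Q^{*\prime}$ are non-negative, so the sign of $f'(z)$ agrees with the sign of $v - F^-(z)$ almost everywhere.

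I would then conclude using the fundamental theorem of calculus for absolutely continuous functions together with the monotonicity of $F^-$. Fix any $y$ with $F^-(y) = v$. For $z \leq y$, monotonicity gives $F^-(t) \leq F^-(y) = v$ on $[z,y]$, so the integrand in $f(y) - f(z) = \int_z^y f'(t)\, dt$ is non-negative and $f(y) \geq f(z)$. For $z \geq y$ the symmetric argument with $F^-(t) \geq v$ on $[y,z]$ yields $f(z) - f(y) \leq 0$. Hence $f(y)$ is a global maximum of $f$ on $[0,1]$, which, combined with the reduction of the first step, establishes that $Q^*(y) \in \argmax_{b \in [0,1]} u(b \mid v, H^*)$ for every such $y$. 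The main subtlety to watch for is that the characterization of $G'$ in \Cref{lemma:second-ode} only holds almost everywhere, so the argument must be carried out at the level of integrals (using absolute continuity of $f$) rather than a naive pointwise sign analysis.
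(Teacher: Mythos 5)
Your proposal is correct and follows essentially the same path as the paper: reduce to $b \in [0,Q^*(1)]$ using $H^*(Q^*(1)) = 1$, reparametrize via $b = Q^*(z)$, compute $f'(z)$ almost everywhere from the product rule and the ODE for $G$ in Lemma~\ref{lemma:second-ode}, read off that $\operatorname{sgn} f'(z) = \operatorname{sgn}(v - F^-(z))$, and conclude via the fundamental theorem of calculus and monotonicity of $F^-$. The only cosmetic differences are that you use the second form of $G'$ from Lemma~\ref{lemma:second-ode} (yielding $f'(z) = G(z) Q^{*\prime}(z)(v - F^-(z))/(F^-(z)-Q^*(z))$) whereas the paper uses the first form to arrive at $(v - F^-(y)) G(y)/(1 - F(Q^*(y)))$, and your monotonicity step argues directly from $F^-$ being non-decreasing rather than explicitly introducing the endpoints $\inf Y(v)$ and $\sup Y(v)$ of the interval $Y(v)$; both are immaterial variations on the same argument.
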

For every value $v \in [0,1]$, \Cref{lemma:bid-strat-optimality} establishes that \emph{every} bid in the support of $\s[Q^*](v)$---which is the bid distribution for value $v$ prescribed by the quantile-based bidding strategy $Q^*$---is optimal against the highest-competing-bid distribution $H^*$ defined in \Cref{lemma:second-ode}. In other words, having shown that any distribution on $[0,Q^*(1)]$ produces the same regret against $Q^*$, we have now constructed a distribution $H^*$ on $[0,Q^*(1)]$ which admits $Q^*$ as a best response; we are now ready to establish our saddle-point result.

\subsection{Putting it All Together}\label{sec:main-result}

With all of the ingredients assembled in the previous sections, we can now combine everything together to prove our main result: characterization of a saddle point for the problem of minimizing worst-case regret.

\begin{theorem}\label{thm:main-result}
	 For every value distribution $F \in \Delta([0,1])$ with $F(0) = 0$, there exists
	 \begin{itemize}
	 	\item a quantile-based bidding strategy $Q^*: [0,1] \to [0,1]$ which is a solution to the initial value problem
	 		\begin{align}
	 			x'(t)\ =\ \frac{F^-(t) - x(t)}{1 - F(x(t))}; \qquad x(0) = 0\; \label{eq:ODE_theorem}
	 		\end{align}
	 		and satisfies the properties outlined in \Cref{lemma:ode-existence}; and
	 	\item a distribution $H^*$ of highest competing bids defined for every $x \in [0,Q^*(1)] \subseteq [0,1]$ as
	 		\begin{align*}
				H^*(x)\ \coloneqq\ \exp\left( - \int_{(Q^*)^{-1}(x)}^1 \frac{1}{1 - F(Q^*(t))} dt \right)\,,
			\end{align*}
	 \end{itemize}
	 such that the pair $(\s[Q^*], H^*)$, where $\s[Q^*]$ is the bidding strategy corresponding to $Q^*$ (see \Cref{def:quantile-based-bidding-strategy}), satisfies,
     \begin{equation*}
         \inf_{s \in \Scal}\ \sup_{H \in \Delta([0,1])} \rr[F]{s, H} = \rr[F]{\s[Q^*], H^*} = \int_0^1 Q^*(t) dt.
     \end{equation*}
\end{theorem}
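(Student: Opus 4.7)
The plan is to verify that $(\s[Q^*], H^*)$ is a saddle point of $(s, H) \mapsto \rr[F]{s, H}$ on $\Scal \times \Delta([0,1])$; the minimax identity then follows from the standard saddle-point chain. Before establishing the two saddle-point inequalities, I note that \Cref{lemma:quantile-based-bidding} yields $P_{\s[Q^*], F} = \{Q^*(t) \mid t \sim \unif(0,1)\}$, and since $Q^*$ is a strictly-increasing, absolutely-continuous bijection onto $[0, Q^*(1)]$, this induced bid distribution is absolutely continuous. Hence \Cref{thm:full-info} applies to $\s[Q^*]$, and I may work with the full-information benchmark throughout.

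The first inequality---$\rr[F]{\s[Q^*], H^*} \leq \rr[F]{s, H^*}$ for all $s \in \Scal$---is an immediate consequence of \Cref{lemma:bid-strat-optimality}. That lemma asserts that, for every value $v \in [0,1]$, every bid in the support of $\s[Q^*](v)$ maximizes $b \mapsto u(b \mid v, H^*)$. Integrating this pointwise optimality against $F$ yields $\U[F]{\s[Q^*], H^*} = \sup_{s \in \Scal} \U[F]{s, H^*}$, and subtracting from the common benchmark $\O_F(H^*)$ delivers the inequality.

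The second inequality---$\rr[F]{\s[Q^*], H^*} \geq \rr[F]{\s[Q^*], H}$ for every $H$---requires understanding the map $h \mapsto \rr[F]{\s[Q^*], h}$. Combining the identity $\O_F(h) = \int_h^1 (1 - F(t))\,dt$ with \Cref{lemma:quantile-based-bidding}, \Cref{lemma:saddle-max-prob} recasts as $\rr[F]{\s[Q^*], Q^*(y)} = \int_0^1 Q^*(t)\,dt$ for every $y \in [0,1]$; since $Q^*$ maps $[0,1]$ continuously onto $[0, Q^*(1)]$, the regret thus equals the constant $\int_0^1 Q^*(t)\,dt$ on the whole interval $[0, Q^*(1)]$. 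For $h > Q^*(1)$, $\s[Q^*]$ never outbids $h$, so utility vanishes and $\rr[F]{\s[Q^*], h} = \int_h^1 (1 - F(t))\,dt$, which is non-increasing in $h$ and bounded above by $\int_{Q^*(1)}^1 (1 - F(t))\,dt = \int_0^1 Q^*(t)\,dt$ (by \Cref{lemma:saddle-max-prob} at $y = 1$). Hence $\sup_{h \in [0,1]} \rr[F]{\s[Q^*], h} = \int_0^1 Q^*(t)\,dt$. Since $H^*(Q^*(1)) = 1$ by construction, $H^*$ is supported on $[0, Q^*(1)]$, where the regret is constant; Fubini's theorem then gives $\rr[F]{\s[Q^*], H^*} = \int_0^1 Q^*(t)\,dt \geq \rr[F]{\s[Q^*], H}$ for every $H \in \Delta([0,1])$.

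Assembling the two inequalities into the standard chain
\[
\inf_{s \in \Scal} \sup_{H} \rr[F]{s, H} \;\leq\; \sup_{H} \rr[F]{\s[Q^*], H} \;=\; \rr[F]{\s[Q^*], H^*} \;\leq\; \inf_{s \in \Scal} \rr[F]{s, H^*} \;\leq\; \inf_{s \in \Scal} \sup_{H} \rr[F]{s, H}
\]
collapses everything into equalities and delivers the advertised minimax value $\int_0^1 Q^*(t)\,dt$. Essentially all the technical difficulty has already been packaged into the preceding lemmas---most notably \Cref{lemma:ode-existence} for the construction of $Q^*$ and \Cref{lemma:bid-strat-optimality} for the pointwise best-response property---so the main residual care in the present argument is confined to two points: handling the boundary regime $h > Q^*(1)$ (cleanly dispatched by \Cref{lemma:saddle-max-prob} at $y = 1$), and verifying the absolute continuity of the bid distribution induced by $\s[Q^*]$ needed to legitimately invoke \Cref{thm:full-info}.
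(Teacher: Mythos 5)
Your proposal is correct and follows essentially the same route as the paper: the first saddle-point inequality from \Cref{lemma:bid-strat-optimality}, the second from \Cref{lemma:saddle-max-prob} together with \Cref{lemma:quantile-based-bidding} and the observation that $H^*$ is supported on $[0,Q^*(1)]$, then the standard saddle-point chain. Two minor remarks: (i) the invocation of \Cref{thm:full-info} is superfluous here since \Cref{thm:main-result} is stated directly in terms of the full-information regret $\rr[F]{\cdot,\cdot}$---the absolute-continuity check is only needed later in \Cref{cor:partial-info-saddle} to pass to the partial-information benchmark; (ii) when integrating the pointwise optimality from \Cref{lemma:bid-strat-optimality} against $F$, one should note (as the paper does) that $\Prob_{v\sim F}(Y(v)=\emptyset)=0$, so that the lemma's conclusion, which is vacuous when $Y(v)=\emptyset$, still covers $F$-almost every value.
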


\Cref{thm:main-result} and \Cref{thm:full-info} together yield a saddle point for regret against the original partial-information benchmark, as the following corollary notes. And, in turn, this fact immediately implies the minimax optimality of the bidding strategy $\s[Q^*]$.
\begin{corollary}\label{cor:partial-info-saddle}
	The pair $(s_{Q^*}, \mathcal H^*)$, where $\mathcal{H}^*$ is the distribution of $\delta_h$ when $h \sim H^*$, satisfies
    \begin{align*}
        \inf_{s \in \Scal}\ \sup_{H \in \Delta([0,1])}\ \reg_F(s, H)\ =\ \reg_F(\s[Q^*], \mathcal H^*)\ =\   \int_0^1 Q^*(t)dt\,,
    \end{align*}
    with $\reg_F(s, \mathcal H^*) \coloneqq \E_{H \sim \mathcal{H^*}}[\reg_F(s,H)]$. In particular, $\s[Q^*]$ is a minimax-optimal bidding strategy.
\end{corollary}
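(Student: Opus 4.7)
The proof is a classical saddle-point-plus-minimax argument: it suffices to verify that $(\s[Q^*], H^*)$ is a saddle point of $\rr[F]{\cdot,\cdot}$ attaining common value $\int_0^1 Q^*(t)\,dt$, i.e.,
\begin{equation*}
	\rr[F]{\s[Q^*], H}\ \leq\ \rr[F]{\s[Q^*], H^*}\ \leq\ \rr[F]{s, H^*} \qquad \forall\, s \in \Scal,\ H \in \Delta([0,1]).
\end{equation*}
Once this is established, the weak-duality chain
$\inf_s \sup_H \rr[F]{s,H} \leq \sup_H \rr[F]{\s[Q^*],H} = \rr[F]{\s[Q^*], H^*} = \inf_s \rr[F]{s, H^*} \leq \inf_s \sup_H \rr[F]{s,H}$
collapses into an equality and pins down the minimax value as $\rr[F]{\s[Q^*], H^*}$.

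\textbf{Right inequality ($\s[Q^*]$ best-responds to $H^*$).} Subtracting $\O_F(H^*)$ from both sides reduces the statement to $\U[F]{\s[Q^*], H^*} \geq \U[F]{s, H^*}$ for all $s \in \Scal$. Write this expected utility as a double expectation, first over the value $v \sim F$ and then over the buyer's randomness. Lemma~\ref{lemma:bid-strat-optimality} shows that for every $v \in [0,1]$, every bid in the support of $\s[Q^*](v)$ belongs to $\argmax_b u(b \mid v, H^*)$. Hence the inner expectation is equal to the pointwise maximum $\max_b u(b\mid v, H^*)$ for $F$-almost every $v$, which dominates the corresponding inner expectation under any other $s$. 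Taking expectation over $v$ yields the desired inequality.

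\textbf{Left inequality ($H^*$ best-responds to $\s[Q^*]$) and value computation.} Use the identity $\rr[F]{\s[Q^*], H} = \E_{h\sim H}[\rr[F]{\s[Q^*], h}]$ to reduce the claim to showing that $h \mapsto \rr[F]{\s[Q^*], h}$ attains its supremum on $[0,Q^*(1)]$, which is the support of $H^*$, and equals the constant $\int_0^1 Q^*(t)\,dt$ there. For $h = Q^*(y)$ with $y\in[0,1]$, Lemma~\ref{lemma:saddle-max-prob} gives exactly this constant value; since $Q^*$ is continuous, strictly increasing with $Q^*(0)=0$ (Lemma~\ref{lemma:ode-existence}), every $h \in [0,Q^*(1)]$ is of this form, so $\rr[F]{\s[Q^*], h} = \int_0^1 Q^*(t)\,dt$ on the support of $H^*$. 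Integrating against $H^*$ yields $\rr[F]{\s[Q^*], H^*} = \int_0^1 Q^*(t)\,dt$.

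\textbf{The only delicate part} is confirming that the supremum is not exceeded on $(Q^*(1), 1]$. There, no bid of $\s[Q^*]$ can win (all bids lie in $[0, Q^*(1)]$), so $\U[F]{\s[Q^*], h}=0$ and $\rr[F]{\s[Q^*], h} = \O_F(h) = \int_h^1 (1-F(t))\,dt$, which is non-increasing in $h$ and at $h = Q^*(1)$ equals $\int_{Q^*(1)}^1 (1-F(t))\,dt$; Lemma~\ref{lemma:quantile-based-bidding} at $y=1$ shows $\U[F]{\s[Q^*], Q^*(1)} = 0$, so this matches $\rr[F]{\s[Q^*], Q^*(1)} = \int_0^1 Q^*(t)\,dt$. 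Thus $\rr[F]{\s[Q^*], h} \leq \int_0^1 Q^*(t)\,dt$ for $h \in (Q^*(1),1]$ as well, closing the argument. The corollary for the partial-information regret then follows by invoking Corollary~\ref{thm:full-info} applied to $\s[Q^*]$, whose induced bid distribution $P_{\s[Q^*],F} = \{Q^*(t)\mid t \sim \unif(0,1)\}$ (Lemma~\ref{lemma:quantile-based-bidding}) is absolutely continuous because $Q^*$ is strictly increasing and absolutely continuous.
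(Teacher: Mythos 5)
Your proposal re-derives, from scratch, the saddle point of $\rr[F]{\cdot,\cdot}$ with value $\int_0^1 Q^*(t)\,dt$---but that is exactly Theorem~\ref{thm:main-result}, which this corollary is supposed to use as a black box, so the bulk of your argument is unnecessary. The genuine content of the corollary is the transfer from the full-information regret $\rr[F]$ to the partial-information regret $\reg_F$, and there your proof has a gap.

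Concretely, the corollary makes two assertions about $\reg_F$: that $\reg_F(\s[Q^*], \mathcal H^*) = \int_0^1 Q^*(t)\,dt$, and that $\inf_s \sup_H \reg_F(s,H) = \int_0^1 Q^*(t)\,dt$. Your closing line invokes Corollary~\ref{thm:full-info} applied to $\s[Q^*]$, which yields only $\sup_H \reg_F(\s[Q^*],H) = \sup_h \rr[F]{\s[Q^*],h}$, i.e.\ the upper bound $\inf_s \sup_H \reg_F(s,H) \leq \int_0^1 Q^*(t)\,dt$. You do not establish the matching lower bound, nor do you say anything at all about the quantity $\reg_F(\s[Q^*],\mathcal H^*)$. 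The missing ingredient is the elementary observation that the two notions of regret coincide on degenerate competing-bid distributions: for every $s \in \Scal$ and $h \in [0,1]$,
\begin{equation*}
  \reg_F(s,\delta_h) = \E_{v\sim F}[(v-h)\mathbbm{1}(v\geq h)] - \U[F]{s,\delta_h} = \rr[F]{s,h}\,,
\end{equation*}
since an oracle that knows $H=\delta_h$ is the same as one that knows the realization $h$. Taking expectations over $h\sim H^*$ gives $\reg_F(s,\mathcal H^*) = \rr[F]{s,H^*}$ for \emph{every} $s$, which is what is needed: it pins down $\reg_F(\s[Q^*],\mathcal H^*) = \rr[F]{\s[Q^*],H^*} = \int_0^1 Q^*(t)\,dt$, and it supplies the lower bound via $\inf_s \sup_H \reg_F(s,H) \geq \inf_s \reg_F(s,\mathcal H^*) = \inf_s \rr[F]{s,H^*} \geq \rr[F]{\s[Q^*],H^*}$. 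Note that you cannot get this lower bound from Corollary~\ref{thm:full-info} alone, because it only applies to strategies $s$ inducing absolutely continuous bid distributions, whereas the infimum runs over all $s \in \Scal$.
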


\Cref{thm:main-result} establishes our main result. It characterizes the minimax-optimal bidding strategy $\s[Q^*]$ for the full-information benchmark, which we then extend to the partial-information benchmark in \Cref{thm:full-info}. Crucially, \Cref{thm:main-result} provides an efficient procedure for the construction of $Q^*$---and consequently the minimax-optimal bidding strategy $\s[Q^*]$---as a solution to the ordinary differential equation \eqref{eq:ODE_theorem}. The corresponding minimax-optimal regret can then be determined through a simple integral evaluation. This framework drastically reduces the complexity faced by the buyer: instead of contending with an intricate infinite-dimensional minimax optimization problem, she only needs to solve an ordinary differential equation. In addition to computational tractability, our framework offers a wide range of structural insights about minimax-optimal bidding in first-price auctions. For starters, our construction of $\s[Q^*]$ yields a \emph{deterministic} minimax-optimal bidding strategy for continuous value distributions. This fact follows directly from \Cref{def:quantile-based-bidding-strategy} because $F(\{v\}) = 0$ holds for all $v \in [0,1]$ when $F$ has a continuous CDF.

\begin{corollary}\label{cor:deterministic-opt-strat}
    For any value distribution $F$ with a continuous CDF, the minimax-optimal bidding strategy $\s[Q^*]$ is deterministic.
\end{corollary}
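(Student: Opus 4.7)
The plan is to simply unpack Definition~\ref{def:quantile-based-bidding-strategy} in light of the continuity assumption on $F$. By Theorem~\ref{thm:main-result}, we already have an explicit quantile-based bidding strategy $Q^* \in \mathcal{Q}$ whose induced strategy $\s[Q^*]$ is minimax-optimal. So the only task is to verify that, for continuous $F$, the prescription $\s[Q^*](v)$ given in Definition~\ref{def:quantile-based-bidding-strategy} collapses to a Dirac mass for every $v \in [0,1]$.

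First, I would observe that if the CDF $F$ is continuous, then $F$ has no atoms, i.e., $F(\{v\}) = 0$ for every $v \in [0,1]$. Indeed, $F(\{v\}) = F(v) - F(v^-)$, and continuity of $F$ at $v$ yields $F(v^-) = F(v)$, so this difference is zero. Combined with Remark~\ref{remark:Y-interval}, which gives $\lambda(Y(v)) = F(\{v\})$, we immediately get $\lambda(Y(v)) = 0$ for every $v \in [0,1]$.

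Next, I would invoke Definition~\ref{def:quantile-based-bidding-strategy} directly: the first case of the definition applies for every $v \in [0,1]$, hence
\begin{equation*}
    \s[Q^*](v)\ =\ \delta_{Q^*(F(v))} \qquad \forall\, v \in [0,1].
\end{equation*}
This is a point-mass distribution at the single bid $Q^*(F(v))$, so $\s[Q^*]$ is deterministic in the sense that it maps each value to a single bid. Combined with the minimax optimality of $\s[Q^*]$ from Corollary~\ref{cor:partial-info-saddle}, this establishes the corollary.

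There is no real technical obstacle here: the corollary is essentially a direct consequence of the definitional dichotomy in Definition~\ref{def:quantile-based-bidding-strategy} between atoms (which require randomization over the quantile interval $Y(v)$) and non-atoms (which admit a unique quantile and hence a deterministic bid). All the heavy lifting—existence and well-behavedness of $Q^*$, and the saddle-point property—has already been done in Lemma~\ref{lemma:ode-existence} and Theorem~\ref{thm:main-result}. The only substantive observation is that continuity of $F$ eliminates the ``atom'' branch of the definition uniformly in $v$.
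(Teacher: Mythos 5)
Your proof is correct and matches the paper's own justification exactly: the paper notes (in the sentence immediately preceding the corollary) that the result ``follows directly from Definition~\ref{def:quantile-based-bidding-strategy} because $F(\{v\}) = 0$ holds for all $v \in [0,1]$ when $F$ has a continuous CDF,'' which is precisely the unpacking you carry out. Your additional detail (spelling out $F(\{v\}) = F(v) - F(v^-) = 0$, invoking Remark~\ref{remark:Y-interval}, and citing Corollary~\ref{cor:partial-info-saddle} for optimality) is sound and only makes explicit what the paper leaves to the reader.
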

When contrasted with \Cref{example:atom-at-1}, \Cref{cor:deterministic-opt-strat} reveals an important insight: even infinitesimal variations in values can make deterministic strategies minimax-optimal. While \Cref{example:atom-at-1} demonstrates the strict sub-optimality of deterministic strategies for distributions with atoms, these can be approximated
\footnote{The notion of approximation can be made precise with the Prokhorov or Wasserstein metrics.} 
arbitrarily well by continuous distributions, all of which admit deterministic minimax-optimal strategies. Thus, any value distribution can be perturbed ever so slightly to yield another one which admits a deterministic minimax-optimal bidding strategy.

\section{Impact of the Value Distribution}\label{sec:value-dist-impact}

Beyond characterizing a minimax-optimal bidding strategy, \Cref{thm:main-result} also quantifies the buyer’s minimax regret, providing a valuable metric for assessing the impact of informational deficiency in bidding environments. In this section, we show that our result facilitates comparisons of worst-case regret across different value distributions $F$, allowing buyers to ascertain whether acquiring information about competing bids could substantially improve their performance.

We first characterize the value distribution which maximizes the minimax regret, i.e., the one for which the buyer is suffering the most from the information deficiency. 
For $\rho \in [1,\infty]$, let $\mathcal{F}_\rho$ denote the set of distributions $D$ on $[0,1]$ such that $D(A) \leq \rho \cdot \lambda(A)$ for all $A \in \borel$. Here, $\mathcal F_\infty$ is simply the set $\Delta([0,1])$ of all distributions on $[0,1]$. Intuitively, $\mathcal F_\rho$ is the set of distributions whose ``density'' is bounded above by $\rho$.
Our goal is to solve, for every $\rho \in [1,\infty]$, the following optimization problem:
\begin{equation*}
    \sup_{F \in \mathcal{F}_{\rho}}\ \inf_{s \in \Scal}\ \sup_{H \in \Delta([0,1])}\ \reg_F(s,H)\,.
\end{equation*}
We note that this max-min-max problem introduces an additional maximization layer over the infinite-dimensional space of value distributions, further complicating the already challenging minimax problem solved in \Cref{sec:minimax}. Nevertheless, our next result shows that the problem remains solvable, allowing us to characterize the worst-case value distribution for every $\rho \in [1,\infty].$ 
\begin{theorem}\label{thm:worst-value-dist}
	For every $\rho \in [1,\infty]$, we have
	\begin{align*}
		\unif(1 - \tfrac{1}{\rho}, 1)\ \in\ \argmax_{F \in \mathcal F_\rho}\ \inf_{s \in \Scal}\ \sup_{H \in \Delta([0,1])}\ \reg_F(s,H)\,,
	\end{align*}
	where $\unif(1 - \tfrac{1}{\infty},1) \coloneqq \delta_1$.
\end{theorem}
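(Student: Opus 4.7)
The plan is to use the characterization from \Cref{thm:main-result} that $\inf_{s \in \Scal}\sup_{H \in \Delta([0,1])} \reg_F(s, H) = \int_0^1 Q^*_F(t)\,dt$ for every $F$, and to show that this quantity is maximized over $F \in \mathcal{F}_\rho$ by $F^* := \unif(1 - 1/\rho, 1)$. The density bound $F \in \mathcal{F}_\rho$ forces the quantile function to satisfy $F^-(t_2) - F^-(t_1) \geq (t_2 - t_1)/\rho$ for $t_2 \geq t_1$, and combined with $F^-(1) \leq 1$ this yields $F^-(t) \leq F^{*-}(t) = 1 - 1/\rho + t/\rho$ for every $t \in [0,1]$---equivalently, $F(x) \geq F^*(x)$ pointwise. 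To upper bound $\inf_s \sup_H \reg_F(s, H)$, I would apply the quantile-based strategy $\s[Q^*_{F^*}]$ derived from $F^*$'s minimax-optimal solution to a buyer with value distribution $F$; this gives $\inf_s \sup_H \reg_F(s, H) \leq \mathrm{WReg}_F(\s[Q^*_{F^*}])$, and \Cref{thm:full-info} reduces the right-hand side to $\sup_{h \in [0,1]} \rr[F]{\s[Q^*_{F^*}], h}$. For $h > Q^*_{F^*}(1)$ the buyer under $\s[Q^*_{F^*}]$ never wins, so the regret equals $\int_h^1 (1 - F(t))\,dt$, which by the stochastic dominance is bounded by $\int_{Q^*_{F^*}(1)}^1 (1 - F^*(t))\,dt = \int_0^1 Q^*_{F^*}(t)\,dt$ (the latter equality following from \Cref{lemma:saddle-max-prob} evaluated at $y = 1$). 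For $h = Q^*_{F^*}(y)$ with $y \in [0,1]$, \Cref{lemma:quantile-based-bidding} expresses the buyer's utility as $\int_y^1 (F^-(u) - Q^*_{F^*}(u))\,du$, and subtracting the saddle-point identity from \Cref{lemma:saddle-max-prob} applied to $F^*$ reduces the desired bound to the key integral inequality
\begin{equation*}
\int_{Q^*_{F^*}(y)}^1 (F(t) - F^*(t))\,dt \;\geq\; \int_y^1 (F^{*-}(u) - F^-(u))\,du \qquad \text{for all } y \in [0,1].
\end{equation*}

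The main obstacle is establishing this integral inequality. At $y = 0$, standard duality identities between a CDF and its quantile function show that both sides equal $\E_{F^*}[V^*] - \E_F[V]$, so equality holds; at $y = 1$, the right-hand side vanishes while the left-hand side is non-negative. However, the difference between the two sides is not monotone in $y$, so a direct derivative-based argument is insufficient. My plan is to interpret both sides as areas of subregions of $R := \{(v,u) \in [0,1]^2 : F^*(v) \leq u \leq F(v)\}$: the left-hand side equals the area of $R \cap \{v \geq Q^*_{F^*}(y)\}$, and the right-hand side equals the area of $R \cap \{u \geq y\}$. I would then leverage the identity $\int_0^{Q^*_{F^*}(y)} (1 - F^*(t))\,dt = \int_0^y (F^{*-}(u) - Q^*_{F^*}(u))\,du$, obtained by integrating the defining ODE of $Q^*_{F^*}$ and applying a change of variables, together with the density bound $F(x) - F(x') \leq \rho(x - x')$ to compare the two areas. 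A case split based on whether $Q^*_{F^*}(y)$ lies within the support $[1-1/\rho, 1]$ of $F^*$ or below it is likely required, since the denominator $1 - F^*(\cdot)$ in the defining ODE takes different forms in the two regimes.
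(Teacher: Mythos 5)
Your approach is genuinely different from the paper's, and the difference matters. The paper's proof starts from the same reductions you invoke (reduction to quantile-based strategies, Theorem~\ref{thm:full-info}, restriction to $h = Q(y)$), but at that point it makes a decisive move you do not: after obtaining, for every $F$ with $F(0)=0$, the one-sided bound
\begin{equation*}
  \inf_{s} \sup_{H} \reg_F(s,H) \;\leq\; \inf_{Q\in \mathcal Q_0} \sup_{y\in[0,1]} \left[\int_{F(Q(y))}^y (F^-(t)-Q(y))^+\,dt + \int_y^1 (Q(t)-Q(y))\,dt\right],
\end{equation*}
it applies the weak-duality inequality $\sup_F \inf_Q \leq \inf_Q \sup_F$. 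This converts the problem into showing, \emph{for each fixed $Q$ and $y$}, that $F_\rho = \unif(1-1/\rho,1)$ maximizes $\int_{F(Q(y))}^y (F^-(t)-Q(y))^+\,dt$ over $F\in\mathcal F_\rho$---a purely pointwise comparison that follows directly from $F_\rho(h) \leq F(h)$ and $F^-(t) \leq F_\rho^-(t)$. Crucially, the paper never compares $Q^*_F$ with $Q^*_{F^*}$ and never needs an inequality that entangles the ODE solution with a different value distribution. Your proof, by contrast, transfers the strategy $\s[Q^*_{F^*}]$ to the problem with value distribution $F$ and bounds its regret. The reduction you perform---splitting into $h>Q^*_{F^*}(1)$ and $h=Q^*_{F^*}(y)$, using Lemma~\ref{lemma:quantile-based-bidding} and Lemma~\ref{lemma:saddle-max-prob} to isolate the key integral inequality---is correct, and the dominance facts $F\geq F^*$, $F^-\leq F^{*-}$ are the same ingredients the paper uses.

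The gap is in the proof of your key integral inequality
\begin{equation*}
\int_{Q^*_{F^*}(y)}^1 \bigl(F(t) - F^*(t)\bigr)\,dt \;\geq\; \int_y^1 \bigl(F^{*-}(u) - F^-(u)\bigr)\,du \qquad \text{for all } y \in [0,1],
\end{equation*}
which you reduce to but do not prove. Your geometric reframing (both sides as areas of vertical vs. horizontal truncations of the region $R=\{(v,u): F^*(v)\leq u \leq F(v)\}$) is correct, as is the integrated-ODE identity $\int_0^{Q^*_{F^*}(y)}(1-F^*(t))\,dt = \int_0^y (F^{*-}(u)-Q^*_{F^*}(u))\,du$. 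But you correctly observe that $D(y)$, the difference of the two sides, is not monotone---and indeed, computing $D'(y) = -(F(Q^*_{F^*}(y))-F^*(Q^*_{F^*}(y)))(Q^*_{F^*})'(y) + F^{*-}(y)-F^-(y)$ and testing, e.g., $F=\unif(0,1/\rho)$ for moderate $\rho$, one sees $D'$ changing sign inside $(0,1)$. So a derivative-based argument requires controlling where $D$ achieves its minimum, and the proposed tools (case split on whether $Q^*_{F^*}(y)$ lies in the support of $F^*$, the Lipschitz bound $F(x)-F(x')\leq\rho(x-x')$) have not been shown to close the argument; you would also need to handle atoms of $F$ and $F$ without a density. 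Unless you can actually establish this inequality, the proof is incomplete. The paper's sup-inf swap is precisely what lets it avoid any such delicate ODE-dependent estimate, so if you want to make your route work, the key inequality is where all the remaining difficulty is concentrated.
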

\Cref{thm:worst-value-dist} shows that, for every $\rho \in [1,\infty]$, the value distribution that maximizes the buyer’s minimax regret is the one that concentrates mass as much as possible among all distributions in $\mathcal{F}_{\rho}$. Indeed, the regret-maximizing distribution $F_{\rho} = \unif(1 - \tfrac{1}{\rho}, 1)$ satisfies $F_{\rho}(A) = \rho \cdot \lambda(A)$ for every measurable set $A$ included in the support of $F_{\rho}$. 

\Cref{fig:regret_unif} plots the regret of our minimax-optimal strategy when the value distribution follows a $\unif(a,1)$ for $a \in [0,1)$. With $\rho = 1/(1 - a)$, the worst-case regret at $a$ reflects the largest regret that is incurred by our minimax-optimal strategy across all distributions in $\mathcal F_\rho$. We see that the worst-case regret improves significantly as the value distribution becomes less concentrated. In particular, the minimax-optimal regret increases by over 100\%  when $a$ shits from $0$, representing the uniform distribution, to $1$, representing the completely concentrated $\delta_1$ distribution.
\begin{figure}[h!]
    \centering
    \begin{tikzpicture}[scale = 0.65]
    \begin{axis}[
        width=10cm,
        height=10cm,
        xmin=0,xmax=1,
        ymin=0,ymax=0.6,
        table/col sep=comma,
        xlabel={$a$},
        ylabel={Worst-case regret},
        grid=both,
        legend pos=north west,
    ]

    \addplot [blue,  line width=0.7mm, ,mark=square,mark options={scale=.1}] table[x=a,y={regret}] {Data/regrets_uniform_a.csv};
    \addlegendentry{Minimax strategy}

    \addplot [violet,  line width=0.7mm, ,mark=square,mark options={scale=.1}] table[x={rho},y={best_performance}] {Data/best_alpha_and_performance_uniform_dist.csv};
    \addlegendentry{Optimal uniform-bid-shading}
    
    \addplot [black, line width = 0.7mm] {0.25*x + 0.25};  
   \addlegendentry{$\alpha = 0.5$}
      \addlegendimage{ultra thick,black}
    \end{axis}
    \end{tikzpicture}
    \caption{\textbf{Impact of the value distribution on performance.} We report the worst-case regret of various bidding strategies as a function of $a$, when the value distributions is of the form $\unif(a,1)$. The black curve corresponds to the uniform-bid-shading strategy with $\alpha = 0.5$, the violet curve to the strategy which uses the optimal uniform-bid-shading strategy for each $a$, and the blue curve to the minimax-optimal strategy.} 
    \label{fig:regret_unif}
\end{figure}
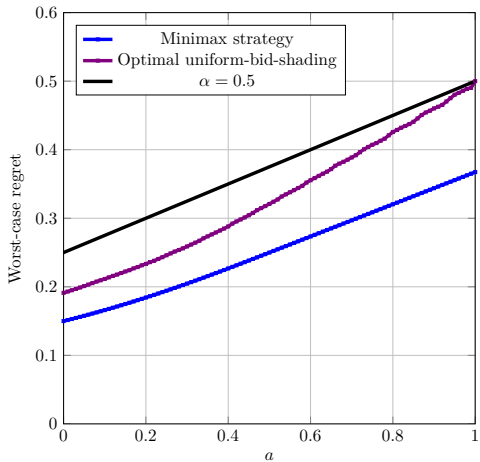

\Cref{fig:regret_unif} also quantifies the benefit one can derive from accounting for the value distribution, instead of treating each value in isolation. To see this, consider a tale of two advertisers (buyers), called Alice and Bob, who wish to launch an online ad campaign. Each of them is going to participate in a large number of first-price auctions independently of each other. Both lack information about the competition, and wish to develop a bidding strategy to minimize worst-case regret. Additionally, they have a preference for deterministic strategies due to their simplicity and reproducibility. Moreover, they share the same value distribution $F$. Although they face the same problem, they diverge in their solution approaches. Alice accounts for the fact that her values are random and distributed according to $F$, and thus elects to use our minimax-optimal strategy $\s[Q^*]$. Bob, on the other hand, ignores the value distribution and treats each value in isolation. Therefore, he uses the result of \citet{kasberger2023robust} and selects the deterministic minimax-optimal bid $b = 0.5 \cdot v$ for each value $v$, i.e., he uses the uniform-bid-shading strategy with $\alpha = 0.5$. Unlike Alice, Bob is (implicitly) making the pessimistic assumption that the competition has knowledge of his value $v$, and can customize their bids using that information. \Cref{fig:regret_unif} depicts the outcome of this tale with a performance comparison of the two strategies. The minimax-optimal strategy of Alice consistently and significantly outperforms the uniform-bid-shading strategy of Bob. For example, when the values are uniformly distributed ($a = 0$), the worst-case regret of the uniform-bid-shading strategy with $\alpha = 0.5$ is 66\% higher than the worst-case regret of the minimax-optimal strategy. The primary driver of this gap in performance is the ability of the minimax-optimal strategy to account for the value distribution $F$. In particular, the greater the variation in values, the greater the information asymmetry between the buyer and the competition. The minimax-optimal strategy exploits this asymmetry and leverages the inherent variation in values to hedge against bad outcomes. If one ignores the value distribution, as Bob did, it leads to an overly-pessimistic strategy which assumes that private information, which is not even available to the competition, can be used to hurt the buyer. This prevents it from leveraging the inherent variation in values, and leads to consistently sub-optimal performance across all distributions.

Finally, our minimax-optimal strategy also outperforms uniform-bid-shading strategies as a class. In particular, it does consistently better than the optimal uniform-bid-shading strategy which accounts for the value distribution, improving performance across all values of $a$. The improvement comes in spite of the unfavorable choice of distribution: \Cref{thm:worst-value-dist} implies that $\unif(a,1)$ leads to the largest minimax-optimal regret among all value distributions in $\mathcal{F}_{\frac{1}{1-a}}$.

\section{Conclusion}

This paper develops a prior-independent framework for designing bidding strategies in first-price auctions, using worst-case regret as the performance metric. Our approach bypasses the need for strong distributional assumptions about competing bids, making it robust to uncertainty in auction environments. We provide an efficient procedure to (i)~evaluate and compare the worst-case regret of arbitrary strategies, and (ii)~construct a minimax-optimal bidding strategy as solution to an ODE. Importantly, our method works for every value distribution; a fact we use to analyze its impact on regret. Our minimax-optimal strategy provides a principled way for buyers to contend with uncertainty about the competition, whether due to a lack of data or rapid changes in the market. Moreover, it consistently attains lower regret than the uniform-bid-shading strategies advocated by prior work.

{\setstretch{1.0}
\bibliographystyle{agsm}
\bibliography{ref}}

\newpage

\appendix

\allowdisplaybreaks
\section{Proof of Result in \Cref{sec:strat_evaluation}}

\begin{proof}[\textbf{Proof of \Cref{thm:evaluation}}]
To prove this result, we will apply Bauer Maximum Principle (see 7.69 of \citealt{aliprantis2006infinite}). 

In a first step, we establish that the mapping $H \mapsto \reg_F(s,H) = \sup_{s' \in \Scal} \U[F]{s',H} - \U[F]{s,H}$ is convex and upper semi-continuous on $\Delta([0,1])$ under the weak topology.

We start by establishing the upper semi-continuity of the utility $u(b,h;v)$ of the buyer. In particular, we claim that $(b,h, v) \mapsto u(b,h;v) = (v - b) \mathbbm{1}(b \geq h)$ is upper semi-continuous on $[0,1]^3$. To see this, note that
	\begin{itemize}
		\item $(b,h,v) \mapsto (v-b)$ is continuous and therefore upper semi-continuous on $[0,1]^3$;
		\item $(b,h,v) \mapsto \mathbbm{1}(b \geq h)$ is upper semi-continuous on $[0,1]^3$ because for every $y \in \mathbb{R}$, the level set $\{(b,h,v) \mid \mathbbm{1}(b \geq h) \geq y\}$ is either empty or equal to $\{(b,h,v) \mid b \geq h\}$, both of which are closed.
	\end{itemize} 
	As the product of two upper semi-continuous functions is upper semi-continuous, we get that $u(b,h;v)$ is upper semi-continuous on $[0,1]^3$, as desired.
	
	Next, we show that $\mathcal{U}: (J,H) \mapsto  \E_{((v,b), h) \sim J \times H}[u(b,h;v)]$ is upper semi-continuous on $\Delta([0,1]^2) \times \Delta([0,1])$ with respect to the weak topology. Let $(J,H) \in \Delta([0,1]^2) \times \Delta([0,1])$ and consider a sequence $(J_n, H_n)$ which converges to $(J,H)$ in the weak topology. As we established that $u$ upper semi-continuous on $[0,1]^3$, we have by the Portmonteau lemma that $\limsup_{n \to \infty} \E_{J_n \times H_n}[u(b,h;v)] \leq \E_{J \times H}[u(b,h;v)]$. Therefore, $U$ is upper semi-continuous on\\ ${\Delta([0,1]^2) \times \Delta([0,1])}$ with respect to the weak topology.
	
	Now, recall that each $s' \in \Scal$ has an associated joint distribution $J_{s',F}$ of value-bid pairs $(v,b)$ such that the marginal distribution of $v$ is $F$. Moreover, by Fubini's Theorem, we have
	\begin{align*}
		\U[F]{s',H} = \mathbb{E}_{(v, h) \sim F \times H}  \left[ \mathbb{E}_{b \sim s'(v)} \left[ u(b,h;v) \right] \right] = \E_{((v,b), h) \sim J_{s',F} \times H}[u(b,h;v)] = \U{J_{s',F},H}\,.
	\end{align*}
	
    For a given $F$, let $\mathcal J \coloneqq \{J \in \Delta([0,1]^2) \mid \Prob_{(v,b) \sim J}(v \leq x) = F(x)\}$ be the set of all value-bid joint distributions with $F$ as the marginal distribution of $v$. Then, for every highest-bid distribution $H \in \Delta([0,1])$, we have
	\begin{align*}
		\sup_{s' \in \Scal}\ \U[F]{s',H}\ =\ \sup_{J \in \mathcal J}\ \U{J,H}\,.
	\end{align*}
	It is straightforward to check that $\mathcal J$ is closed under the weak topology. We also note that $\Delta([0,1]^2)$ is compact under the weak topology as it is the set of distributions over a compact domain. Therefore, $\mathcal{J}$ is compact as it is a closed subset of a compact set. Combining this with the upper semi-continuity of $\U{J,H}$ allows us to invoke the Maximum Theorem (see Lemma~17.30 of \citealt{aliprantis2006infinite}) , which implies that the optimal-value function $H \mapsto \sup_{J \in \mathcal J} \U{J,H}$ is also upper semi-continuous.
	
    To conclude that $H \mapsto \reg_F(s,H)$ is upper semi-continuous, we need one more ingredient: continuity of $H \mapsto \U[F]{s,H}$. 
    Consider the function $f(h) \coloneqq \E_{v \sim F, b \sim s(v)}[(v-b) \cdot \mathbbm{1} \{b \geq h\}]$. Note that, as $f$ is bounded,  it is sufficient to prove that $f$ is continuous on $[0,1]$ to conclude that $H \mapsto \U[F]{s,H}$ is continuous in $\Delta([0,1])$ under the weak topology. This follows from the definition of the weak convergence. 
    
    Let us prove that $f$ is continuous. Recall that by assumption, the bid distribution $P_{s,F}$ is absolutely continuous. Hence, for every $\epsilon > 0$, there exists a $\delta$ such that for every measurable set $A \subset [0,1]$, we have that $P_{s,F}(A) \leq \epsilon$ whenever $\lambda(A) \leq \delta$, where $\lambda(\cdot)$ denotes the Lebesgue measure on $[0,1]$. Moreover, remark that for every $h_1,h_2 \in [0,1]$ we have that
    \begin{align*}
        |f(h_1) - f(h_2)| 
        &= \left| \E_{v \sim F, b\sim s(v)}[(v - b) \mathbbm{1}(\min\{h_1, h_2\} \leq b < \max\{h_1, h_2\}) \right| \\
        &\leq P_{s,F}(\min\{h_1, h_2\} \leq b < \max\{h_1, h_2\}) \,.
    \end{align*}
    Hence, for every $\epsilon > 0$, there exists $\delta$ such that for every for every $h_1,h_2 \in [0,1]$ with $|h_1 - h_2| \leq \delta$, we have that
	\begin{align*}
		|f(h_1) - f(h_2)| \leq
		P_{s,F}(\min\{h_1, h_2\} \leq b < \max\{h_1, h_2\})
		\leq \epsilon\,.
	\end{align*}
	This implies that $f$ is continuous. Hence, $H \mapsto \U[F]{s,H}$ is continuous.
    
	Altogether, we can combine the upper semi-continuity of $H \mapsto \sup_{s' \in \Scal} \U[F]{s',H}$ and the continuity of $H \mapsto \U[F]{s,H}$ to get the upper semi-continuity of $H \mapsto \reg_F(s,H) = \sup_{s' \in \Scal} \U[F]{s',H} - \U[F]{s,H}$. 
    
    Moreover, as $H \mapsto \reg_F(s,H)$ is the pointwise supremum of linear functionals $H \mapsto \U[F]{s',H} - \U[F]{s,H}$ on $\Delta([0,1])$, it is convex. Together with the fact that $\Delta([0,1])$ is a compact and convex locally-convex Hausdorff space in the weak topology, this allows us to apply the Bauer Maximum Principle (see 7.69 of \citealt{aliprantis2006infinite}). Therefore,  $H \mapsto \reg_F(s,H)$ has a maximizer that is an extreme point of $\Delta([0,1])$, i.e., it has a maximizer of the form $H = \delta_h$ for some $h \in [0,1]$. Therefore, we have shown the first equality of the proposition, namely
	\begin{align*}
		\sup_{H \in \Delta([0,1])} \reg_F(s,H)\ =\ \sup_{h\in [0,1]} \reg_F(s,\delta_h)\,.
	\end{align*}
	To see the second equality, observe that if the highest competing bid distribution is known to be the point mass on $h$, i.e., $H = \delta_h$, then the utility-maximizing bidding strategy simply bids $h$ whenever the value $v$ is greater than or equal to $h$. In other words, $\reg_F(s,\delta_h)\ =\ \E_{v \sim F}\left[ (v-h) \cdot \mathbbm{1}(v \geq h) \right]\ -\ \U[F]{s,\delta_h}.$
\end{proof}

\section{Proofs of Results in \Cref{sec:minimax}}

\subsection{Proof of \Cref{remark:kernel}}\label{appendix:minimax-remark}
Given a quantile-based bidding strategy $Q \in \mathcal{Q}$, we note that for any fixed $B = (b,1] \in \borel$, the map 
	\begin{align*}
		v \mapsto \kappa_{\s[Q]}(B, v) = \begin{cases}
			\mathbbm{1} \{ Q(F(v)) \in B \} &\text{if }Y(v) \in \{\{F(v)\}, \emptyset\}\\
			\frac{\lambda(Q^{-1}(B) \cap Y(v))}{\lambda(Y(v))} &\text{otherwise}
		\end{cases}
	\end{align*} 
is a non-decreasing function, and therefore measurable, because $Y(v_1)\succcurlyeq Y(v_2)$ for all $v_1 \geq v_2$. As ${\{(b,1]\mid b\in [0,1]\}}$ generates $\mathcal B$, $\kappa_{\s[Q]}$ is a Markov kernel and $\s[Q] \in \Scal$.

\subsection{Proof of Lemmas in \Cref{sec:minimax}}\label{appendix:minimax-lemmas}

\begin{proof}[\textbf{Proof of \Cref{lemma:quantile-based-bidding}}]
	Define $V \coloneqq \{v \in[0,1] \mid \lambda(Y(v)) = F(\{v\}) > 0\}$ to be the set of values $v$ for which $\s[Q]$ plays a randomized bid. Since every distribution admits at most countably many atoms, the set $V$ must be countable.
	
	For every $v \in V$, we have
	\begin{align*}
		\E_{b \sim \s[Q](v)} [(v - b) \mathbbm{1}(b \geq h)] = \E_{t \sim \unif(Y(v))}[(v - Q(t))\mathbbm{1}(Q(t) \geq Q(y))] = \E_{t \sim \unif(Y(v))}[(v - Q(t))\mathbbm{1}(t \geq y)]
	\end{align*}
	Now, for any interval $Y(v)$ with $\lambda(Y(v))> 0$, the uniform distribution on $Y(v)$ is the same as the uniform distribution on $[0,1]$ conditioned on the event $Y(v)$. Therefore,
	\begin{align*}
		\E_{b \sim \s[Q](v)} [(v - b) \mathbbm{1}(b \geq h)] &= \E_{t \sim \unif(0,1)}[(v - Q(t))\mathbbm{1}(t \geq y) \mid t \in Y(v)]\\
		&= \frac{1}{\lambda(Y(v))} \cdot \E_{t \sim \unif(0,1)}[(v - Q(t))\mathbbm{1}(t \geq y,  t \in Y(v))]\\
		&= \frac{1}{F(\{v\})} \cdot \E_{t \sim \unif(0,1)}[(F^-(t) - Q(t))\mathbbm{1}(t \geq y,  F^-(t) = v)]
	\end{align*}
	Taking an expectation over all $v \in V$ yields
	\begin{align*}
		\E_{v \sim F}\left[ \E_{b \sim \s[Q](v)} [(v - b) \mathbbm{1}(b \geq h)] \mathbbm{1}(v \in V) \right] &= \sum_{v \in V} F(\{v\}) \cdot \E_{b \sim \s[Q](v)} [(v - b) \mathbbm{1}(b \geq h)]\\
		&= \sum_{v \in V} \E_{t \sim \unif(0,1)}[(F^-(t) - Q(t))\mathbbm{1}(t \geq y,  F^-(t) = v)]\\
		&= \E_{t \sim \unif(0,1)}[(F^-(t) - Q(t))\mathbbm{1}(t \geq y,  F^-(t) \in V)] \tag{I}
	\end{align*}
	
	On the other hand, if we take an expectation over $v \notin V$, we get 
	\begin{align*}
		\E_{v \sim F}\left[ \E_{b \sim \s[Q](v)} [(v - b) \mathbbm{1}(b \geq h)] \mathbbm{1}(v \notin V) \right] &= \E_{v \sim F}\left[(v - Q(F(v))) \mathbbm{1}(Q(F(v)) \geq Q(y)) \mathbbm{1}(v \notin V) \right]\\
			&= \E_{v \sim F}\left[\underbrace{(v - Q(F(v))) \mathbbm{1}(Q(F(v)) \geq Q(y), v \notin V)}_{g(v)} \right]
	\end{align*}
	Recall that, if $y \sim \unif(0,1)$, then $F^-(y) \sim F$~\citep{embrechts2013note}. Therefore, $\E_{v \sim F}[g(v)] = \E_{t \sim \unif(0,1)}[g(F^-(t))]$. Now, consider $t \in [0,1]$ such that $F^-(t) \notin V$. As $Y(F^-(t))$ is an interval with measure zero and $t \in Y(F^-(t))$, we must have that $Y(F^-(t)) = \{t\}$. We claim that $F(F^-(t)) = t$. For contradiction, suppose not. Then, as $F(F^-(t)) \geq t$ always holds~\citet{embrechts2013note}, we must have $F(F^-(t)) > t$. Applying $F^-$ to both sides yields $F^-(F(F^-(t))) \geq F^-(t)$. However, we always have $F^-(F(x)) \leq x$~\citep{embrechts2013note}, which for $x = F^-(t)$ implies $F^-(F(F^-(t))) \leq F^-(t)$. Hence, we must have $F^-(F(F^-(t))) = F^-(t)$, i.e, $F^-(F(F^-(t))) \in Y(F^-(t))$. Since we assumed $F(F^-(t)) > t$, this implies $\lambda(Y(v)) > 0$, which contradicts $F^-(t) \notin V$. Thus, we must have $F(F^-(t)) = t$, and as a consequence $Q(F(F^-(t)) = Q(t)$. Altogether, we get
	\begin{align*}
		&\E_{v \sim F}\left[ \E_{b \sim \s[Q](v)} [(v - b) \mathbbm{1}(b \geq h)] \mathbbm{1}(v \notin V) \right]\\ = &\E_{v \sim F}[g(v)]\\
		= &\E_{t \sim \unif(0,1)}[g(F^-(t))]\\
		= &\E_{t \sim \unif(0,1)}\left[(F^-(t) - Q(t)) \mathbbm{1}(Q(t) \geq Q(y), F^-(t) \notin V) \right]\\
		= &\E_{t \sim \unif(0,1)}\left[(F^-(t) - Q(t)) \mathbbm{1}(t \geq y, F^-(t) \notin V) \right] \tag{II}
	\end{align*} 
	Finally, adding together (I) and (II) yields
	\begin{align*}
		& \E_{v \sim F}[ \E_{b \sim \s[Q](v)} [(v - b) \mathbbm{1}(b \geq h)]]\\
		=\ &\E_{v \sim F}\left[ \E_{b \sim \s[Q](v)} [(v - b) \mathbbm{1}(b \geq h)] \mathbbm{1}(v \in V) \right] + \E_{v \sim F}\left[ \E_{b \sim \s[Q](v)} [(v - b) \mathbbm{1}(b \geq h)] \mathbbm{1}(v \notin V) \right]\\
		=\ &\E_{t \sim \unif(0,1)}[(F^-(t) - Q(t))\mathbbm{1}(t \geq y,  F^-(t) \in V)] + \E_{t \sim \unif(0,1)}\left[(F^-(t) - Q(t)) \mathbbm{1}(t \geq y, F^-(t) \notin V) \right]\\
		=\ &\E_{t \sim \unif(0,1)}[(F^-(t) - Q(t))\mathbbm{1}(t \geq y)]\\
		=\ &\int_y^1 (F^-(t) - Q(t))dt\,.
	\end{align*}

    Similarly, for $h = Q(y)$, if we replace the random variable $(v - b) \cdot \mathbbm{1}(b \geq h)$ with $\mathbbm{1}(b \geq h)$, we get
    \begin{align*}
        \Prob(b \geq h\mid b \sim P_{\s[Q],F}) = \E_{v \sim F}[ \E_{b \sim \s[Q](v)} [\mathbbm{1}(b \geq h)]]\ =\ \E_{t \sim \unif(0,1)}[\mathbbm{1}(t \geq y)] = \Prob_{t \sim \unif(0,1)}(Q(t) \geq h) \,.
    \end{align*}
    On the other hand, if $h < Q(0)$ (respectively $h > Q(1)$), then $\Prob(b \geq h\mid b \sim P_{\s[Q],F}) = 1$ (respectively $\Prob(b \geq h\mid b \sim P_{\s[Q],F}) = 0$). Therefore, for all $h \in [0,1]$, we have
    \begin{align*}
        \Prob(b \geq h\mid b \sim P_{\s[Q],F}) = \Prob_{t \sim \unif(0,1)}(Q(t) \geq h)\,,
    \end{align*}
    Hence, the induced bid distribution $P_{s,f}$ is equal to $\{Q(t)\mid t \sim \unif(0,1)\}$.
\end{proof}

\begin{proof}[\textbf{Proof of \Cref{lemma:ode-existence}}]
	We start by rewriting the ODE \eqref{eq:first-ode} in a more standard form 
	\begin{align}\label{eq:first-ode-rewrite}
		x'(t) = g(t,x(t)) \quad \text{where}\quad g(t,x) \coloneqq \frac{F^-(t) - x}{1 - F(x)}\,,
	\end{align}
	where the denominator is only well-defined for $x < F^-(1)$. For the remainder of the proof, set $\bar v = F^-(1)$. Then, $F(x) < 1$ for all $x < \bar v$. Moreover, we assume $F(x) = 0$ for $x < 0$.
	
	To prove the existence of $Q^*$, we will use a generalization of Caratheodary's existence theorem proven in \citet{biles1997caratheodory} (see \citealt{biles2000solvability} for the more general version which applies to initial value problems). We refer to it as the Generalized Caratheodary's Existence Theorem, or simply GCET.
	\begin{theorem*}[GCET]
		Consider the initial value problem (IVP) defined by $x'(t) = f(t,x(t))$ for all $t\in [0,1]$ and $x(0) = 0$. Suppose $f:[0,1] \times \R \to \R$ satisfies the following conditions:
		\begin{itemize}
			\item[(a)] For almost all $t$, $f(t,\cdot)$ is quasi-semicontinuous, i.e., for all $x \in \R$, we have
						\begin{align*}
							\limsup_{\tilde x \uparrow x} f(t,\tilde x) \leq f(t,x) \leq \liminf_{\tilde x \downarrow x} f(t, \tilde x)\,.
						\end{align*}
			\item[(b)] For each $x \in \R$, $f(\cdot, x)$ is measurable.
			\item[(c)] There exists an integrable function $\beta:[0,1] \to \R$ such that $|f(t,x)| \leq \beta(t)$ for all $t,x$.
		\end{itemize}
		Then, there exists an absolutely continuous function $x:[0,1] \to \R$ such that $x(0) = 0$ and $x'(t) = f(t,x(t))$ almost surely on $[0,1]$. We call such a function a \emph{solution} of the initial value problem (IVP).
	\end{theorem*}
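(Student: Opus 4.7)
My plan is to prove GCET by combining a delayed Tonelli approximation with an Arzel\`a--Ascoli compactness argument, and then passing to the limit with the help of the one-sided inequalities built into quasi-semicontinuity.

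First, for each integer $n \geq 1$, I would construct a delayed approximation $x_n:[0,1] \to \R$ by setting $x_n(t) = 0$ for $t \in [0, 1/n]$ and $x_n(t) = \int_0^{t - 1/n} f(s, x_n(s))\,ds$ for $t \in (1/n, 1]$. The delay of $1/n$ makes the recursion well-posed, because on each sub-interval $[k/n, (k+1)/n]$ the integrand depends only on $x_n$ restricted to earlier sub-intervals; conditions (b) and (c) then guarantee measurability of $s \mapsto f(s, x_n(s))$ together with the pointwise bound $|f(s, x_n(s))| \leq \beta(s)$. Consequently, $|x_n(t) - x_n(t')| \leq \int_{t'}^{t} \beta(s)\,ds$ for $t' \leq t$, so the family $\{x_n\}$ is uniformly bounded by $\int_0^1 \beta(s)\,ds$, starts at $0$, and is equi-absolutely continuous.

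Next, by the Arzel\`a--Ascoli theorem a subsequence of $\{x_n\}$ (still denoted $x_n$) converges uniformly to some continuous $x:[0,1] \to \R$ with $x(0) = 0$. The equi-continuity bound $|x_n(t) - x_n(t')| \leq \int_{t'}^{t} \beta(s)\,ds$ passes to the limit, making $x$ absolutely continuous. It remains to show $x(t) = \int_0^t f(s, x(s))\,ds$ for all $t \in [0,1]$; Lebesgue's differentiation theorem, applicable because $f(\cdot, x(\cdot))$ is dominated by the integrable $\beta$, then yields $x'(t) = f(t, x(t))$ almost everywhere on $[0,1]$.

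The central obstacle is the limiting step itself: since $f(t, \cdot)$ is only quasi-semicontinuous, the pointwise convergence $x_n(s) \to x(s)$ does not imply convergence of $f(s, x_n(s))$ to $f(s, x(s))$. My plan is to address this via a one-sided envelope argument rooted in condition (a). For each $s$, split $\mathbb{N}$ into the indices $n$ with $x_n(s) \geq x(s)$ and those with $x_n(s) < x(s)$. On the first type, condition (a) gives $\liminf_n f(s, x_n(s)) \geq f(s, x(s))$; on the second, $\limsup_n f(s, x_n(s)) \leq f(s, x(s))$. Applying standard Fatou and reverse-Fatou (dominated by $\beta$) separately to the two subfamilies and combining the resulting bounds sandwiches $\int_0^{t-1/n} f(s, x_n(s))\,ds$ between quantities that both converge to $\int_0^t f(s, x(s))\,ds$; the contribution of the delay is controlled by $\int_{t-1/n}^t \beta(s)\,ds \to 0$ via absolute continuity of the integral. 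This yields the integral identity for $x$ and completes the proof. The crux is that the one-sided inequalities in (a) are exactly what is needed to make the Fatou-type sandwich close without assuming ordinary continuity in $x$.
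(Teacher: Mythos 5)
The paper does not actually prove GCET --- it imports the statement wholesale from \citet{biles1997caratheodory} and \citet{biles2000solvability} --- so the comparison here is between your argument and the one in those references, which is a monotone construction (the solution is obtained as a supremum of sub-solutions, in the spirit of Goodman's treatment of the Carath\'eodory case) rather than a compactness argument. Your Tonelli-plus-Arzel\`a--Ascoli route is genuinely different, but it has two gaps, the second of which is fatal to the approach as described.

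First, conditions (b) and (c) do not ``guarantee measurability of $s \mapsto f(s,x_n(s))$.'' Superposition measurability is exactly the place where the classical Carath\'eodory argument uses continuity of $f(t,\cdot)$ (approximate the inner function by simple functions and pass to the limit inside $f$); quasi-semicontinuity does not substitute for it. For example, take a non-measurable $E \subset [0,1]$ and set $f(t,x)=0$ for $x<t$, $f(t,x)=1$ for $x>t$, and $f(t,t)=\mathbbm{1}_E(t)$: conditions (a)--(c) all hold, yet $s \mapsto f(s,g(s))$ is non-measurable for $g(s)=s$. So even the well-posedness of your recursion for $x_n$ requires an argument you have not supplied.

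Second, and more fundamentally, the ``one-sided envelope'' step does not close. The two inequalities you extract from (a) bound $f(s,x_n(s))$ from \emph{opposite} sides on the \emph{complementary}, $n$- and $s$-dependent index sets $\{n : x_n(s)\ge x(s)\}$ and $\{n : x_n(s)<x(s)\}$; a lower bound on one part of the integral plus an upper bound on the other part yields neither an upper nor a lower bound on $\int_0^{t}f(s,x_n(s))\,ds$, so there is no sandwich. The pointwise obstruction is already visible: take $f(s,x)=-M$ for $x<0$, $f(s,0)=0$, $f(s,x)=M$ for $x>0$ (which is quasi-semicontinuous); if $x(s)=0$ and $x_n(s)\to 0$ alternating in sign, then $f(s,x_n(s))$ oscillates between $\pm M$ and neither $f(s,x_n(s))\to f(s,x(s))$ nor convergence of the integrals follows from any Fatou-type argument. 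Quasi-semicontinuity cooperates with \emph{monotone} (one-sided) approach to the limit but not with general uniform limits, which is precisely why Biles and Binding build the solution by a monotone scheme instead of extracting a convergent subsequence of approximations. To rescue your plan you would need the approximants to converge to $x$ from one side for almost every $s$ (or some comparable monotonicity), and the delayed Tonelli scheme gives you no such control.
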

	
    We cannot directly apply GCET to $g$ because it cannot be bounded with an integrable function; in fact $g$ is not well-defined for all $x \in \R$. This motivates us to consider a modified IVP parameterized by $\alpha > 0$: $x'_\alpha(t) = g_\alpha(t, x_\alpha(t))$ for all $t \in [0,1]$ and $x_\alpha(0)= 0$, where
	\begin{align}\label{eq:IVP-alpha}
		g_\alpha(t,x) \coloneqq \frac{F^-(t) - I(x)}{1 - F(\min\{x, \alpha\})} \quad \text{with} \quad I(x) = \max\{0, \min\{x,1\}\} \,.
	\end{align}
	
	Note that, for every $\alpha \in (0,\bar v)$, the conditions of GCET are satisfied by $g_\alpha$:
	\begin{itemize}
		\item[(a)] For every $t \in [0,1]$, we have $ \limsup_{\tilde x \uparrow x} g_\alpha(t,\tilde x) \leq g_\alpha(t,x)$ as $F$ is non-decreasing and $ \liminf_{\tilde x \downarrow x} g_\alpha(t,\tilde x) = g_\alpha(t,x)$ due to the right-continuity of $F$.
		\item[(b)] For each $x \in \R$, $g_\alpha(\cdot,x)$ is measurable because $F^-$ is measurable.
		\item[(c)] We have $g_\alpha(t,x) \leq 1/(1 - F(\alpha))$ for all $t,x$.
	\end{itemize}
	
	Therefore, for every $\alpha > 0$, there exists a solution $x_\alpha$ for IVP \eqref{eq:IVP-alpha}. To complete the proof, we will show the existence of a positive constant $\alpha > 0$ such that $0 \leq x_{\alpha}(t) \leq \alpha$ for all $t \in [0,1]$, thereby making it a solution to the original IVP defined by $g$. Looking forward, we first specify the value of $\alpha$ which will make the proof work, and then focus on establishing that $0 \leq x_{\alpha}(t) \leq \alpha$ for all $t \in [0,1]$ in the remainder of the proof. 
	
Consider the auxiliary function $h:(0,\bar v] \to [0,1]$ defined as follows. For $x \in (0,\bar v]$, we let $h(x)$ be the unique $z \in [0,x]$ which satisfies 
	\begin{align*}
		\int_{0}^{z} y \cdot (1 - F(y))dy\ = \int_{z}^{x}  (1 - F(y))dy \,.
	\end{align*}
	Observe that $\int_{0}^{z} y \cdot (1 - F(y))dy$ is a continuous strictly increasing function of $z$ and $\int_{z}^{x}  (1 - F(y))dy$ a continuous strictly decreasing $z$; this is because $1 - F(y) > 0$ for all $y <\bar v$. Thus, the Intermediate Value Theorem ensures that $h$ is well-defined. 
    
    Moreover, we have that $h(x) < x$ for all $x \in (0,\bar v]$ because $\int_{x}^{x}  (1 - F(y))dy = 0$ but $\int_{0}^{x} y \cdot (1 - F(y))dy > 0$, and we have that $h(x) > 0$ for all $x \in (0,\bar v]$ because $\int_{0}^{0} y \cdot (1 - F(y))dy = 0$ but $\int_{0}^{x}  (1 - F(y))dy > 0$.
    
    We define our candidate $\alpha$ as, $\alpha \coloneqq (h(\bar v) + \bar v)/2$ and note that, $0 < h(\bar v) < \alpha < \bar v$. By definition of $\bar v$ we have that $F(\alpha) < 1$.
	
	In order to bound $x_\alpha(t)$ from above, we will compare it with solutions of the parameterized family of IVP's defined for every $n \geq 1$, by $  \bar x'_n(t) = \bar g_n(t,x(t))$ with $\bar x_n(0) =0$, where
	\begin{align}\label{eq:IVP-n}
		\bar g_n(t,x)\ \coloneqq\ \frac{\{F^-(t) + 1/n\} \cdot  (1 - x)}{1 - F(x)} \,.
	\end{align}
	Note that $\bar g_n(t,x) \geq g_\alpha(t,x)$, and therefore, intuitively, we should have $\bar x_n(t) \geq x_\alpha(t)$; we formally show this fact now.
	
	First, we explicitly construct a solution for the IVP corresponding to $\bar g_n(t,x)$. Consider $\gamma: [0,\alpha] \to \R_+$ defined for every $x \in [0,\alpha]$ as 
	\begin{align*}
		\gamma(x)\ \coloneqq\ \int_{0}^x \frac{1 - F(z)}{1 - z} dz\,.
	\end{align*}  
    Note that $\gamma$ is a strictly increasing function because $1 - F(z) > 0$ for all $z < \bar v$. Therefore, $\gamma$ is invertible on its range which contains $[0, \bar \gamma]$, where  $\bar \gamma\ \coloneqq\ \E_F[v]\ +\ \frac{1 - F(\alpha)}{1 - h(\bar v)} \cdot \frac{\bar v - h(\bar v)}{2}\,.$
	To see this, observe that
	\begin{align*}
		\gamma(\alpha) &= \int_{0}^{h(\bar v)} \frac{1 - F(z)}{1 - z} dz\ +\ \int_{h(\bar v)}^\alpha \frac{1 - F(z)}{1 - z} dz\\
		&\geq \int_{0}^{h(\bar v)} (1 - F(z))(1 + z) dz\ +\ \frac{1 - F(\alpha)}{1 - h(\bar v)} \cdot (\alpha - h(\bar v))\\
		&= \int_{0}^{h(\bar v)} (1 - F(z)) dz\ +\ \int_{0}^{h(\bar v)} z \cdot (1 - F(z)) dz\ +\ \frac{1 - F(\alpha)}{1 - h(\bar v)} \cdot (\alpha - h(\bar v))\\
		&= \int_{0}^{h(\bar v)} (1 - F(z)) dz\ +\ \int_{h(\bar v)}^{1} (1 - F(z)) dz\ +\ \frac{1 - F(\alpha)}{1 - h(\bar v)} \cdot (\alpha - h(\bar v))\\
		&= \E_F[v]\ +\ \frac{1 - F(\alpha)}{1 - h(\bar v)} \cdot (\alpha - h(\bar v))\,.
	\end{align*}

	Let $n$ be such that $\E_F[v] + 1/n \leq \bar \gamma$ and define $\bar x_n: [0,1] \to [0,\alpha]$ as follows
	\begin{align*}
		\bar x_n(t)\ \coloneqq\ \gamma^{-1}\left( \int_0^t \{F^-(z) + 1/n\} dz \right)\,.
	\end{align*}
	$\bar x_n$ is well-defined because $\int_0^t \{F^-(z) + 1/n\} dz \leq \int_0^1 F^-(z) dz + 1/n = \E_F[v] + 1/n$. Since $\bar x_n$ is a composition of two strictly increasing functions, it is itself strictly increasing. Moreover, we also have $\bar x_n(0) =0$.
	
	Let $\Gamma_n$ be the set of points $t \in (0,1)$ such that $F^-$ is continuous at $t$ and $F$ is continuous at $\bar x_n(t)$. Since both $F^-$ and $F$ are discontinuous at atmost countably many points, and $\bar x_n$ is a strictly increasing function, we get that $\Gamma_n$ has measure 1, i.e., $\lambda(\Gamma_n) = 1$. Fix a point $t \in \Gamma_n$. Then, as $F$ is continuous at $\bar x_n(t)$, the Fundamental Theorem of Calculus implies that $\gamma$ is differentiable at $\bar x_n(t)$ with $\gamma'(\bar x_n(t)) = (1- F(\bar x_n(t)))/(1 - \bar x_n(t)) > 0$. Consequently, we get that $\gamma^{-1}$ is differentiable at $\bar x_n(t)$ and
	\begin{align*}
		(\gamma^{-1})'(\gamma(\bar x_n(t)))\ =\ \frac{1 - \bar x_n(t)}{1 - F(x_n(t))}\,.
	\end{align*}
	Finally, using the continuity of $F^-$ at $t$, the Fundamental Theorem of Calculus, and the Chain Rule for derivatives, we get
	\begin{align*}
		x'_n(t)\ =\ \frac{1 - \bar x_n(t)}{1 - F(x_n(t))} \cdot \{F^-(t) + 1/n\} &&\forall\ t\in \Gamma_n\,.
	\end{align*}
	Therefore, we have established that $\bar x_n$ is a solution of the IVP described in \eqref{eq:IVP-n}. 
	
	We are now ready to prove $\bar x_n(t) \geq x_\alpha(t)$ for all $t \in [0,1]$. For contradiction, suppose not and let $t^* = \inf\{t \in [0,1] \mid x_\alpha(t) > \bar x_n(t)\} \in [0,1)$. We note that by continuity of the functions $x_\alpha$ and $\bar x_n$ and because $x_\alpha(0) = \bar x_n(0) =0$ we have that $x_\alpha(t^*) = \bar x_n(t^*) = x^*$. 
    
    Consider a sequence $t_k \downarrow t^*$ such that, for all $k \geq 1$, we have
	\begin{align*}
		\frac{x_\alpha(t_k) - x_\alpha(t^*)}{t_k - t^*}\ >\ \frac{\bar x_n(t_k) - \bar x_n(t^*)}{t_k - t^*}\, \tag{\#}.
	\end{align*}
	As $x_\alpha$ is a solution of IVP \eqref{eq:IVP-alpha}, we get
	\begin{align*}
		\frac{x_\alpha(t_k) - x_\alpha(t^*)}{t_k - t^*} &= \frac{1}{t_k - t^*} \cdot \int_{t^*}^{t_k} g_\alpha(z, x_\alpha(z)) dz\\ 
		&= \frac{1}{t_k - t^*} \cdot \int_{t^*}^{t_k} \frac{F^-(z) - I(x_\alpha(z))}{1 - F(\max\{x_\alpha(z), \alpha\})} dz\\
		&\leq \frac{1}{t_k - t^*} \cdot \int_{t^*}^{t_k} \frac{F^-(t_k) - I(x_\alpha(t^*))}{1 - F(\max\{x_\alpha(t_k), \alpha\})} dz\\
		&= \frac{F^-(t_k) - x^*}{1 - F(\max\{x_\alpha(t_k), \alpha\})}\,.
	\end{align*}
	Similarly, as $\bar x_n$ is a solution of IVP \eqref{eq:IVP-n}, we get
	\begin{align*}
		\frac{\bar x_n(t_k) - \bar x_n(t^*)}{t_k - t^*} &= \frac{1}{t_k - t^*} \cdot \int_{t^*}^{t_k} \bar g_n(z, \bar x_n(z)) dz\\
		&= \frac{1}{t_k - t^*} \cdot \int_{t^*}^{t_k} \frac{\{F^-(t) + 1/n\} \cdot  (1 - \bar x_n(z))}{1 - F(\bar x_n(z))} dz\\
		&\geq \frac{1}{t_k - t^*} \cdot \int_{t^*}^{t_k} \frac{\{F^-(t^*)_+ + 1/n\} \cdot  (1 - \bar x_n(t_k))}{1 - F(\bar x_n(t^*))} dz\\
		&= \frac{\{F^-(t^*)_+ + 1/n\} \cdot  (1 - \bar x_n(t_k))}{1 - F(x^*)}\,,
	\end{align*}
	where $F^-(t^*)_+$ is the right limit of $F^-$ at $t^*$, i.e., $F^-(t^*)_+ = \lim_{s \downarrow t^*} F^-(s)$. Combining this with $(\#)$ yields
	\begin{align*}
		\frac{F^-(t_k) - x^*}{1 - F(\max\{x_\alpha(t_k), \alpha\})} \geq \frac{\{F^-(t^*)_+ + 1/n\} \cdot  (1 - \bar x_n(t_k))}{1 - F(x^*)} &&\forall\ k \geq 1\,.
	\end{align*}
	Note that $x_\alpha$ and $\bar x_n$ are both continuous and non-decreasing, and $F$ is right continuous. Therefore, we have that $\lim_{t_k \downarrow t^*} F(x_\alpha(t_k)) = F(x^*)$ and $\lim_{t_k \downarrow t^*} \bar x_n(t_k) = x^*$. Hence, taking the limit as $t_k \downarrow t^*$ yields
	\begin{align*}
		\frac{F^-(t^*)_+ - x^*}{1 - F(\max\{x^*, \alpha\})} \geq \frac{\{F^-(t^*)_+ + 1/n\} \cdot  (1 -  x^*)}{1 - F(x^*)}\,.
	\end{align*}
	This gives the desired contradiction because $1 - F(\max\{x^*, \alpha\}) \geq 1 - F(x^*)$ and
	\begin{align*}
		&F^-(t^*)_+ - x^* \leq F^-(t^*)_+ \cdot (1 - x^*) <  \{F^-(t^*)_+ + 1/n\} \cdot  (1 -  x^*)\,.
	\end{align*}
    where we have used the fact that $x^* = \bar x_n(t) \leq \alpha < 1$. Hence, we must have $x_\alpha(t) \leq \bar x_n(t)$ for all $t \in [0,1]$. Finally, sending $n \to \infty$ and using the continuity of $\gamma^{-1}$ yields for every $t \in [0,1]$,
	\begin{align*}
		x_\alpha(t)\ \leq\ \lim_{n \to \infty}\ \bar x_n(t)\ =\ \bar x(t) \coloneqq \gamma^{-1}\left( \int_0^t F^-(z) dz \right)\, \tag{$\spadesuit$}
	\end{align*}

    We are now ready to finish the proof of the lemma. In particular, note that for all $t \in [0,1]$, we can use the monotonicity of $\gamma$ in combination with $(\spadesuit)$ to get
	\begin{align*}
		\gamma(\max\{x_\alpha(t), 0\})\ &\leq \int_0^t F^-(z) dz\\ 
		&\leq \int_0^{F^-(t)} (1 - F(z))dz\\ 
		&= \int_0^{h(F^-(t))} (1 - F(z))dz + \int_{h(F^-(t))}^{F^-(t)} (1 - F(z))dz\\
		&= \int_0^{h(F^-(t))} (1 - F(z))dz + \int_0^{h(F^-(t))} (1 - F(z))\cdot z dz\\
		&= \int_0^{h(F^-(t))} (1 - F(z))(1 + z)dz\\
		&\leq \int_0^{h(F^-(t))} \frac{1 - F(z)}{1 - z}dz\\
		&= \gamma(h(F^-(t)))\,.
	\end{align*}
	As $h(x) < x$ for all $x \in (0,\bar v]$ and $\gamma$ is strictly increasing, we get $x_\alpha(t) \leq h(F^-(t)) < F^-(t)$ for all $t \in (0,1]$. Therefore, $g_\alpha(t, x_\alpha(t)) \geq 0$ for all $t \in [0,1]$, and consequently $x_\alpha(t) \geq 0$ for all $t \in [0,1]$. Hence, there exists a set $A\subset [0,1]$ with measure $\lambda(A) = 1$ such that, for all $t \in [0,1]$, we have
	\begin{align*}
		x_\alpha'(t)\ =\ \frac{F^-(t) - I(x_\alpha(t))}{1 - F(\min\{x_\alpha(t), \alpha\})} =\ \frac{F^-(t) - x_\alpha(t)}{1 - F(x_\alpha(t))}\,,
	\end{align*}
	where we have used the fact that $x_\alpha(t) \leq h(\bar v) \leq \alpha$ for all $t \in [0,1]$. In other words, $Q^* = x_\alpha$ satisfies part 1 of the lemma.
	
	For parts $2$ and $3$, note that $x_\alpha$ being a solution of IVP \eqref{eq:IVP-alpha} implies
	\begin{align*}
		x_\alpha(t)\ =\ \int_0^t g_\alpha(z, x_\alpha(z)) dz\ =\ \int_0^t \frac{F^-(z) - x_\alpha(z)}{1 - F(x_\alpha(z))} dz\,.
	\end{align*}
	As $F^-(z) > 0$ and $x_\alpha(z) < F^-(z)$ for all $z > 0$, we get that $x_\alpha(t)$ is strictly increasing as a function of $t$, and $x_\alpha(t) > 0$ for all $t >0$. Since we have already established $x_\alpha(t) < F^-(t)$ for all $t \in (0,1]$, we get that $Q^* = x_\alpha$ also satisfies parts 2 and 3 of the lemma.
\end{proof}

\begin{proof}[\textbf{Proof of \Cref{lemma:saddle-max-prob}}]

We would like to show that the function
\begin{equation*}
		y\ \mapsto\ \int_{Q^*(y)}^1 (1 - F(t)) dt\ -\ \int_y^1 (F^-(t) - Q^*(t)) dt,
\end{equation*}
is constant on $ [0,1]$. Part~1 of \Cref{lemma:ode-existence} established that this mapping has a vanishing derivative at any point where it is differentiable. To conclude that it is constant, we will show that it is absolutely continuous using the following lemma
\begin{lemma}\label{lemma:regret-abs-cont}
	If $Q:[0,1] \to [0,1]$ is absolutely continuous, then the mapping
	\begin{align*}
		y\ \mapsto\ \int_{Q(y)}^1 (1 - F(t)) dt\ -\ \int_y^1 (F^-(t) - Q(t)) dt
	\end{align*}
	 is also absolutely continuous on $[0,1]$.
\end{lemma}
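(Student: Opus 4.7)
The plan is to decompose the expression into two pieces and handle each separately using the fact that the sum (or difference) of two absolutely continuous functions is absolutely continuous. Write
\begin{equation*}
\Phi(x) \coloneqq \int_{x}^{1} (1 - F(t))\,dt, \qquad \Psi(y) \coloneqq \int_{y}^{1} (F^{-}(t) - Q(t))\,dt,
\end{equation*}
so that the function of interest is $y \mapsto \Phi(Q(y)) - \Psi(y)$. I would handle $\Psi$ first, since it is the easier of the two: the integrand $F^{-}(t) - Q(t)$ takes values in $[-1,1]$ and is measurable, so $\Psi$ is Lipschitz on $[0,1]$ with constant $1$, and in particular absolutely continuous.

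For the first term, the key observation is that $\Phi$ is Lipschitz with constant $1$, because $0 \leq 1 - F(t) \leq 1$ gives $|\Phi(x_1) - \Phi(x_2)| \leq |x_1 - x_2|$ for every $x_1, x_2 \in [0,1]$. The composition of a Lipschitz function with an absolutely continuous function is absolutely continuous: given any finite collection of pairwise disjoint intervals $\{(a_i, b_i)\}_i \subset [0,1]$, one has
\begin{equation*}
\sum_i \bigl| \Phi(Q(b_i)) - \Phi(Q(a_i)) \bigr| \ \leq\ \sum_i |Q(b_i) - Q(a_i)|,
\end{equation*}
and the absolute continuity of $Q$ ensures that, for every $\varepsilon > 0$, there is a $\delta > 0$ such that the right-hand side is below $\varepsilon$ whenever $\sum_i (b_i - a_i) < \delta$. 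Hence $y \mapsto \Phi(Q(y))$ is absolutely continuous on $[0,1]$.

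Combining both pieces, $y \mapsto \Phi(Q(y)) - \Psi(y)$ is the difference of two absolutely continuous functions and thus itself absolutely continuous on $[0,1]$, which proves the lemma. I do not anticipate any technical obstacle here; the only point worth being careful about is verifying the Lipschitz-composed-with-AC fact, which is standard but occasionally stated with extra hypotheses (it requires only Lipschitz continuity of the outer function, not AC). Once Lemma~\ref{lemma:regret-abs-cont} is in hand, the proof of Lemma~\ref{lemma:saddle-max-prob} concludes by noting that an absolutely continuous function whose derivative vanishes almost everywhere (as established by Part~1 of Lemma~\ref{lemma:ode-existence}) must be constant, and evaluating the constant at $y = 1$ yields $\int_0^1 Q^*(t)\,dt$.
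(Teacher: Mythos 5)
Your proof is correct and follows essentially the same route as the paper: split off the $\int_y^1 (F^-(t)-Q(t))\,dt$ term (absolutely continuous because the integrand is bounded), observe that $x \mapsto \int_x^1 (1-F(t))\,dt$ is $1$-Lipschitz, and use that a Lipschitz outer function composed with an absolutely continuous inner function is absolutely continuous. The only difference is presentational — you spell out the $\varepsilon$–$\delta$ verification of the composition fact, which the paper invokes as standard.
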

\begin{proof}
	First, note that the second term is absolutely continuous on $[0,1]$ because it is the integral of a bounded integrable function $t \mapsto F^-(t) - Q(t)$. Next, observe that the function $g(x) = \int_x^1 (1 - F(t))dt$ is Lipschitz continuous with Lipschitz constant 1 because for $x < z$
	\begin{align*}
		|f(x) - f(z)| \leq \int_x^z (1 - F(t)) dt \leq z - x\,. 
	\end{align*}
	Since the composition of an absolutely continuous with a globally Lipschitz-continuous function is absolutely continuous, we get that
	\begin{align*}
		y \mapsto\ g\left( Q(y) \right) = \int_{Q(y)}^1 (1 - F(t)) dt
	\end{align*}
	is absolutely continuous on $[0,1]$. Finally, the difference of two absolutely continuous functions is also absolutely continuous, thereby establishing the lemma.
\end{proof}
Lemma~\ref{lemma:regret-abs-cont} in combination with Lemma~\ref{lemma:ode-existence} allows us to show that all $y \in [0,1]$ are optimal for the inner-maximization problem in the saddle-point problem \eqref{eq:simpler-saddle}, i.e. that the regret incurred by the quantile-based bidding strategy $Q^*$ is constant on $[0,1]$.

    Define $g: [0, 1] \to \R_+$ as
	\begin{align*}
		g(y)\ \coloneqq\ \int_{Q^*(y)}^1 (1 - F(t)) dt\ -\ \int_y^1 (F^-(t) - Q^*(t)) dt\,.
	\end{align*}
	Then, since $Q^*$ is absolutely continuous, Lemma~\ref{lemma:regret-abs-cont} implies that $g$ is also absolutely continuous. Therefore, $g$ is differentiable almost surely and we can alternatively write $g$ as
	\begin{align*}
		g(y)\ =\ g(0) + \int_0^y g'(t)dt.
	\end{align*}
	In order to show that $g$ is constant, we show that $g'(y) = 0$ almost surely. Let $\mathcal{Y}$ be the set of $y \in [0,1]$ such that the derivate ${Q^*}'(y)$ exists, $F$ is continuous at $Q^*(y)$, and $F^-$ is continuous at $y$. Note that $Y$ has measure $\lambda(Y) = 1$ because $Q^*$ is differentiable almost surely, and both $F$ and $F^{-}$ have only countably many discontinuities (as they are non-decreasing). For every $y \in \mathcal{Y}$, we obtain by the Chain Rule that
	\begin{align*}
		g'(y)\ &= -(1 - F(Q^*(y))\cdot {Q^*}'(y)\ + (F^-(y) - Q^*(y))\\
		&= -(1 - F(Q^*(y))\cdot \frac{F^-(y) - Q^*(y)}{1 - F(Q^*(y))}\ + (F^-(y) - Q^*(y))\\
		&= 0\,,
	\end{align*}
	where we use part 1 of Lemma~\ref{lemma:ode-existence} in the second equality. As $\mathcal{Y}$ has measure $1$, we have shown that $g'(y) = 0$ almost surely. Therefore,
	\begin{align*}
		g(y)\ =\ g(0) + \int_0^y g'(t)dt\ =\ g(0)\ =\ \int_0^1 (1 - F(t)) dt\ -\ \int_0^1 (F^-(t) - Q^*(t)) dt = \int_0^1 Q^*(t) dt\,,
	\end{align*} 
	where the last equality follows from $\int_0^1 (1 - F(t))dt = \E_F[v] = \int_0^1 F^-(t) dt$.
\end{proof}

\begin{proof}[\textbf{Proof of \Cref{lemma:second-ode}}]
    We established in part 3 of \Cref{lemma:ode-existence}  that $Q^*(1) < F^-(1)$ which implies that $F(Q^*(1)) < 1$. Note that $G:[0,1] \to [0,1]$ is defined to be
	\begin{align*}
		G(y)\ \coloneqq\ H( Q^*(y))\ =\ \exp\left(- \int_{y}^1 \frac{1}{1 - F(Q^*(t))} dt \right)\,.
	\end{align*}
	First, note that:
	\begin{itemize}
		\item $G$ is the composition of the exponential function, which is globally Lipschitz on $[-(1 - F(Q^*(1))^{-1}, 0]$;
		\item the map $y \mapsto \int_{y}^1 \frac{1}{1 - F(Q^*(t))} dt$ which is also Lipschitz because
			\begin{align*}
				\left| \int_{y}^1 \frac{1}{1 - F(Q^*(t))} dt - \int_{x}^1 \frac{1}{1 - F(Q^*(t))} dt \right| \leq \frac{1}{1 - F(Q^*(1))} \cdot |x - y|\,.
			\end{align*}
	\end{itemize}
	 Therefore, $G$ is Lipschitz continuous on $[0,1]$. In particular, it is differentiable almost surely. 
	
	Let $\Gamma$ be the set of $y \in (0,1)$ for which $y \mapsto F(Q^*(y))$ is continuous and $Q^*$ is differentiable. $\Gamma$ has measure $\lambda(\Gamma) = 1$ because $F$ is discontinuous on at most countably many points, and $Q^*$ is strictly increasing and differentiable almost surely. For every $y \in \Gamma$, the Chain Rule of derivatives yields
	\begin{align*}
		G'(y)\ =\ G(y) \cdot \frac{1}{1 - F(Q^*(y))}\ =\ G(y) \cdot \frac{(Q^*)'(y)}{F^-(y) - Q^*(y)}\,,
	\end{align*}
	where the last equality follows from part~1 of \Cref{lemma:ode-existence}. Importantly, $F(y) - Q^*(y) > 0$ for all $y > 0$ by part~3 of \Cref{lemma:ode-existence}. 
\end{proof}

\begin{proof}[\textbf{Proof of \Cref{lemma:bid-strat-optimality}}]
	Fix a value $v \in [0,1]$ and let $Y(v) =\{z \in [0,1] \mid F^-(z) = v\}$. If $Y(v)$ is empty the result is straightforward. We next assume that $Y(v)$ is not empty.
	
    We first show that,
    \begin{equation}
    \label{eq:support_reduction}
        \max_{b \in [0,1]}\ u(b \mid v, H^*)\ =\ \max_{b \in [0,Q^*(1)]}\ u(b \mid v, H^*)\,.
    \end{equation}

    Let $b > Q^*(1)$. If $b \geq v$, we note that $u(b \mid v, H^*) \leq 0 = u(0 \mid v, H^*).$ If $b \in (Q^*(1),v)$, then
	\begin{align*}
		u(b \mid v, H)\ =\ (v - b) \cdot H^*(b) \leq v - b < v - Q^*(1) \stackrel{(a)}{=} (v - Q^*(1))\cdot H^*(Q^*(1))\ =\ u(Q^*(1)\mid v,H^*)\,,
	\end{align*}
    where $(a)$ holds because $H^*(Q^*(1)) = 1$ by definition of $H^*$. Therefore $\eqref{eq:support_reduction}$ holds.
    
	As $Q^*:[0,1] \to [0, Q^*(1)]$ is a strictly increasing invertible function, we further get
	\begin{align*}
		\max_{b \in [0,1]}\ u(b \mid v, H^*)\ =\ \max_{y \in [0,1]}\ u(Q^*(y) \mid v, H^*)\,.
	\end{align*}
	Hence, to prove the lemma, it suffices to show that
	\begin{align}\label{eq:lemma-bid-strat-inter}
		Y(v)\ \subseteq\ \argmax_{y \in [0,1]}\ u(Q^*(y) \mid v, H^*)\,.
	\end{align}
	As $v$ and $H^*$ have been fixed, we set $u(y) \coloneqq u(Q^*(y) \mid v, H^*)$ to simplify notation. Observe that
	\begin{align*}
		u(y)\ =\ (v - Q^*(y)) \cdot H^*(Q^*(y))\ =\ (v - Q^*(y)) \cdot G(y)\,.
	\end{align*}
	Using the fact that the product of two absolutely continuous functions (on a bounded interval) is also absolutely continuous, we get that $y \mapsto u(y \mid v, H^*)$ is absolutely continuous. Consider a point $y\in [0,1]$ such that both $Q^*$ and $G$ are differentiable at $y$. The set of such points has measure 1 because both $Q^*$ and $G$ are absolutely continuous and almost surely differentiable. Then, the Chain Rule of derivatives applies and we get that almost surely 
	\begin{align*}
		u'(y)\ &=\ - (Q^*)'(y) \cdot G(y)\ +\ (v - Q^*(y)) \cdot G'(y)\\
		&= (v - F^-(y)) \cdot G'(y) \ +\ (F^-(y) - Q^*(y)) \cdot G'(y)\ -\ (Q^*)'(y) \cdot G(y) \,.\\
        &= (v - F^-(y)) \cdot \frac{G(y)}{1 - F(Q^*(y))},
	\end{align*} 
    where the last equality holds almost surely by replacing $G'$ with the expression derived in \Cref{lemma:second-ode}.

    Furthermore, as $u$ is absolutely continuous on $[0,1]$, we can write
	\begin{align*}
		u(y)\ =\ u(0)\ +\ \int_0^y u'(t) \cdot dt\,.
	\end{align*}
	
    As $F^-$ is a non-decreasing function, $Y(v)$ is an interval included in $[0,1]$. Let $y_1 \coloneqq \inf\ Y(v)$ and $y_2 \coloneqq \sup Y(v)$ and fix $y^* \in Y(v)$. To complete the proof, we establish the sufficient condition in \eqref{eq:lemma-bid-strat-inter} by showing that $y^* \in \argmax_{y \in [0,1]}\ u(y)$. 
    
    Note that $F^-(y) < v$ for $y < y_1$, $F^-(y) = v$ for $y \in (y_1, y_2)$, and $F^-(y) > v$ for $y > y_2$ Therefore, $u'(t) > 0$ for $t < y_1$, $u'(t) = 0$ for $t \in (y_1, y_2)$, and $u'(t) < 0$ for $y > y_2$. As a consequence,
	\begin{align*}
		u(y^*)\ -\ u(y)\ =\ \int_{y}^{y^*} u'(t) dt\ \geq 0 &&\forall\ y \leq y^*\,,
	\end{align*}
	and
	\begin{align*}
		u(y)\ -\ u(y^*)\ =\ \int_{y^*}^{y} u'(t) dt\ \leq 0 &&\forall\ y \geq y^*\,.
	\end{align*}
	Thus, we have $y^* \in \argmax_{y \in [0,1]}\ u(y)$ as desired.
\end{proof}

\subsection{Proof of \Cref{thm:main-result}}

\begin{proof}[\textbf{Proof of \Cref{thm:main-result}}]
	\Cref{lemma:ode-existence} establishes the existence of a quantile-based bidding strategy $Q^* \in \mathcal{Q}$ with the required properties. Here, we to show that $(Q^*, H^*)$ is a saddle point of $\rr[F]{\cdot, \cdot }$.
	
	For the first part, consider $H^*$ as defined in \Cref{lemma:second-ode} and an arbitrary bidding strategy $s \in \Scal$. Then, \Cref{lemma:bid-strat-optimality} implies that, for every value $v \in [0,1]$, we have
	\begin{align*}
		Y(v)\ \coloneqq\ \{Q^*(y) \mid y \in[0,1],F^-(y) = v\}\ \subseteq\ \argmax_{b \in [0,1]}\ \E_{h \sim H^*}[(v - b) \cdot \mathbbm{1}(b \geq h)]\,.
	\end{align*}
	Therefore, for every value $v \in [0,1]$, we must have
	\begin{align}\label{eq:main-result-inter-1}
		\E_{h \sim H^*}[(v - Q(y)) \cdot \mathbbm{1}(Q(y) \geq h)]\ \geq\ \E_{b \sim s(v)}\left[ \E_{h \sim H^*}[(v - b) \cdot \mathbbm{1}(b \geq h)] \right] &&\forall\ y \in Y(v)\,.
	\end{align}
	Now, for every $v \in [0,1]$ such that $Y(v) \neq \emptyset$, the definition of $\s[Q^*](v)$ (\Cref{def:quantile-based-bidding-strategy}) implies that, when $b \sim \s[Q^*](v)$, there \emph{always} exists some $y \in Y(v)$ such that $b = Q(y)$. Moreover, as $F^-(t) \sim F$ when $t \sim \unif(0,1)$~\citep{embrechts2013note}, we get
	\begin{align*}
		\Prob_{v \sim F}(Y(v) = \emptyset)\ =\ \Prob_{t \sim \unif(0,1)}(Y(F^-(t)) = \emptyset)\ =\ 0\,.
	\end{align*}
	Hence, taking an expectation over $v\sim F$ in \eqref{eq:main-result-inter-1} yields
	\begin{align*}
		\U[F]{\s[Q^*], H^*} = \mathbb{E}_{(v, h) \sim F \times H^*}  \left[ \mathbb{E}_{b \sim \s[Q^*](v)} \left[ u(b,h; v) \right] \right]\ \geq\ \mathbb{E}_{(v, h) \sim F \times H^*}  \left[ \mathbb{E}_{b \sim s(v)} \left[ u(b,h; v) \right] \right] = \U[F]{s, H^*}\,.
	\end{align*}
	As a direct consequence, we get the first part of the saddle point result:
	\begin{align*}
		\rr[F]{\s[Q^*], H^*}\ =\ \mathcal{O}_{F}(H^*) - \U[F]{\s[Q^*], H^*} \leq \mathcal{O}_{F}(H^*) - \U[F]{\s, H^*} = \rr[F]{s,H^*} \quad \forall\ s\in \Scal\,.
	\end{align*}
	
	For the second part, consider the bidding strategy corresponding to $Q^*$ as described in \Cref{def:quantile-based-bidding-strategy}. The definition of $\rr[F]{s,H}$ implies that $\rr[F]{s,h} = \E_{h \sim H}[\rr[F]{s,h}]$. Since the maximum value of a random variable is larger than its expectation, it suffices to show that
	\begin{align*}
		\rr[F]{\s[Q^*], H^*} \geq \rr[F]{\s[Q^*],h} \quad \forall\ h \in [0,1]\,.
	\end{align*}
	Furthermore, as $\s[Q^*]$ never bids above $Q^*(1)$, we get for all $h > Q^*(1)$ that,
	\begin{align*}
		\rr[F]{\s[Q^*], h} &= \mathcal{O}_{F}(h) - \U[F]{\s[Q^*], h}\\
        &\stackrel{(a)}{=} \mathcal{O}_{F}(h)  \stackrel{(b)}{\leq} \mathcal{O}_{F}(Q^*(1)) \stackrel{(c)}{=} \mathcal{O}_{F}(Q^*(1)) - \U[F]{\s[Q^*], Q^*(1)} = \rr[F]{\s[Q^*], Q^*(1)}\,
	\end{align*}
    where $(a)$ and $(c)$ holds because for every $h' \geq Q^*(1)$, the characterization of the induced bid distribution of $\s[Q^*]$ derived in \Cref{lemma:quantile-based-bidding} implies that $\Prob_{v \sim F, b \sim \s[Q^*]}(b < Q^*(1)) = 1$ which implies that  $\U[F]{\s[Q^*], h'} = 0 $ and $(b)$ holds because $\mathcal{O}_{F}(\cdot)$ is non-increasing. 
    Hence, it suffices to show
	\begin{align*}
		\rr[F]{\s[Q^*], H^*} \geq \rr[F]{\s[Q^*],Q^*(y)} \quad \forall\ y \in [0,1]\,.
	\end{align*}
	
	Now, we obtain by Riemann-Stieltjes integration by part that,
	\begin{align*}
		\mathcal O_{F}(h)\ =\ \E_{v \sim F}[(v - h)\mathbbm{1}(v \geq h)] = \ \int_h^1 (t - h) \cdot dF(t) =\ (1 - h) -  \int_h^1 F(t) dt =\ \int_h^1 (1 - F(t)) dt\,.
	\end{align*}
	By combining this equality with \Cref{lemma:quantile-based-bidding}, we rewrite the regret for all $y \in [0,1]$ as,
	\begin{align*}
		 \rr[F]{\s[Q^*],Q^*(y)}\ =\ \mathcal{O}_{F}(Q^*(y))\ -\ \U[F]{\s[Q^*], Q^*(y)} =\  \int_{Q^*(y)}^1 (1 - F(t))dt -\ \int_y^1 (F^-(t) - Q(t))dt\,.
	\end{align*}
	Finally, \Cref{lemma:saddle-max-prob} implies that all $y \in [0,1]$ satisfy
	\begin{align*}
		\int_{Q^*(y)}^1 (1 - F(t)) dt\ -\ \int_y^1 (F^-(t) - Q^*(t)) dt\ =\ \int_0^1 Q^*(t) dt\,.
	\end{align*}
	As $H^*$ is supported on $[0, Q^*(1)] = \{Q(y) \mid y \in [0,1]\}$, we get
	\begin{align*}
		\rr[F]{\s[Q^*], H^*} = \int_0^1 Q^*(t) dt = \rr[F]{\s[Q^*],Q^*(y)} \quad \forall\ y \in [0,1]\,,
	\end{align*}
	thereby establishing the second part of the saddle-point result.
\end{proof}

\subsection{Proof of \Cref{cor:partial-info-saddle}}

\begin{proof}[\textbf{Proof of \Cref{cor:partial-info-saddle}}]
	Observe that $\reg_F(s, \delta_h) = \rr[F]{s,h}$ for all $h \in [0,1]$, i.e., the partial-information and full-information are identical when the highest competing bid $h$ is deterministic. Therefore, for all $s \in \Scal$, we have
	\begin{align*}
		\reg_F(s, \mathcal H^*) = \E_{h \sim H^*}[\reg_F(s, \delta_h)] = \E_{h \sim H^*}[\rr[F]{s, h}] = \rr[F]{s, H^*}\,.
	\end{align*}
	Hence, \Cref{thm:main-result} immediately implies
	\begin{align}\label{eq:cor-partial-info-inter-1}
		\reg_F(\s[Q^*], \mathcal H^*) = \rr[F]{\s[Q^*], H^*} \leq \inf_{s \in \Scal}\ \rr[F]{s, H^*} = \inf_{s \in \Scal} \reg_F(s, \mathcal H^*) \leq  \inf_{s \in \Scal} \sup_{H \in \Delta([0,1])} \reg_F(s, H)\,.
	\end{align}
	
	Next, note that the bid distribution $P_{\s[Q^*], F}$ induced by $\s[Q^*]$ is absolutely continuous. Indeed, \Cref{lemma:quantile-based-bidding} implies that $P_{\s[Q^*], F} = \{Q^*(t) \mid t \sim \unif(0,1)\}$, and the latter is absolutely continuous because $Q^*$ is strictly increasing and absolutely continuous. Therefore, \Cref{thm:full-info} applies, and we get
	\begin{align}\label{eq:cor-partial-info-inter-2}
		\inf_{s \in \Scal} \sup_{H \in \Delta([0,1])} \reg_F(s, H) \leq \sup_{H}\ \reg_F(\s[Q^*], H) = \sup_{h \in [0,1]} \rr[F]{\s[Q^*], h} \leq \rr[F]{\s[Q^*], H^*} = \reg_F(\s[Q^*], \mathcal H^*)\,.
	\end{align}
    Combining \eqref{eq:cor-partial-info-inter-1} and \eqref{eq:cor-partial-info-inter-2} immediately yields
    \begin{align*}
        \inf_{s \in \Scal} \sup_{H \in \Delta([0,1])} \reg_F(s, H)\ =\ \reg_F(\s[Q^*], \mathcal H^*) = \rr[F]{\s[Q^*], H^*} = \int_0^1 Q^*(t)dt\,. \qquad \qedhere
    \end{align*}
\end{proof}

\section{Proofs of Results in \Cref{sec:value-dist-impact}}

\begin{proof}[\textbf{Proof of \Cref{thm:worst-value-dist}}]
	Define $\mathcal Q_0 \coloneqq \{Q \in \mathcal {Q} \mid Q(0) = 0\}$. In \Cref{appendix:atom-at-0}, we show that for every value distribution $F$ such that $F(0)>0$, there exists another distribution $\tilde F$ with $\tilde F(0) = 0$ such that
    \begin{align*}
        \inf_{s \in \Scal}\ \sup_{H \in \Delta([0,1])}\ \reg_F(s,H) \leq \inf_{s \in \Scal}\ \sup_{H \in \Delta([0,1])}\ \reg_{\tilde F}(s,H)\,.
    \end{align*}
    Thus, without loss of generality, we will assume $F(0) = 0$ for all value distributions in this proof.

	Fix any value distribution $F \in \Delta([0,1])$ with $F(0)= 0$. Using \Cref{thm:full-info} and Corollary~\ref{cor:partial-info-saddle}, and the fact that $\s[Q^*] \in \mathcal{Q}_0$, we get
	\begin{align*}
		\inf_{s \in \Scal}\ \sup_{H \in \Delta([0,1])}\ \reg_F(s,H)\ =\ \inf_{Q \in \mathcal Q_0}\ \sup_{H \in \Delta([0,1])}\ \reg_F(\s[Q],H)\ = \inf_{Q\in \mathcal Q_0}\ \sup_{h \in [0,1]}\ \rr[F]{\s[Q], h}\,.
	\end{align*}
	
	Furthermore, for any $Q \in \mathcal{Q}$, as $\s[Q]$ never bids above $Q(1)$, we have for all $h > Q(1)$ that,
	\begin{align*}
		\rr[F]{\s[Q], h} = \mathcal{O}_{F}(h) - \U[F]{\s[Q], h} \leq \mathcal{O}_{F}(Q(1)) = \mathcal{O}_{F}(h) - \U[F]{\s[Q], h} = \rr[F]{\s[Q], Q(1)}\,.
	\end{align*}
    Here, we have used the fact that $\Prob_{v \sim F, b \sim \s[Q]}(b < Q(1)) = 1$, which follows from the characterization of the induced bid distribution of $\s[Q]$ in \Cref{lemma:quantile-based-bidding}. Therefore,
	\begin{align*}
		\inf_{s \in \Scal}\ \sup_{H \in \Delta([0,1])}\ \reg_F(s,H)\ =\ \inf_{Q\in \mathcal Q_0}\ \sup_{y \in [0,1]}\ \rr[F]{\s[Q], Q(y)}\,.
	\end{align*}
	
	Now, for $Q \in \mathcal{Q}_0$ and $y \in [0,1]$, we have
	\begin{align*}
		\rr[F]{\s[Q], Q(y)}\ &=\ \mathcal{O}_{F}(Q(y))\ -\ \U[F]{\s[Q], Q(y)}\\
		&=\ \E_{v \sim F}[(v - Q(y))\mathbbm{1}(v\geq Q(y))]\ -\ \int_y^1 (F^-(t) - Q(t)) dt \tag{\Cref{lemma:quantile-based-bidding}}\\
		&=\ \E_{t \sim \unif(0,1)}[(F^-(t) - Q(y))\mathbbm{1}(F^-(t)\geq Q(y))]\ -\ \int_y^1 (F^-(t) - Q(t)) dt\\
		&=\ \int_{F(Q(y))}^1 (F^-(t) - Q(y)) dt\ -\ \int_y^1 (F^-(t) - Q(t)) dt\\
		&=\ \int_{F(Q(y))}^y (F^-(t) - Q(y)) dt\ +\ \int_y^1 (Q(t) - Q(y)) dt\,.
	\end{align*}
	
	Therefore,
	\begin{align*}
		\inf_{s \in \Scal}\ \sup_{H \in \Delta([0,1])}\ \reg_F(s,H)\ =\ \inf_{Q\in \mathcal Q_0}\ \sup_{y \in [0,1]}\ \int_{F(Q(y))}^y (F^-(t) - Q(y)) dt\ +\ \int_y^1 (Q(t) - Q(y)) dt\,.
	\end{align*}
	
	As $\s[Q^*]$ is a minimax-optimal strategy and $Q^*(y) \leq F^-(y)$ for all $y \in [0,1]$, we get
	\begin{align*}
		&\inf_{Q\in \mathcal Q_0}\ \sup_{y \in [0,1]}\ \int_{F(Q(y))}^y (F^-(t) - Q(y)) dt\ +\ \int_y^1 (Q(t) - Q(y)) dt\\
		=\ &\inf_{Q\in \mathcal Q_0}\ \sup_{\substack{y \in [0,1]: \\ F(Q(y)) \leq y}}\ \int_{F(Q(y))}^y (F^-(t) - Q(y)) dt\ +\ \int_y^1 (Q(t) - Q(y)) dt\\
		=\ &\inf_{Q\in \mathcal Q_0}\ \sup_{\substack{y \in [0,1]: \\ F(Q(y)) \leq y}}\ \int_{F(Q(y))}^y (F^-(t) - Q(y))^+ dt\ +\ \int_y^1 (Q(t) - Q(y)) dt\\
		\leq\ &\inf_{Q\in \mathcal Q_0}\ \sup_{y \in [0,1]}\ \int_{F(Q(y))}^y (F^-(t) - Q(y))^+ dt\ +\ \int_y^1 (Q(t) - Q(y)) dt\\
	\end{align*}
	
	Hence, we have shown that the following statement holds for all $F$ such that $F(0) = 0$:
	\begin{align*}
		\inf_{s \in \Scal}\ \sup_{H \in \Delta([0,1])}\ \reg_F(s,H)\ \leq\ \inf_{Q\in \mathcal Q_0}\ \sup_{y \in [0,1]}\ \int_{F(Q(y))}^y (F^-(t) - Q(y))^+ dt\ +\ \int_y^1 (Q(t) - Q(y)) dt\,. \tag{\#}
	\end{align*}
	
	Next, fix a $\rho \in [1,\infty]$ and set $F_\rho \coloneqq \unif(1 - \tfrac{1}{\rho}, 1)$, where $F_\infty = \delta_1$. Taking an supremum over $F \in \mathcal F_\rho$ on both sides of $(\#)$ yields
	\begin{align*}
		&\sup_{F \in \F_\rho}\ \inf_{s \in \Scal}\ \sup_{H \in \Delta([0,1])}\ \reg_F(s,H)\\
		\leq\ &\sup_{F \in \F_\rho}\ \inf_{Q\in \mathcal Q_0}\ \sup_{y \in [0,1]}\ \int_{F(Q(y))}^y (F^-(t) - Q(y))^+ dt\ +\ \int_y^1 (Q(t) - Q(y)) dt\\
		\leq\ &\inf_{Q\in \mathcal Q_0}\ \sup_{F \in \F_\rho}\  \sup_{y \in [0,1]}\ \int_{F(Q(y))}^y (F^-(t) - Q(y))^+ dt\ +\ \int_y^1 (Q(t) - Q(y)) dt\\
		=\ &\inf_{Q\in \mathcal Q_0}\ \sup_{y \in [0,1]}\ \sup_{F \in \F_\rho}\   \int_{F(Q(y))}^y (F^-(t) - Q(y))^+ dt\ +\ \int_y^1 (Q(t) - Q(y)) dt
	\end{align*}
	
	If we can show that
	\begin{align*}
		F_\rho\ \in \argmax_{F \in \F_\rho} \int_{F(Q(y))}^y (F^-(t) - Q(y))^+ dt &&\forall\ Q\in \mathcal{Q}_0,\ y\in [0,1]\,, \tag{$\diamond$}
	\end{align*}
	then we would be done because we would have shown
	\begin{align*}
		\sup_{F \in \F_\rho}\ \inf_{s \in \Scal}\ \sup_{H \in \Delta([0,1])} \reg_F(s,H)\ &\leq\ \inf_{Q\in \mathcal Q_0}\sup_{y \in [0,1]}\   \int_{F(Q(y))}^y (F^-(t) - Q(y))^+ dt\ +\ \int_y^1 (Q(t) - Q(y)) dt\\
		&=\ \inf_{s \in \Scal}\ \sup_{H \in \Delta([0,1])} \reg_{F_{\rho}}(s,H)\,.
	\end{align*}
	
	To finish the proof, we now establish $(\diamond)$. We do so in two steps: we fix $F \in \mathcal{F}_{\rho}$ and (i) we show that $F_\rho(h) \leq F(h)$ for all $h \in [0,1]$; (ii)  we show that $F^-(t) \leq F_\rho^-(t)$ for all $t \in [0,1]$.
	\begin{itemize}
		\item[(i)] Fix any $h \in [0,1]$. Then, $$F(h) = 1 - F((h,1]) \geq 1 - \min\{1, \rho\cdot \lambda((h,1])\} = 1 - F_\rho((h, 1]) = F_\rho(h)\,.$$
		\item[(ii)] As $F^-(0) = F^-_\rho(0)$, we focus on $t \in (0,1]$. Then, $F_\rho^-(t) = 1 + (t-1)/\rho$ and 
		\begin{align*}
			F(F^-_\rho(t)) = F\left(1 + \frac{t-1}{\rho} \right) = 1 - F\left(\left(1 + \frac{t-1}{\rho}, 1\right] \right) \geq 1 - \rho \cdot \lambda\left(\left(1 + \frac{t-1}{\rho}, 1\right] \right) = 1- (t-1) = t\,.
		\end{align*}
		As a consequence, we get $F^-(t) \leq F_\rho^-(t)$.
	\end{itemize}
	
	Finally, (i) and (ii) together imply $(\diamond)$ because
	\begin{align*}
		\int_{F(Q(y))}^y (F^-(t) - Q(y))^+ dt \leq \int_{F_\rho(Q(y))}^y (F_\rho^-(t) - Q(y))^+ dt &&\forall\ Q\in \mathcal{Q}_0,\ y\in [0,1]\,.
	\end{align*}  
	As proving $(\diamond)$ was sufficient to complete the proof, we have established the theorem.
\end{proof}
\section{Alternative Tie-Breaking Rules}\label{appendix:tie-breaking}

In our model (\Cref{sec:model}), we assumed that the utility is given by
\begin{align*}
    u(b,h;v) = (v - b) \mathbbm{1}(b \geq h)\,.
\end{align*}

This assumed that ties are broken in favor of the buyer under consideration. Here, we show that our results continue to hold for all other tie-breaking rules. Consider an arbitrary tie-breaking rule which yields an expected utility of $\pi(v,h)$ whenever the buyer bids $h$ and ties with the highest competing bid. Here, $\pi(v,h) \leq (v-h)$ because $v-h$ is the utility obtained when the tie is broken completely in favor of the buyer. The utility under this tie-breaking rule is given by
\begin{align*}
    u^\pi(b,h;v) = (v - b) \mathbbm{1}(b > h) + \pi(v,b) \mathbbm{1}(b = h)\,,
\end{align*}
and it satisfies $u^\pi(b,h;v) \leq u(b,h;v)$ for all $b,h,v \in [0,1]$.

We start by showing that the tie-breaking rules does not affect the optimal utility which can be attained against any highest-competing-bid distribution.

\begin{lemma}\label{lemma:tie-breaking-bechmark}
    For every highest-competing-bid distribution $H \in \Delta([0,1])$, we have
    \begin{align*}
        \sup_{s' \in \Scal}\ \E_{(v,h) \sim F \times H}\left[ \E_{b \sim s'(v)}[u^\pi(b,h;v)] \right] = \sup_{s' \in \Scal}\ \E_{(v,h) \sim F \times H}\left[ \E_{b \sim s'(v)}[u(b,h;v)] \right] = \sup_{s' \in \Scal} \U[F]{s', H}
    \end{align*}
\end{lemma}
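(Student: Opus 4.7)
The proof splits naturally into two inequalities. The direction $\sup_{s'} \E[u^\pi] \leq \sup_{s'} \U[F]{s', H}$ is immediate from the pointwise bound $u^\pi(b,h;v) \leq u(b,h;v)$, which follows directly from the hypothesis $\pi(v,b) \leq v-b$ recorded just above the lemma.

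For the reverse direction, my plan is to show that any strategy $s' \in \Scal$ can be perturbed into some $s'_\epsilon \in \Scal$ that avoids ties with $h \sim H$ almost surely while losing at most $\epsilon$ in expected utility. I first make two simplifying reductions: (a) restrict attention to $s'$ with $s'(v)$ supported in $[0,v]$, since every bid $b > v$ yields non-positive utility under both $u$ and $u^\pi$ and is weakly dominated by bidding $v$; and (b) reduce to $H$ with no atom at $1$ by writing $H = (1-p)\tilde H + p\delta_1$ with $p = H(\{1\})$ and observing that $\sup_{s'} \E_{F \times \delta_1}[u] = \sup_{s'} \E_{F \times \delta_1}[u^\pi] = 0$, both attained in the limit by strategies bidding strictly below $1$, so both suprema against $H$ factor as $(1-p)$ times the corresponding supremum against $\tilde H$.

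Given these reductions, I would define $s'_\epsilon(v)$ to be the law of $b' := \min\{b + \epsilon W,\,1\}$ where $b \sim s'(v)$ and $W \sim \unif(0,1)$ is independent; measurability of the induced kernel follows by composing the kernel of $s'$ with an independent uniform draw and a measurable truncation. Two pointwise facts then drive the conclusion. First, for $h < 1$ the event $\{b' = h\}$ has probability zero under the $W$-randomness, because $P(b + \epsilon W = h \mid b) = 0$ by continuity of $W$ and the only atom of $b'$ (at $b' = 1$, created by truncation) lies strictly above $h$; hence $u^\pi(b',h;v) = u(b',h;v)$ almost surely under $F \times \tilde H$. Second, a short case analysis yields $u(b',h;v) \geq u(b,h;v) - \epsilon$ pointwise: if $b \geq h$, then $b' \geq b \geq h$ still wins and $v-b' \geq v - b - \epsilon$; if $b < h$, then $u(b,h;v) = 0$ while $u(b',h;v) \geq v - b' \geq -\epsilon$, using $b' \leq b + \epsilon \leq v + \epsilon$ from reduction (a).

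Combining these two facts and applying Fubini to move the $W$-expectation to the outside gives $\E_{F \times \tilde H}\!\left[\E_{b' \sim s'_\epsilon(v)}[u^\pi(b',h;v)]\right] \geq \U[F]{s', \tilde H} - \epsilon$; taking the supremum over $s'$ and sending $\epsilon \to 0$ yields the matching inequality on $\tilde H$, and multiplying by $(1-p)$ delivers the lemma on $H$. The main obstacle I anticipate is the clean handling of the boundary at $h = 1$: without reduction (b), the truncation of $b + \epsilon W$ at $1$ creates an atom of $b'$ that can tie with an atom of $H$ at $1$ and contribute an $O(1)$ gap between $u$ and $u^\pi$; once that boundary is peeled off via the decomposition of $H$, the remaining argument is just the pointwise case analysis above together with standard measure-theoretic verification that $s'_\epsilon$ defines a valid Markov kernel.
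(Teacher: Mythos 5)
Your proof is correct in outline but follows a genuinely different route from the paper's, and that route leaves one loose end that the paper's does not have to address.

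The paper argues by contradiction: it takes a near-optimal strategy $s$ for $u$, replaces every bid $b$ by the \emph{deterministic} shift $\min\{b+\epsilon,1\}$, and establishes the single pointwise inequality
$u^\pi(\min\{b+\epsilon,1\},h;v) \geq (v-b)\mathbbm{1}(b \geq h) - \epsilon$
for all $(b,h,v)$ with $b \leq v$. That one estimate handles every $h$, including $h=1$, simultaneously; there is no need to avoid ties, no randomization, and no decomposition of $H$. Your idea of adding an independent random shift $\epsilon W$ to make ties a null event is sound, but it forces you to peel off the atom at $h=1$ precisely because the truncation at $1$ reintroduces an atom of $b'$, and that is where the loose end sits. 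Reduction (a) gives $b \leq v$ for the original $s'$, but $s'_\epsilon$ no longer satisfies (a): for every $v > 1-\epsilon$ the perturbed bid equals $1$ with positive probability. Your claim that ``both suprema against $H$ factor as $(1-p)$ times the corresponding supremum against $\tilde H$'' is immediate for $u$ (under (a), $\E_{F\times\delta_1}[u]\equiv 0$ for every admissible $s'$), but not yet justified for $u^\pi$: to make that factoring usable you need a near-maximizer of $\E_{F\times\tilde H}[u^\pi]$ whose $\E_{F\times\delta_1}[u^\pi]$ is also near zero, and $s'_\epsilon$ is not obviously such a strategy, since the tie contribution $\E_v[\pi(v,1)\,\Pr(b'=1\mid v)]$ is never bounded in your argument. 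The gap can be patched — ties at $h=1$ only occur when $v > 1-\epsilon$, where $\pi(v,1) \geq v-1 > -\epsilon$ under the natural reading that $\pi(v,h)$ lies between $0$ and $v-h$ — but once you supply that estimate you are back to a pointwise bound of exactly the kind the paper uses, and the decomposition of $H$ no longer buys anything. The deterministic shift is the cleaner path: it never needed to avoid ties in the first place, it only needed to win them up to an $\epsilon$ loss.
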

\begin{proof}
    Since $u^\pi(b,h;v) \leq u(b,h;v)$, we have
    \begin{align*}
        \sup_{s' \in \Scal}\ \E_{(v,h) \sim F \times H}\left[ \E_{b \sim s'(v)}[u^\pi(b,h;v)] \right] \leq \sup_{s' \in \Scal}\ \E_{(v,h) \sim F \times H}\left[ \E_{b \sim s'(v)}[u(b,h;v)] \right]\,.
    \end{align*}
    For contradiction, suppose the inequality is strict. Then, there exists $\epsilon > 0$ and a strategy $s \in \Scal$ such that
    \begin{align*}
        \E_{(v,h) \sim F \times H}\left[ \E_{b \sim s'(v)}[u^\pi(b,h;v)] \right] < \E_{(v,h) \sim F \times H}\left[ \E_{b \sim s(v)}[u(b,h;v)] \right]\ -\ \epsilon \qquad \forall\ s' \in \Scal\, \tag{\#}
    \end{align*}
    As bidding higher that the value is never beneficial, we can assume without loss of generality that we always have $b \leq v$ when $b \sim s(v)$. 
    
     Now, define the strategy $s'$ to be the one which always bids $\epsilon$ more than the bidding strategy $s$ whenever possible, i.e., $s'(v)$ is the distribution of $\min\{b+\epsilon, 1\}$ when $b \sim s(v)$.  Observe that, for all $b,h,v \in [0,1]$ such that $b \leq v$, we have 
    \begin{align*}
        u^\pi(b + \epsilon,h;v) \geq (v - b - \epsilon) \mathbbm{1}(\min\{b + \epsilon, 1\} > h) \geq (v - b) \mathbbm{1}(b \geq h)\ -\ \epsilon\,.
    \end{align*}
    Therefore, we get
    \begin{align*}
        \E_{(v,h) \sim F \times H}\left[ \E_{b \sim s'(v)}[u^\pi(b,h;v)] \right] &= \E_{(v,h) \sim F \times H}\left[ \E_{b \sim s(v)}[u^\pi(b+\epsilon,h;v)] \right]\\
        &\geq \E_{(v,h) \sim F \times H}\left[ \E_{b \sim s(v)}[u(b,h;v)] \right]\ -\ \epsilon\,.
    \end{align*}
    This yields the desired contradiction to $(\#)$.
\end{proof}

Next, we show that the tie-breaking rule does not impact utility under any bidding strategy which induces a continuous bid distribution.

\begin{lemma}\label{lemma:tie-breaking-cont-strat}
    Every bidding strategy $s \in \Scal$ which induces an absolutely continuous bid distribution $P_{s,F}$ satisfies
    \begin{align*}
        \E_{(v,h) \sim F \times H}\left[ \E_{b \sim s(v)}[u^\pi(b,h;v)] \right] = \E_{(v,h) \sim F \times H}\left[ \E_{b \sim s(v)}[u(b,h;v)] \right] = \U[F]{s,F} \qquad \forall\ H \in \Delta([0,1])\,.
    \end{align*}
\end{lemma}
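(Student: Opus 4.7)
The plan is to show that the two utility functions $u$ and $u^\pi$ differ only on the measure-zero event $\{b = h\}$, and then exploit the absolute continuity of $P_{s,F}$ together with the independence of $H$ and $J_{s,F}$ to conclude that this event has probability zero.

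First, I would write out the pointwise difference. For any $b, h, v \in [0,1]$, we have
\begin{equation*}
    u(b,h;v) - u^\pi(b,h;v) = \bigl[(v-b) - \pi(v,b)\bigr]\cdot \mathbbm{1}(b = h).
\end{equation*}
This difference is bounded in absolute value by $2$, since both $|v-b|\le 1$ and $|\pi(v,b)|\le 1$ (the latter because $\pi(v,b)\le v-b \le 1$ and $\pi(v,b)\ge 0$ by the standard convention that the losing buyer receives $0$; if negative values of $\pi$ are permitted in principle, we still bound $|\pi(v,b)|\le 1$ by noting that the tie-breaking expected utility is a convex combination of $0$ and $v-b$).

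Next, I would apply Fubini's theorem using the assumed independence of $H$ and the value-bid joint distribution $J_{s,F}$ (stated in \Cref{sec:model}). For the integrability check, the bounded integrand makes this immediate. Then
\begin{align*}
    \E_{(v,h) \sim F \times H}\bigl[ \E_{b \sim s(v)}[\mathbbm{1}(b=h)] \bigr]
    &= \E_{(v,b) \sim J_{s,F}}\bigl[ \E_{h \sim H}[\mathbbm{1}(b=h)] \bigr] \\
    &= \E_{(v,b) \sim J_{s,F}}[H(\{b\})].
\end{align*}
Symmetrically, swapping the order of integration yields $\E_{h \sim H}[P_{s,F}(\{h\})]$. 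The latter form is the more useful one: since $P_{s,F}$ is absolutely continuous with respect to Lebesgue measure by hypothesis, $P_{s,F}(\{h\}) = 0$ for every $h \in [0,1]$, and so the expectation is zero.

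Combining these two observations gives
\begin{equation*}
    \E_{(v,h) \sim F \times H}\bigl[ \E_{b \sim s(v)}[u(b,h;v) - u^\pi(b,h;v)] \bigr] = 0,
\end{equation*}
which is exactly the desired equality between the two expected utilities. The identification with $\U[F]{s,H}$ is simply the definition given in \Cref{sec:model}. There is no real obstacle here: the whole argument rests on the single observation that absolute continuity of $P_{s,F}$ rules out atoms, so the event $\{b=h\}$ — on which the two tie-breaking utilities can differ — is null under the product measure $J_{s,F}\times H$. The only subtlety worth flagging is ensuring measurability of the maps $h \mapsto P_{s,F}(\{h\})$ and $b \mapsto H(\{b\})$, which follows from the Markov-kernel structure underlying $s$ and routine monotone-class arguments, so that Fubini applies without issue.
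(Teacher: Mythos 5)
Your proof is correct and takes essentially the same approach as the paper: both arguments hinge on the single observation that absolute continuity of $P_{s,F}$ makes $\{b=h\}$ a null event under the product measure, so $u$ and $u^\pi$ agree almost surely. The paper states this more tersely (fixing $h$ and noting $\Prob_{b \sim P_{s,F}}(b=h)=0$ directly, with Fubini left implicit), while you spell out the Fubini step and the boundedness check; both are fine, and your reading of the paper's evident typo $\U[F]{s,F}$ as $\U[F]{s,H}$ is the right one.
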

\begin{proof}
    We show that ties are a zero probability event in these conditions. Note that, for any highest competing bid $h$, the absolute continuity of $P_{s,F}$ implies
    \begin{align*}
        \Prob_{v \sim F, b \sim s(v)}(b = h) = \Prob_{b \sim P_{s,F}}(b = h) = 0\,.
    \end{align*}
    Therefore, for any fixed $h$, we have $u^\pi(b,h;v) = u(b,h;v)$ almost surely for $v \sim F, b \sim s(v)$. The lemma follows as a consequence.
\end{proof}

Combining the two lemmas allows us to establish the minimax optimality of our strategy $\s[Q^*]$ for all tie-breaking rules. Before stating the result, we introduce one new piece of notation: define the worst-case regret under the alternative tie-breaking rule $\pi$ as follows
\begin{align*}
    \mathrm{WReg}^\pi_F(s) \coloneqq \sup_{H \in \Delta([0,1]}\ \sup_{s' \in \Scal}\ \E_{(v,h) \sim F \times H}\left[ \E_{b \sim s'(v)}[u^\pi(b,h;v)] \right] - \E_{(v,h) \sim F \times H}\left[ \E_{b \sim s(v)}[u^\pi(b,h;v)] \right]\,.
\end{align*}
Now, \Cref{lemma:tie-breaking-bechmark} and $u^\pi(b,h;v) \leq u(b,h;v)$ imply
\begin{align}\label{eq:tie-breaking-wreg-ineq}
    \mathrm{WReg}^\pi_F(s) \geq \mathrm{WReg}_F(s) \qquad \forall\ s \in \Scal\,.
\end{align}
The following proposition shows that they attain the same optimal worst-case regret, and do so with our minimax-optimal strategy.

\begin{proposition}\label{prop:tie-breaking-wreg}
    For every value distribution $F$ and tie-breaking rule $\pi$, we have
    \begin{align*}
        \mathrm{WReg}^\pi_F(\s[Q^*])\ =\ \inf_{s \in \Scal}\  \mathrm{WReg}^\pi_F(s)\ =\  \inf_{s \in \Scal}\  \mathrm{WReg}_F(s)\ =\ \mathrm{WReg}_F(\s[Q^*])\,,
    \end{align*}
    where $\s[Q^*]$ is the minimax-optimal bidding strategy described in \Cref{thm:main-result}.
\end{proposition}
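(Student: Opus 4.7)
The plan is to simply chain together the two preceding lemmas with our minimax-optimal strategy $\s[Q^*]$. The key observation is that $\s[Q^*]$ induces an absolutely continuous bid distribution: by \Cref{lemma:quantile-based-bidding}, $P_{\s[Q^*],F} = \{Q^*(t) \mid t \sim \unif(0,1)\}$, and this is absolutely continuous because $Q^*$ is strictly increasing and absolutely continuous (per \Cref{lemma:ode-existence}). Consequently, $\s[Q^*]$ satisfies the hypothesis of \Cref{lemma:tie-breaking-cont-strat}, which implies
\begin{equation*}
    \E_{(v,h) \sim F \times H}\!\left[\E_{b \sim \s[Q^*](v)}[u^\pi(b,h;v)]\right]\ =\ \U[F]{\s[Q^*], H}\ \qquad \forall\ H \in \Delta([0,1]).
\end{equation*}

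Next, I would combine this with \Cref{lemma:tie-breaking-bechmark}, which states that the benchmark under tie-breaking rule $\pi$ coincides with the standard benchmark $\sup_{s' \in \Scal}\U[F]{s',H}$ for every $H \in \Delta([0,1])$. Taking the supremum over $H$ of the difference between the benchmark and the utility of $\s[Q^*]$ under each tie-breaking rule then yields
\begin{equation*}
    \mathrm{WReg}^\pi_F(\s[Q^*])\ =\ \mathrm{WReg}_F(\s[Q^*]).
\end{equation*}

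To conclude, I would invoke \Cref{cor:partial-info-saddle}, which implies that $\mathrm{WReg}_F(\s[Q^*]) = \inf_{s \in \Scal}\mathrm{WReg}_F(s)$. Combined with inequality \eqref{eq:tie-breaking-wreg-ineq}, $\mathrm{WReg}^\pi_F(s) \geq \mathrm{WReg}_F(s)$ for every $s \in \Scal$, this gives the chain
\begin{equation*}
    \inf_{s \in \Scal} \mathrm{WReg}^\pi_F(s)\ \leq\ \mathrm{WReg}^\pi_F(\s[Q^*])\ =\ \mathrm{WReg}_F(\s[Q^*])\ =\ \inf_{s \in \Scal} \mathrm{WReg}_F(s)\ \leq\ \inf_{s \in \Scal} \mathrm{WReg}^\pi_F(s),
\end{equation*}
forcing every inequality to be an equality. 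All four quantities in the proposition then collapse to the same value.

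There is no real obstacle here — all the substantive work has already been done in \Cref{lemma:tie-breaking-bechmark} and \Cref{lemma:tie-breaking-cont-strat}, together with the absolute continuity of the bid distribution induced by $\s[Q^*]$ established in \Cref{lemma:quantile-based-bidding}. The only thing to be careful about is verifying that $\s[Q^*]$ is in fact a valid bidding strategy ($\s[Q^*] \in \Scal$), which follows from \Cref{remark:kernel}, and that the absolute continuity of $P_{\s[Q^*],F}$ indeed follows from the strictly increasing absolutely continuous nature of $Q^*$ guaranteed by \Cref{lemma:ode-existence}.
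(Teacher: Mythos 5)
Your proof is correct and follows essentially the same route as the paper: establish $\mathrm{WReg}^\pi_F(\s[Q^*]) = \mathrm{WReg}_F(\s[Q^*])$ via the absolute continuity of $P_{\s[Q^*],F}$ together with \Cref{lemma:tie-breaking-bechmark} and \Cref{lemma:tie-breaking-cont-strat}, then close the chain with \Cref{cor:partial-info-saddle} and the inequality \eqref{eq:tie-breaking-wreg-ineq}. You are slightly more explicit than the paper in writing out the sandwich of inequalities and in citing \Cref{lemma:tie-breaking-bechmark} by name (the paper invokes it only implicitly), but there is no substantive difference in the argument.
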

\begin{proof}
    In light of \eqref{eq:tie-breaking-wreg-ineq} and \Cref{cor:partial-info-saddle}, it suffices to show that
    \begin{align*}
        \mathrm{WReg}^\pi_F(\s[Q^*])\ =\ \mathrm{WReg}_F(\s[Q^*])\,.
    \end{align*}
    Recall that \Cref{lemma:quantile-based-bidding} implies the absolute continuity of bidding strategies based on quantile-based bidding strategies. $\s[Q^*]$ is one such strategy, and therefore it induces an absolutely continuous bid distributions. Therefore, \Cref{lemma:tie-breaking-cont-strat} applies and we get
    \begin{align*}
        \mathrm{WReg}^\pi_F(\s[Q^*])\ &=\ \sup_{H \in \Delta([0,1]}\ \sup_{s' \in \Scal}\ \E_{(v,h) \sim F \times H}\left[ \E_{b \sim s'(v)}[u^\pi(b,h;v)] \right] - \E_{(v,h) \sim F \times H}\left[ \E_{b \sim \s[Q^*](v)}[u^\pi(b,h;v)] \right]\\
        &=\ \sup_{H \in \Delta([0,1]}\  \sup_{s' \in \Scal} \U[F]{s', H} - \U[F]{\s[Q^*], H}\\
        &=\ \mathrm{WReg}_F(\s[Q^*]) \qedhere
    \end{align*}
\end{proof}

Thus, we have shown that $\s[Q^*]$ is a minimax-optimal bidding strategy for all tie-breaking rules $\pi$, as desired.

\section{$\mathbf{F(0) = 0}$ is Without Loss of Generality}\label{appendix:atom-at-0}

We  assumed that the value distribution $F$ satisfied $F(0) = 0$ in \Cref{sec:minimax}. Here, we establish our claim that this assumption is without loss of generality. In particular, we show that \Cref{thm:main-result} can be also used to characterize the minimax-optimal strategy, and its regret, even for value distributions which do not satisfy $F(0) = 0$. Note that, in the scenario where $v = 0$ with probability 1, always bidding zero is minimax optimal and the minimax regret is zero. Therefore, we only consider value distributions which produce a positive value with positive probability.

\begin{proposition}\label{prop:atom-at-0}
	Consider an arbitrary value distribution $\tilde F \in \Delta([0,1])$ with $\tilde F(0) = a < 1$, and let $F$ be the distribution of the value $v \sim F$ conditioned on it being non-zero, i.e., $F(t) = (\tilde F(t) - a)/(1-a)$ for all $t \in [0,1]$. Let $\s[Q^*]$ be the minimax strategy corresponding to the value distribution $F$, as given in Theorem~\ref{thm:main-result}. Then, the bidding strategy $\tilde \s_{Q^*}$ defined as
	\begin{align*}
		\tilde s_{Q^*}(v) = \begin{cases}
			\delta_0 &\text{if } v = 0\\
			\s[Q^*](v) &\text{if } v \in (0,1]\\
		\end{cases}
	\end{align*}
	is a minimax-optimal bidding strategy and satisfies
	\begin{align*}
		\sup_{H \in \Delta([0,1])} \reg_{\tilde F}(\tilde s_{Q^*}, H) = \inf_{s \in \Scal} \sup_{H \in \Delta([0,1])} \reg_{\tilde F}(s, H) = (1 - a) \cdot \sup_{H \in \Delta([0,1])} \reg_F(\s[Q^*], H)\,.
	\end{align*}
\end{proposition}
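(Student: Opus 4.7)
The plan is to show that the worst-case regret under $\tilde F$ decomposes into a $(1-a)$-scaled copy of the worst-case regret under $F$ plus a non-negative penalty that vanishes precisely when the buyer bids~$0$ at value~$0$. This decomposition will immediately yield both the minimax optimality of $\tilde s_{Q^*}$ and the claimed identity for the minimax regret.

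The key step is to establish that, for every $s \in \Scal$ and every $H \in \Delta([0,1])$,
\[
\reg_{\tilde F}(s, H) \;=\; (1-a)\, \reg_F(s, H) \;+\; a \cdot \E_{b \sim s(0),\, h \sim H}\!\left[b \cdot \mathbbm{1}(b \geq h)\right].
\]
This identity rests on two observations. First, splitting the outer expectation $\mathbb{E}_{v \sim \tilde F}$ according to whether $v = 0$ gives $\U[\tilde F]{s, H} = -a \cdot \E_{b \sim s(0), h \sim H}[b \mathbbm{1}(b \geq h)] + (1-a) \cdot \U[F]{s, H}$, since the conditional law of $v$ given $v > 0$ is $F$. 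Second, the benchmark decouples across values: $\sup_{s' \in \Scal} \U[\tilde F]{s', H} = (1-a)\, \sup_{s'' \in \Scal} \U[F]{s'', H}$, because the atom contribution $-a \cdot \E_{b \sim s'(0), h \sim H}[b \mathbbm{1}(b \geq h)]$ is non-positive and can be made to equal $0$ by picking $s'(0) = \delta_0$, independently of the choice of $s'$ on $(0,1]$ (any such pair defines a valid Markov kernel). Subtracting yields the decomposition.

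With the identity in hand, the rest is bookkeeping. The penalty term is non-negative and vanishes for $s = \tilde s_{Q^*}$ since $\tilde s_{Q^*}(0) = \delta_0$; hence $\reg_{\tilde F}(\tilde s_{Q^*}, H) = (1-a)\, \reg_F(\s[Q^*], H)$ for every $H$, and therefore $\sup_H \reg_{\tilde F}(\tilde s_{Q^*}, H) = (1-a)\, \sup_H \reg_F(\s[Q^*], H)$. For any $s \in \Scal$, non-negativity of the penalty combined with Corollary~\ref{cor:partial-info-saddle} gives
\[
\sup_H \reg_{\tilde F}(s, H) \;\geq\; (1-a)\, \sup_H \reg_F(s, H) \;\geq\; (1-a)\, \sup_H \reg_F(\s[Q^*], H)\,.
\]
Taking the infimum over $s$ on the left yields $\inf_{s \in \Scal} \sup_H \reg_{\tilde F}(s, H) \geq \sup_H \reg_{\tilde F}(\tilde s_{Q^*}, H)$; the reverse inequality is trivial, so equality holds throughout, establishing both minimax-optimality of $\tilde s_{Q^*}$ and the scaling identity.

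The main obstacle is the benchmark decoupling, since it requires verifying that the supremum over $\Scal$ genuinely separates across the atom at~$0$ and the remaining values; this is a routine Markov-kernel stitching argument analogous to \Cref{remark:kernel}, and the rest of the proof consists of elementary algebraic manipulation of expectations.
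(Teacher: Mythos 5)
Your proof is correct and follows essentially the same strategy as the paper's: split the expectation over $\tilde F$ into the atom at $0$ and the conditional law $F$, observe that the benchmark decouples because bidding $0$ at value $0$ is optimal and does not constrain bids at positive values, and then invoke \Cref{cor:partial-info-saddle} for $F$. The paper packages this by restricting to the class $\Scal_0$ of strategies bidding $0$ at value $0$ and showing $\reg_{\tilde F}(s,H) = (1-a)\reg_F(s,H)$ on that class, whereas you state the equivalent but slightly more explicit identity $\reg_{\tilde F}(s,H) = (1-a)\reg_F(s,H) + a\,\E_{b\sim s(0),\,h\sim H}[b\,\mathbbm{1}(b\geq h)]$ valid for all $s\in\Scal$ and use non-negativity of the penalty; this is a cosmetic repackaging, not a different argument.
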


\begin{proof}
	
	Let $\Scal_0 \subset \Scal$ be the set of strategies which always bid $0$ when the value is $0$.
	
	Observe that $u(0,h;0) = 0$ for all $h \in [0,1]$. Therefore, for any strategy $s \in \Scal_0$ and any highest-competing-bid distribution $H \in \Delta([0,1])$, we have
	\begin{align*}
		\U[\tilde F]{s, H} &= \mathbb{E}_{(v, h) \sim \tilde F \times H}  \left[ \mathbb{E}_{b \sim s(v)} \left[ u(b,h;v) \right] \right]\\
		&= \mathbb{E}_{(v, h) \sim \tilde F \times H}  \left[ \mathbb{E}_{b \sim s(v)} \left[ u(b,h;v) \right] \cdot \mathbbm{1}(v > 0) \right]\\
		&= \mathbb{E}_{h \sim H}\left[ \E_{v \sim \tilde F}  [ \mathbb{E}_{b \sim s(v)} \left[ u(b,h;v) \right] \mid v > 0] \cdot \Prob(v > 0) \right]\\
		&= (1 - a) \cdot \mathbb{E}_{h \sim H}\left[ \E_{v \sim \tilde F}  [ \mathbb{E}_{b \sim s(v)} \left[ u(b,h;v) \right] \mid v > 0]] \right]\\
		&= (1 - a) \cdot \mathbb{E}_{h \sim H}\left[ \E_{v \sim F} [ \mathbb{E}_{b \sim s(v)} \left[ u(b,h;v) \right] \right]\\
		&= (1 - a) \cdot \U[F]{s, H}\,.
	\end{align*}
	
	Since 0 is the optimal bid for value 0 regardless of the highest-competing-bid distribution, we must have
	\begin{align*}
		\sup_{s' \in \Scal} \U[\tilde F]{s',H} = \sup_{s' \in \Scal_0} \U[\tilde F]{s', H} = (1 - a) \cdot \sup_{s' \in \Scal_0} \U[F]{s', H} = (1 - a) \cdot \sup_{s' \in \Scal} \U[F]{s',H}\,.
	\end{align*}
	
	Therefore, for every $H$ and $s \in \Scal_0$, we have
	\begin{align*}
		\reg_{\tilde F}(s, H) = \sup_{s' \in \Scal} \U[\tilde F]{s',H} - \U[\tilde F]{s,H} = (1 -a) \cdot \left\{ \sup_{s' \in \Scal} \U[F]{s',H} - \U[F]{s,H} \right\} = (1 -a)\cdot \reg_F(s, H).
	\end{align*}
	Taking a supremum over $H$ yields
	\begin{align*}
		\sup_{H \in \Delta([0,1])} \reg_{\tilde F}(s, H) = (1 -a)\cdot \sup_{H \in \Delta([0,1])} \reg_F(s, H) \qquad \forall s \in \Scal_0\,.
	\end{align*}
	
	Now, for any strategy $s \in \Scal$, if we define $\tilde s \in \Scal_0$ as $\tilde s(0) = 0$ with probability 1 and $\tilde s(v) = s(v)$ for all $v > 0$, then $\U[\tilde F]{s, H} \leq \U[\tilde F]{\tilde s, H}$ because zero is the optimal bid for the value $v = 0$ against all $h \sim H$, and any other bid can only do worse. Therefore, we have
	\begin{align*}
		\inf_{s \in \Scal} \sup_{H \in \Delta([0,1])} \reg_{\tilde F}(s, H) = \inf_{s \in \Scal_0} \sup_{H \in \Delta([0,1])} \reg_{\tilde F}(s, H) = (1 - a) \cdot \inf_{s \in \Scal_0} \sup_{H \in \Delta([0,1])} \reg_{F}(s, H)\,.
	\end{align*}
	\Cref{cor:partial-info-saddle} implies that $\s[Q^*]$ is an optimal strategy for the minimax problem given in the last term. As a consequence, we have
	\begin{align*}
		\inf_{s \in \Scal} \sup_{H \in \Delta([0,1])} \reg_{\tilde F}(s, H) = (1 - a) \cdot \sup_{H \in \Delta([0,1])} \reg_{F}(\s[Q^*], H)\,.
	\end{align*}
	Finally, the definitions of $\s[Q^*]$ and $\tilde s_{Q^*}$ imply $\U[F]{\tilde s_{Q^*}, H} = \U[F]{s_{Q^*}, H}$ for all $H \in \Delta([0,1])$. Hence, we get
	\begin{align*}
		\sup_{H \in \Delta([0,1])} \reg_{\tilde F}(\tilde s_{Q^*}, H) = (1 -a) \cdot \sup_{H \in \Delta([0,1])} \reg_{F}(\tilde s_{Q^*}, H)
		= (1 -a) \cdot \sup_{H \in \Delta([0,1])} \reg_{F}(s_{Q^*}, H)\,,
	\end{align*}
	which completes the proof.
\end{proof}

\section{Uniform-bid-shading}\label{sec:uniform-bid-shading}

In this section, we provide a stronger characterization of the worst-case regret for uniform-bid-shading strategies under additional assumptions on the value distribution. Namely, we prove the following result.
\begin{proposition}\label{prop:uniform-bid-shading}
	Consider a value distribution $F$ with a density $f:[0,1] \to [0,1]$ such that the map $t \mapsto t\cdot f(t)$ is non-decreasing. Then, for any shading factor $\alpha \in [0,1]$,
	\begin{align*}
		\sup_{H \in \Delta([0,1])}\ \reg_F(s_\alpha, H)\ =\ \max\left\{\alpha \cdot \E_{v \sim F}[v],\ \E_{v\sim F}\left[(v - \alpha \cdot F^-(1))^+ \right]\right\}\,.
	\end{align*}
\end{proposition}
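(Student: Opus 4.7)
The plan is to use \Cref{thm:evaluation} to collapse the worst-case regret to a one-dimensional supremum over deterministic highest competing bids, and then exploit the hypothesis that $t \mapsto t f(t)$ is non-decreasing to argue that the resulting objective is convex on the relevant sub-interval, so that its supremum is attained at an endpoint.

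First, I would verify that $s_\alpha$ induces an absolutely continuous bid distribution whenever $\alpha > 0$: the bid distribution is the pushforward of $F$ under the linear map $v \mapsto \alpha v$, and $F$ has a density by hypothesis. The boundary case $\alpha = 0$ can be handled directly and yields worst-case regret $\E_{v \sim F}[v]$, matching the stated formula. Applying \Cref{thm:evaluation}, the worst-case regret equals $\sup_{h \in [0,1]} g(h)$, where
\begin{equation*}
    g(h) \ \coloneqq\ \E_{v \sim F}[(v-h)\,\mathbbm{1}(v \geq h)] \ -\ (1-\alpha)\,\E_{v \sim F}[v\,\mathbbm{1}(v \geq h/\alpha)].
\end{equation*}
Writing $M \coloneqq F^-(1)$, for $h \in [\alpha M, 1]$ the utility term vanishes (because $\alpha v \leq \alpha M$ almost surely and $F$ has no atoms), so on that range $g(h) = \E_{v\sim F}[(v-h)^+]$ is non-increasing in $h$ and attains its maximum at $h = \alpha M$ with value $\E_{v\sim F}[(v - \alpha M)^+]$. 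The remaining---and main---task is to maximize $g$ over $[0, \alpha M]$.

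The key step is establishing convexity of $g$ on $[0, \alpha M]$. Differentiating under the integral yields
\begin{equation*}
    g'(h) \ =\ -(1 - F(h)) \ +\ \frac{1-\alpha}{\alpha^2}\,h\,f(h/\alpha).
\end{equation*}
Substituting $u = h/\alpha$, the first summand becomes $-(1-F(\alpha u))$, which is non-decreasing in $u$ since $F$ is non-decreasing; the second becomes $\tfrac{1-\alpha}{\alpha}\,u f(u)$, which is non-decreasing in $u$ because $1-\alpha \geq 0$ and $u \mapsto u f(u)$ is non-decreasing by hypothesis. Hence $g'$ is non-decreasing, so $g$ is convex on $[0, \alpha M]$ and attains its supremum at an endpoint: either $g(0) = \alpha\,\E_{v\sim F}[v]$ or $g(\alpha M) = \E_{v\sim F}[(v - \alpha M)^+]$. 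Combining with the analysis on $[\alpha M, 1]$ (whose maximum is also $\E_{v\sim F}[(v - \alpha M)^+]$) yields the claimed formula. The main obstacle is precisely the convexity argument---this is where the monotonicity of $t f(t)$ enters, and without it $g$ need not be convex and interior maxima could in principle dominate the endpoint values.
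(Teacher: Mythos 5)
Your proof is correct and follows essentially the same route as the paper's: reduce to a one-dimensional supremum via \Cref{thm:evaluation}, establish convexity of the regret function on $[0, \alpha F^-(1)]$ from the monotonicity of $t \mapsto t f(t)$, and read off the endpoint values. Two small remarks. First, your computation of the derivative includes the correct chain-rule factor $1/\alpha$, giving $g'(h) = -(1-F(h)) + \tfrac{1-\alpha}{\alpha^2}\,h\,f(h/\alpha)$; the paper writes the second term as $(1-\alpha)\tfrac{h}{\alpha}f(h/\alpha)$, dropping a factor of $1/\alpha$, which is a typo that happily does not affect monotonicity of $g'$. Second, you explicitly flag and handle the boundary case $\alpha = 0$, where the induced bid distribution is $\delta_0$ and hence not absolutely continuous, so \Cref{thm:evaluation} does not directly apply; the paper omits this case, and your treatment is a worthwhile clarification.
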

We note that the assumption on the value distribution holds for any distribution $\unif(1 - \tfrac{1}{\rho}, 1)$, with $\rho >0$. This family of value distributions is particularly significant, as it arises as the worst-case scenario when evaluating the impact of informational asymmetries in bidding environments (see \Cref{sec:value-dist-impact}).

\begin{proof}[\textbf{Proof of \Cref{prop:uniform-bid-shading}}]
	As $F$ has a density, it is absolutely continuous, and as a consequence, so is the distribution of bids $\alpha \cdot v$ under $s_\alpha$. Therefore, \Cref{thm:evaluation} applies, and we get
	\begin{align*}
		\sup_{H \in \Delta([0,1])}\ \reg_F(s_\alpha, H)\ =\ \E_{v \sim F}[(v - h) \cdot \mathbbm{1}(v \geq h)]\ - \U[F]{s_\alpha, \delta_h} \,.
	\end{align*}
	
	Consider the mapping $r : h 
     \mapsto \E_{v \sim F}[(v - h) \cdot \mathbbm{1}(v \geq h)]\ - \U[F]{s_\alpha, \delta_h}$. For every $h \in [0, \alpha \cdot F^{-}(1)]$, we have
	\begin{align*}
	 r(h) \ &=\ \E_{v \sim F}[(v - h) \mathbbm{1}(v \geq h)]\ -  \E_{v \sim F}[(v - \alpha \cdot v) \mathbbm{1}(\alpha \cdot v \geq h)]\\
		&=\ \int_h^1 (v - h) \cdot f(v) dv\ -\ \int_{h/\alpha}^1 (1 - \alpha) \cdot v \cdot f(v)  dv\\
		&=\ (1 - h)F(1) - (h-h)F(h)\ - \int_h^1 F(v)dv\ -\ \int_{h/\alpha}^1 (1 - \alpha) \cdot v \cdot f(v)  dv\\
		&=\ \int_h^1(1 - F(v))dv\ - \int_{h/\alpha}^1 (1 - \alpha) \cdot v \cdot f(v)  dv\,.
	\end{align*}
	Therefore, we get
	\begin{align*}
		r'(h)\ =\ (1 - \alpha) \cdot \frac{h}{\alpha} \cdot f\left( \frac{h}{\alpha}\right)\ -\ (1 - F(h))\,.
	\end{align*}
	As $t \mapsto t \cdot f(t)$ is assumed to be non-decreasing, and $t \mapsto 1 - F(t)$ is non-increasing for all distributions, we get that $r'(\cdot)$ is non-decreasing for $h \in [0, \alpha \cdot F^-(1)]$. Or equivalently, $r(\cdot)$ is a convex function for $h \in [0, \alpha \cdot F^-(1)]$. Moreover, $r(\cdot)$ is also continuous. Therefore, Bauer Maximum Principle (see 7.69 of \citealt{aliprantis2006infinite}) applies and we get
	\begin{align*}
		\sup_{h \in [0, \alpha \cdot F^-(1)]}\ r(h) \ =\ \max\{r(0), r(\alpha \cdot F^-(1))\}\,.
	\end{align*}
	Since $s_\alpha(v) \in [0,\alpha \cdot F^-(1)]$ for all values $v \in [0,1]$, we have for all $h > \alpha \cdot F^-(1)$ that,
	\begin{align*}
		r(h) \ &= \ \E_{v \sim F}[(v - h) \cdot \mathbbm{1}(v \geq h)]\ - \U[F]{s_\alpha, \delta_h}\\
        &=\ \E_{v \sim F}[(v - h) \cdot \mathbbm{1}(v \geq h)]\\
        &\geq\ \E_{v \sim F}[(v - \alpha \cdot F^{-}(1)) \cdot \mathbbm{1}(v \geq \alpha \cdot F^{-}(1))]\\
        &= \ \E_{v \sim F}[(v - \alpha \cdot F^{-}(1)) \cdot \mathbbm{1}(v \geq \alpha \cdot F^{-}(1))] \ - \ \U[F]{s_\alpha, \alpha \cdot F^-(1)} \\ 
        &=\ r \left( \alpha \cdot F^{-}(1) \right) \,.
	\end{align*}
	Therefore, $r(h) \leq r(\alpha \cdot F^-(1))$ for all $h > \alpha \cdot F^-(1)$. Altogether, we get
	\begin{align*}
		\sup_{H \in \Delta([0,1])}\ \reg_F(s_\alpha, H)\ =\ \sup_{h \in [0,\alpha \cdot F^-(1)]}\ r(h) \ =\ \max\{  r(0), r(\alpha \cdot F^-(1))\}\,.
	\end{align*}
	
	Finally, observe that
	\begin{align*}
		r(0)\ =\ \E_{v \sim F}[v] - \E_{v \sim F}[v - \alpha \cdot v]\ =\ \alpha \cdot \E_{v \sim F}[v]\,,
	\end{align*}
	and
	\begin{align*}
		r( \alpha \cdot F^-(1))\ =\ \E_{v \sim F}[(v - \alpha \cdot F^-(1)) \mathbbm{1}(v \geq \alpha\cdot F^-(1))] - 0\ =\ \E_{v\sim F}\left[(v - \alpha \cdot F^-(1))^+ \right]\,.\ &\qedhere
	\end{align*}
\end{proof}

\end{document}